\newtheorem{theorem}{Theorem}
\newtheorem{corollary}[theorem]{Corollary}
\newtheorem{lemma}[theorem]{Lemma}
\newtheorem{proposition}[theorem]{Proposition}
\newtheorem{assumption}[theorem]{Assumption}
\numberwithin{equation}{section}
\numberwithin{theorem}{section}
\theoremstyle{remark}
\newtheorem{remark}[theorem]{Remark}
\newtheorem{remarks}[theorem]{Remarks}
\newcommand{\bxi}{\boldsymbol{\xi} }
\newcommand{\supp}{{\operatorname{supp}\,}}
\newcommand{\tr}{{\operatorname{Tr}\,}}
\newcommand{\indfct}{\operatorname{1}}
\newcommand{\beq}[1]{\begin{equation} \label{#1}}
	\newcommand{\eeq}{\end{equation}}
\renewcommand{\epsilon}{\varepsilon}
\newcommand{\spa}{\operatorname{span}}
\newcommand{\essup}{\operatorname{esssup}}
\DeclareMathOperator{\dom}{dom}
\def\be{\begin{equation}}
	\def\ee{\end{equation}}
\begin{document}
	\addtokomafont{author}{\raggedright}
	
\title{ \raggedright  A Parisi Formula for Quantum Spin Glasses}

	\author{\hspace{-.075in} Chokri Manai and Simone Warzel}
	\date{\vspace{-.3in}}
	
	\maketitle

	\minisec{Abstract}
	We establish three equivalent versions of a Parisi formula for the free energy of mean-field spin glasses in a transversal magnetic field. 
	These results are derived from available results for classical vector spin glasses by an approximation method using the functional integral representation of the partition function.  In this approach, the order parameter is a non-decreasing function with values in the non-negative, real hermitian Hilbert-Schmidt operators. 
	For the quantum Sherrington-Kirkpatrick model, we also show that under the assumption of self-averaging of the self-overlap, the optimising  Parisi order parameter is found within a two-dimensional subspace spanned by the self-overlap and the fully stationary overlap. 

\tableofcontents

\section{Setting the stage}
Transcending their initial roots in physics~\cite{Mezard:1986aa}, classical mean-field spin glasses, as the Sherrington Kirkpatrick model~\cite{SK75},  have found interest in mathematics  and computer science~\cite{Bov06,MM09,Tal11a,Tal11b,Cont12}.  Their widespread appeal is rooted in the universal applicability of the tools and concepts associated with replica symmetry breaking.    At their core, these models exhibit a phase transition into a glass phase, which is marked by the breaking of replica symmetry,  and described through Parisi's functional order parameter~\cite{Parisi:1979aa,Parisi:1980aa}. 
Part of the charm of these seemingly simple mean-field models is that  Parisi's  replica approach enables the derivation of a variational formula for the free energy~\cite{Mezard:1986aa,Tal06,Tal11a,Tal11b,Cont12,Pan13}.

Soon after Parisi's discovery~\cite{Parisi:1979aa,Parisi:1980aa}, physicists have started searching for an extension of Parisi's replica approach to quantum models~\cite{Sommers:1981aa,FS86}. More recently, the interest in quantum glasses have been renewed due to questions related to quantum information~\cite{TTC17,CLPTS23}, but also since disordered mean-field models still serve as paradigms in quantum matter or, as the SYK model, even in high-energy physics \cite{Georges:2000aa,Young:2017aa,Baldwin:2020aa}.

The quantum nature of the spin degrees of freedom may enter such models in various degrees of difficulty. In case the non-commutativity is present in the interaction terms, such as in SYK or a Heisenberg spin glass, an extension of Parisi's formalism is wide open, and very little is known rigorously (see e.g.~\cite{Morita:2005ab,Cont12,ES14}). In case the non-commutativity only enters in the form of a transversal field, the resulting model may be analysed using a functional integral. Based on this, replica symmetry breaking approximations to the free energy have been found early on~\cite{FS86,Yamamoto:1987aa,Usadel:1987aa} (see also~\cite{BSS96,SIC13} and references therein). However, aside from the case of hierarchical transversal models of generalised random energy type \cite{Goldschmidt:1990aa,MW20,MW21,MW22,Manai:2023ys}, a closed-form variational formula for the free energy has been absent. The purpose of this paper is to present a first version of such a variational formula of Parisi type for a large class of transversal field models, which includes the quantum Sherrington-Kirkpatrick (QSK) model. Using the functional integral representation of the partition function, we derive our three equivalent versions of such a formula from available formulae for vector-spin glasses \cite{Pan18,Chen23,Chen23b}. For the QSK, we thus complete an earlier approach~\cite{AdBr20}, in which a related variational expression has been obtained. The latter, however, suffers from the crucial drawback of an additional limit in its formulation. Our variational principle extends 
the variational principle for the free energy of the QSK in the high-temperature phase from~\cite{Leschke:2021aa} to the full parameter range. 

The variational principle will identify the quantum analogue of Parisi's functional order parameter with monotone functions taking values in a subset of the non-negative Hilbert-Schmidt operators. That subset matches  the support of  the functional integral. 
For the QSK, we also show how, under a widely believed assumption of the self-averaging of the self-overlap, our variational formula simplifies in the sense that the optimising Hilbert-Schmidt operators are from a two-parameter subspace spanned by the self-overlap and the stationary kernel. 
This simplification of the order parameter  is akin to similar results for other glasses with a symmetry in the underlying a-priori measure such as in case of balanced Potts glasses~\cite{BaSo23,Elderfield:1983}.\\

To present our main results, which are collected and discussed in Section~\ref{sec:main}, we first need to set the stage by introducing the models, the functional integral and the quantum Parisi functional, which involves several correlation functionals. This is the purpose of the remainder of this section.

\subsection{Models and assumptions} 
We consider systems of spin-$\frac{1}{2} $ quantum degrees of freedom. For each spin, the observable algebra on $ \mathbb{C}^2 $ is generated by the identity~$ \mathbbm{1} $ and the Pauli matrices:
$$
S^x = \left(\begin{matrix} 0 & 1 \\ 1 & 0 \end{matrix}\right), \quad S^y =\left( \begin{matrix} 0 & i \\ -i & 0 \end{matrix}\right), \quad S^z =\left( \begin{matrix}  1 & 0  \\ 0 & -1 \end{matrix}\right) .
$$
The Hilbert space of $ N $ spin-$\frac{1}{2} $ is the $ N $-fold tensor product $ \bigotimes_{j=1}^N \mathbb{C}^2 $. The Pauli matrices are lifted to the tensor product in the canonical way, i.e. for the $ j$th spin the non-trivial factor is inserted in the $ j$th tensor component, $ S_j^{x} \coloneqq \mathbbm{1} \otimes \dots \otimes S^{x} \otimes \dots \mathbbm{1} $ and similarly for $ y,z $. 

The quantum spin glasses studied in this paper are transversal field models, whose Hamiltonians on $ \bigotimes_{j=1}^N \mathbb{C}^2 $ are of the form
\begin{equation}\label{def:HN}
H_N \coloneqq U\left(S_1^z, \dots S_N^z\right) +  \beta h \sum_{j=1}^N S_j^z - \beta b \sum_{j=1}^N  S_j^x . 
\end{equation}
The last term represents a constant transversal magnetic field in the (negative) $ x $-direction. The second term is a constant  longitudinal field in the $ z $-direction, which agrees with the axis along which the interaction among the spins is mediated. The operator $U\left(S_1^z, \dots, S_N^z\right)$ is defined via the functional calculus of the jointly diagonalisable $z$-components of the spins in terms of a random function $U: \{-1,1\}^{N} \to \mathbb{R}$, stemming from a classical, mean-field Ising spin glass like the Sherrington-Kirkpatrick model. 
More generally, we assume $  U $ to be a realisation of a Gaussian random process with  mean zero, i.e.\ $ \mathbb{E}\left[U( \boldsymbol{\sigma} ) \right] = 0 $ for all $  \boldsymbol{\sigma} \in  \{-1,1\}^{N} $, and covariance given in terms of the classical overlap of two Ising configurations $  \boldsymbol{\sigma}^{(1)} , \boldsymbol{\sigma}^{(2)} \in \{-1,1\}^N $: 
\begin{equation}\label{eq:classcorr}
\mathbb{E}\left[U\big( \boldsymbol{\sigma}^{(1)} \big) \ U\big(\boldsymbol{\sigma}^{(1)} \big)  \right]  = N \  \widehat\zeta\left(  N^{-1} \boldsymbol{\sigma}^{(1)}  \cdot \boldsymbol{\sigma}^{(2)}  \right) , \qquad  \boldsymbol{\sigma}^{(1)}  \cdot \boldsymbol{\sigma}^{(2)}  \coloneqq \sum_{j=1}^N \sigma_j^{(1)} \sigma_j^{(2)} . 
\end{equation}
Throughout this paper, the function $ \widehat \zeta $, which determines the covariance, will be tacitly assumed to satisfy:
\begin{assumption}\label{ass}
The function $  \widehat \zeta: [-1,1] \to \mathbb{R} $ admits an absolutely convergent series representation of the form
\begin{equation}\label{def:zetahat}
\widehat\zeta(x) \coloneqq  \sum_{p=1}^\infty \beta_{2p}^2 x^{2p} , \qquad \mbox{with}\quad\sum_{p=1}^\infty (2p)^2\beta_{2p}^2 < \infty . 
\end{equation}
\end{assumption} 
This covers all classical, mixed, even $ p $-spin interactions and, in particular, the Sherrington-Kirkpatrick model, for which $ \beta_{2p} =  0 $  unless $ p = 1 $. More precisely, each of the terms in~\eqref{def:zetahat} represents via~\eqref{eq:classcorr}
the covariance of a $ p $-spin interaction on $\boldsymbol{\sigma} \coloneqq (\sigma_1, \dots , \sigma_N) $  of the form 
\begin{equation}\label{eq:pspinU}
U(\boldsymbol{\sigma}) = \frac{1}{N^{(p-1)/2}} \sum_{j_1, \dots , j_p = 1}^N g_{j_1,\dots , j_p} \sigma_{j_1} \cdots \sigma_{j_p}
\end{equation}
with $\left(g_{j_1,\dots , j_p} \right) $ independent and identically distributed, standard Gaussian random variables. Note that such potentials do include self-interactions, which however do not contribute to the free energy in the thermodynamic limit. We restrict to even $ p $-spin interactions due to issues with positivity and convexity.

The central quantity of interest is the quantum partition function $ \tr e^{-H_N } $ corresponding to~\eqref{def:HN}, and its free energy.  We choose to include 
the inverse temperature $ \beta \in [0,\infty) $ in the Hamiltonian. The interaction term incorporates the temperature through the factors  $ \beta_{2p} $ in~\eqref{def:zetahat}.

\subsection{Functional integral}
As in other works on transversal field models, our analysis of the quantum partition function is based on a functional integral representation~\cite{Chayes:2008aa,AdBr20,Leschke:2021aa}.  
For its formulation, 
we consider a Poisson process $ \omega_1 $ on $ [0,1) $ with intensity $ \beta b \geq 0 $ and conditioned to have an even number of points in $ [0,1) $. Setting $ \eta_1(t,\omega_1) := \#\{ \mbox{Poisson points of $ \omega_1 $ in $ [0,t) $} \} $ and picking initial conditions $ \sigma_1 \in \{-1,1\} $ independently with probability $ 1/2 $, we 
let $ \mu_1 $ stand for the  induced probability measure on the path 
$$ \xi_1 : [0,1) \to \{-1,1\} , \quad
\xi_1(t) := \sigma_1  \ (-1)^{\eta_1(t)} .
$$
Any such path is continuous from the right with a limit from the left ('c\`adl\`ag'). Moreover, since the number of jumps is even, by identifying $ \xi_1(1) := \xi_1(0) $, it can be uniquely extended to a c\`adl\`ag path on the unit torus. The measure $ \mu_1 $ is hence a well-defined probability measure on paths $ \xi_1 $ on the unit torus endowed with the  Borel sigma-algebra $ \mathcal{B} $  corresponding to the Skohorod topology. 

Since our Hamiltonian $ H_N $ also incorporates a constant longitudinal magnetic field, $ \beta h \in \mathbb{R}  $, we modify the apriori probability measure $ \mu_1 $ by setting for any $ A \in  \mathcal{B}  $:
\begin{equation}
	\nu_1(A )  := \frac{\cosh( \beta b) }{ \cosh( \beta  \sqrt{b^2 + h^2} ) } \int_A \exp\left( -\beta h \int_0^1 \xi_1(s) ds \right) \mu_1(d\xi_1) .
\end{equation}
The prefactor ensures that $ \nu_1 $ is again a probability measure. We now set $ \nu_N \coloneqq  \nu_1^{\otimes N} $ the product measure of $ N $ identical copies of $ \nu_1 $, which govern the collection of paths $ \bxi = (\xi_1, \dots , \xi_N ) $.  The partition function of $ H_N $,
$$
 \tr e^{- H_N}  = \left( 2 \cosh( \beta  \sqrt{b^2 + h^2} ) \right)^N Z_N ,
$$
 is known \cite{Leschke:2021aa} to be given by the functional integral
$$
Z_N \coloneqq \int \exp\left( -\int_0^1 U\left(\bxi(s)\right) ds \right) \nu_N(d\bxi) . 
$$
The main purpose of this paper is to derive a quantum Parisi formula for  $ Z_N $. Loosely speaking, this will be done by identifying $ Z_N $ as the partition function of a mixed $p $-spin model with continuous 'vector' spins $ \bxi = (\xi_1, \dots , \xi_N) $. The action  $ \int_0^1 U\left(\bxi(s)\right) ds  $ in the functional integral serves as its random energy. By~\eqref{eq:classcorr} the latter is a Gaussian process with mean zero and covariance
\begin{equation}\label{def:covariance}
\mathbb{E}\left[\int_0^1 U\left(\bxi^{(1)}(s)\right) ds \ \int_0^1 U\left(\bxi^{(2)}(t)\right) dt \right] = N  \int_0^1 \int_0^1 \widehat\zeta\left( N^{-1} \bxi^{(1)}(s) \cdot \bxi^{(2)}(t) \right) ds dt . 
\end{equation}
for two replicas $ \bxi^{(1)},  \bxi^{(2)} $ of the paths.

\subsection{Space of asymptotic overlap operators}
Given two replicas of paths $ \bxi^{(1)},  \bxi^{(2)} $, the covariance~\eqref{def:covariance} only depends on the 
path-overlap $$ \widehat r_N(\bxi^{(1)},  \bxi^{(2)} )(s,t) \coloneqq N^{-1} \bxi^{(1)}(s) \cdot \bxi^{(2)}(t) .  $$ 
Through its kernel the latter defines 
a rank-$N $ overlap operator $ r_N(\bxi^{(1)},  \bxi^{(2)} ) $ on the real Hilbert space 
$$ 
L^2(0,1) := \left\{ f: (0,1) \to \mathbb{R} \ \big| \ \int_0^1 f(t)^2 dt < \infty \right\} . 
$$
The real scalar product on this Hilbert space is denoted by $ \langle f, g \rangle := \int_0^1 f(s) g(s) ds $. 
Recall that an integral operator 
$ \varrho$ on $ L^2(0,1) $ with kernel $ \widehat \varrho: [0,1]^2 \to \mathbb{R} $ acts as
\begin{equation}\label{def:kern}
(\varrho f)(s) := \int_0^1 \widehat\varrho(s,t) f(t) dt , \quad f \in L^2(0,1) .
\end{equation} 
We will mostly be concerned with hermitian integral operators, whose kernels are real and symmetric by definition, i.e. $ \widehat\varrho(s,t)  = \widehat\varrho(t,s) $, and amongst those with Hilbert-Schmidt operators. They form the real Hilbert space  $ \mathcal{S}_2 $ with the scalar product $$ \langle \varrho , \varrho' \rangle := \tr  \varrho  \varrho' = \int_0^1 \int_0^1 \widehat\varrho(t,s) \widehat\varrho'(s,t)  \ ds dt . $$ 
Any real symmetric function $   \widehat\varrho \in L^2 (0,1)^2 $ uniquely defines though~\eqref{def:kern} 
an operator $\varrho \in \mathcal{S}_2 $ on $ L^2(0,1) $ and vice versa. Any such $  \varrho $  then also belongs to the real Banach space of hermitian Schatten-$p $ operators $ \mathcal{S}_p $ with $ p \geq 2 $. 
Generally,  $\varrho \in \mathcal{S}_2 $ is not necessarily in the trace class $ \mathcal{S}_1 $, and it is generally rather difficult to decide whether $\varrho \in \mathcal{S}_1 $ or not, solely based on the integral kernel $ \widehat \varrho $.
However, if $ \varrho $ belongs to the convex cone  $  \mathcal{S}_2^+ \coloneqq \{  \varrho \in  \mathcal{S}_2 \ | \   \varrho  \geq 0 \} $ of non-negative, hermitian Hilbert-Schmidt operators, 
a result of Weidmann~\cite{Weid66} ensures that if the kernel is essentially bounded, $ \| \widehat \varrho \|_\infty \coloneqq \essup_{s,t\in (0,1)^2}  |\widehat \varrho(s,t) | < \infty $, then $  \varrho  \in \mathcal{S}_1 $. 
More precisely, in this situation the function defined for almost every (a.e.)\ $ \tau \in (0,1) $ by
\begin{equation}\label{eq:regkern}
\widehat\varrho(\tau,\tau) \coloneqq \lim_{r\downarrow 0} \frac{1}{(2r)^2} \int_{-r}^r  \int_{-r}^r   \widehat\varrho(s+\tau, t + \tau) ds dt 
\end{equation}
is non-negative and bounded with $ \tr\varrho = \int_0^1  \widehat \varrho(\tau,\tau)  \ d \tau $ and $  |  \widehat\varrho(s,t)  | \leq \sqrt{ \widehat \varrho(s,s) \ \widehat\varrho(t,t) } $ 
for a.e.\ $ (s,t) \in (0,1)^2 $, cf.\ \cite{Weid66,Bris88} and Proposition~\ref{lem:propzetak2} in Appendix~\ref{app:HS}. 
Subsequently, we will tacitly assume that the diagonal of the bounded integral kernel of $ \varrho \in \mathcal{S}_2^+ $ is given by~\eqref{eq:regkern}.

In our spin glass model the self-overlap  $ r_N(\bxi,\bxi)$ is an example of a non-negative integral operator with a bounded kernel. Its kernel agrees with $ N^{-1} \bxi(s) \cdot \bxi(t)  $, and the diagonal is obtained by setting $ s= t $ by the Lebesgue differentiation theorem.  
The self-overlap hence belongs to the set
$$
\mathcal{D} \coloneqq \left\{ \varrho \geq 0 \ \mbox{is an integral operator on $ L^2(0,1) $ with some kernel $  \widehat \varrho: (0,1)^2 \to [-1,1] $} \right\} .
$$
This subset of $ \mathcal{S}_2^+$ contains the support,  $ \supp \nu_1$, of the functional integral. It is convex and closed, i.e.\ $ \mathcal{D} = \overline{\mathcal{D}} $, where here and in the following, we equip $ \mathcal{D} $ with the Hilbert-Schmidt norm. 
This set will serve as the domain of the quantum Parisi functional, which we construct in the next two subsections. 
\subsection{Correlation functional} 
As is apparent from~\eqref{def:covariance} the correlation functional $ \zeta: \mathcal{D} \to [0,\infty)$,
\begin{equation}\label{def:zeta} 
\zeta(\varrho) \coloneqq  \int_0^1 \int_0^1 \widehat\zeta\left(\widehat\varrho(s,t) \right) ds dt
\end{equation}
plays an important role. 
Since $ \widehat \zeta \geq 0 $, the integral in~\eqref{def:zeta}  is well defined, potentially infinite, for any measurable kernel $ \widehat \varrho $. It thus extends to the Hilbert space $\mathcal{S}_2 $. 
By the representation~\eqref{def:zetahat} of $ \widehat \zeta $, we may represent 
the correlation functional as a series involving the even Hadamard products of $ \varrho $:
\begin{equation}\label{eq:Hadamard}
 \zeta(\varrho) \coloneqq \sum_{p= 1}^\infty \beta_{2p}^2 \ \langle 1 , \varrho^{\odot 2p} 1 \rangle , \quad \mbox{with} \quad\langle 1 , \varrho^{\odot 2p} 1 \rangle \coloneqq  \int_0^1 \int_0^1 \widehat\varrho(s,t)^{2p} ds dt . 
 \end{equation}
The abbreviation $  \langle 1 , \varrho^{\odot 2p} 1 \rangle $ alludes to the fact that in case $ \widehat \varrho $ is bounded, this term agrees with the matrix element involving the constant function one,
$ 1 \in L^2(0,1) $,  of the $2p $-fold Hadamard product   $  \varrho^{\odot 2p}  $  of $ \varrho $ (see Appendix~\ref{app:HS}, which compiles basic properties of this product).   %
We summarize some properties of $ \zeta $, which will be crucial for the definition of the Parisi functional, in the following lemma. 
\begin{lemma}\label{lem:zetaprop} 
\begin{enumerate}
\item  $  \zeta: \mathcal{S}_2^+ \to [0,\infty]$ is convex, weakly lower semicontinuous and proper, i.e.\ with non-empty domain $ \dom \zeta \coloneqq \left\{ \varrho \in  \mathcal{S}_2^+ \ | \  \zeta(\varrho) < \infty \right\} \supset \mathcal{D} $. For any $ \varrho \in \mathcal{D} $, we have $ \zeta(\varrho)  \leq \widehat\zeta(1) $. 
\item  The functional   $ \zeta $  is monotone, i.e.\ if $0 \leq \pi \leq \varrho  \in \mathcal{D} $, then $  0 \leq \zeta(\pi)  \leq \zeta(\varrho) $.  
\item At every $ \varrho \in \mathcal{D} $ the functional  
 $ \zeta $ is Gateaux differentiable with derivative represented by the non-negative operator
\begin{equation}\label{eq:covariance form}
d\zeta(\varrho) \coloneqq \sum_{p\geq 1}  2p \beta_{2p}^2 \  \varrho^{\odot 2p -1} 
\end{equation}
on $ L^2(0,1) $. The latter is uniformly bounded, $ \sup_{\varrho \in \mathcal{D}} \| d\zeta(\varrho) \| \leq  N_\zeta := \sum_{p\geq 1} 2p  \beta_{2p}^2 < \infty  $. 
As a function of $ \varrho $, this operator is monotone increasing, i.e. if $0\leq  \pi \leq \varrho \in \mathcal{D} $, then $0\leq  d\zeta(\pi)  \leq d\zeta(\varrho) $, and Lipschitz continouus,
\begin{equation}\label{eq:Lcdzeta}. 
\left\| d\zeta(\varrho)  - d\zeta(\pi) \right\| \leq L_\zeta \int_0^1 \int_0^1 \left|  \widehat\varrho(s,t) -  \widehat\pi(s,t)  \right| dsdt \leq   L_\zeta \left\| \varrho - \pi  \right\|_2  
\end{equation}
for any $ \varrho, \pi \in \mathcal{D} $ with $ L_\zeta \coloneqq \sum_{p\geq 1} 2p  (2p-1) \beta_{2p}^2 $. 
\end{enumerate}
\end{lemma}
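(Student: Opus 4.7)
The plan is to reduce everything to properties of the term-by-term functionals $F_p(\varrho) := \int_0^1\int_0^1 \widehat\varrho(s,t)^{2p}\,ds\,dt$ appearing in the Hadamard series (\ref{eq:Hadamard}), and to sum with the nonnegative weights $\beta_{2p}^2$ using the summability assumption in~(\ref{def:zetahat}).

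For part (1), each $F_p$ extends naturally to $\mathcal{S}_2^+$ as a $[0,\infty]$-valued integral of a nonnegative measurable function. Convexity of $F_p$ is inherited from the pointwise convexity of $x\mapsto x^{2p}$ and preserved by summation. For $\varrho \in \mathcal{D}$, the pointwise kernel bound $|\widehat\varrho(s,t)| \leq 1$ gives $F_p(\varrho) \leq 1$, so $\zeta(\varrho) \leq \widehat\zeta(1)$; in particular $\mathcal{D} \subset \dom\zeta$ and $\zeta$ is proper. Weak lower semicontinuity follows from strong lower semicontinuity via Mazur's lemma; for strong lsc, if $\varrho_n\to\varrho$ in $\mathcal{S}_2$, pass to an almost-everywhere convergent subsequence of kernels and apply Fatou's lemma term-by-term and then to the series.

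For part (2), the main tool is the Schur product theorem: the Hadamard product of two positive semidefinite Hilbert-Schmidt operators is again positive semidefinite. From $\varrho - \pi \geq 0$ one obtains the telescoping decomposition $\varrho^{\odot k} - \pi^{\odot k} = \sum_{j=0}^{k-1} \pi^{\odot j} \odot (\varrho - \pi) \odot \varrho^{\odot(k-1-j)}$, exhibiting $\varrho^{\odot k} - \pi^{\odot k} \geq 0$ inductively, so $\pi^{\odot k} \leq \varrho^{\odot k}$ for every $k\geq 1$. Taking the matrix element against the constant function $1 \in L^2(0,1)$ yields $F_p(\pi) \leq F_p(\varrho)$. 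A short preliminary verification shows that $0 \leq \pi \leq \varrho \in \mathcal{D}$ forces $\pi \in \mathcal{D}$: the diagonal inequality $\widehat\pi(s,s) \leq \widehat\varrho(s,s) \leq 1$ almost everywhere follows by testing the operator inequality on approximate identities and invoking the Weidmann regularisation (\ref{eq:regkern}), and then the Cauchy-Schwarz bound $|\widehat\pi(s,t)| \leq \sqrt{\widehat\pi(s,s)\,\widehat\pi(t,t)}$ places $\pi$ in $\mathcal{D}$.

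For part (3), at $\varrho \in \mathcal{D}$ and in an admissible hermitian direction $\pi$ with bounded kernel, dominated convergence applied to $(\widehat\varrho + t\widehat\pi)^{2p} - \widehat\varrho^{2p} = 2pt\,\widehat\varrho^{2p-1}\widehat\pi + O(t^2)$ gives the Gateaux derivative of $F_p$ as $2p\,\langle\pi,\varrho^{\odot 2p-1}\rangle$, which identifies $d\zeta(\varrho)$ with the stated series. Each summand satisfies $\|\varrho^{\odot k}\|_{op} \leq 1$ for $\varrho \in \mathcal{D}$ via the elementary bound $\|K\|_{L^2\to L^2} \leq \|K\|_\infty$ on the unit interval (from $\|Kf\|_\infty \leq \|K\|_\infty \|f\|_1 \leq \|K\|_\infty \|f\|_2$); combined with the summability of $\sum 2p\beta_{2p}^2$ this gives operator-norm convergence and the bound $\|d\zeta(\varrho)\| \leq N_\zeta$. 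Monotonicity of $d\zeta$ follows by repeating the Schur telescoping on odd Hadamard powers. The Lipschitz bound (\ref{eq:Lcdzeta}) is where I expect the main technical difficulty: the pointwise estimate $|a^{2p-1} - b^{2p-1}| \leq (2p-1)|a-b|$ for $a,b\in[-1,1]$, combined with $|\widehat\varrho|,|\widehat\pi|\leq 1$, immediately gives the pointwise kernel inequality $|\widehat{d\zeta(\varrho)}(s,t) - \widehat{d\zeta(\pi)}(s,t)| \leq L_\zeta\,|\widehat\varrho(s,t) - \widehat\pi(s,t)|$; the second inequality in (\ref{eq:Lcdzeta}) is Cauchy-Schwarz on the unit square, but the first inequality, converting this pointwise kernel bound into an operator-norm estimate controlled by the $L^1$-norm of the kernel difference, is the delicate step and requires exploiting the Hadamard product structure together with the uniform essential bound on the kernels of elements of $\mathcal{D}$.
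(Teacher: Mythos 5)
Your overall reduction to the term-by-term functionals $F_p$ and subsequent summation is the same strategy the paper uses, but in two places you take a genuinely different route, and in one place you stop short unnecessarily.

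For the weak lower semicontinuity in part (1), you go via strong lower semicontinuity (extract an a.e.\ convergent subsequence of kernels, apply Fatou) plus convexity and Mazur. The paper's Proposition~\ref{prop:extensionzeta} instead uses a direct duality argument: it tests the weakly convergent sequence against the truncated conjugate kernel $\widehat x_R = e^{i\arg\widehat\varrho}|\widehat\varrho|^{2p-2}\indfct[|\widehat\varrho|\le R]$, applies H\"older, and lets $R\to\infty$ by monotone convergence. Both arguments are sound; yours is more elementary and arguably more transparent, while the paper's avoids invoking Mazur and works directly with the weak topology. For the monotonicity in part (2) and the monotonicity of $d\zeta$ in part (3), you use the Schur product theorem (the paper's Proposition~B.1) together with the telescoping identity $\varrho^{\odot k}-\pi^{\odot k}=\sum_j \pi^{\odot j}\odot(\varrho-\pi)\odot\varrho^{\odot(k-1-j)}$, whereas the paper's Proposition~\ref{pop:monotonicity} deduces $\varrho^{\odot k}\ge\pi^{\odot k}$ directly from the tensor-product representation $\langle f,\varrho^{\odot k}f\rangle=\lim_{r\downarrow 0}\langle f_r^{(k)},\varrho^{\otimes k}f_r^{(k)}\rangle$ and the elementary fact $\varrho\ge\pi\ge 0\Rightarrow\varrho^{\otimes k}\ge\pi^{\otimes k}$. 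Again both work; the telescoping route is closer to the finite-dimensional matrix argument and is a sensible alternative. Your preliminary verification that $0\le\pi\le\varrho\in\mathcal D$ forces $\pi\in\mathcal D$ is a useful point the paper passes over silently.

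Where you leave a gap is the Lipschitz bound~\eqref{eq:Lcdzeta} in part (3). You correctly derive the pointwise kernel estimate $|\widehat{d\zeta(\varrho)}-\widehat{d\zeta(\pi)}|\le L_\zeta\,|\widehat\varrho-\widehat\pi|$ and then declare the passage from this to the operator norm via the $L^1$ norm of the kernel difference to be "the delicate step", without resolving it. In fact you do not need the $L^1$ intermediate at all: the inequality actually used downstream (in the proof of Proposition~\ref{lem:LcontP}) is only the end-to-end bound $\|d\zeta(\varrho)-d\zeta(\pi)\|\le L_\zeta\|\varrho-\pi\|_2$, and this follows at once from the pointwise kernel estimate by bounding the operator norm by the Hilbert--Schmidt norm: $\|d\zeta(\varrho)-d\zeta(\pi)\|\le\|d\zeta(\varrho)-d\zeta(\pi)\|_2=\|\widehat{d\zeta(\varrho)}-\widehat{d\zeta(\pi)}\|_{L^2}\le L_\zeta\|\widehat\varrho-\widehat\pi\|_{L^2}=L_\zeta\|\varrho-\pi\|_2$. (Alternatively, as you note, the Schur-multiplier bound $\|A\odot B\|\le\|B\|$ for PSD $A\in\mathcal D$ combined with the telescoping identity gives $\|d\zeta(\varrho)-d\zeta(\pi)\|\le L_\zeta\|\varrho-\pi\|$, which also suffices.) As for the intermediate claim that the operator norm is controlled by $\int\int|\widehat\varrho-\widehat\pi|$, this does not follow from general principles, since for integral operators the operator norm is not bounded by the $L^1$ norm of the kernel even for bounded kernels; take $\widehat K=M\indfct_{[0,\epsilon]^2}$, which has operator norm $M\epsilon$ but $L^1$ kernel norm $M\epsilon^2$. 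So you were right to be suspicious of that step, but the correct response is to bypass it rather than try to justify it — the $L^1$ term appears to be an imprecision in the lemma statement that plays no role in the paper's subsequent arguments.
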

\begin{proof}
1.~The convexity immediately follows from the convexity of the functions $ [0,\infty) \ni x \mapsto x^{2p} $ for any $ p \in \mathbb{N} $. 
The weak lower semicontinuity follows from Proposition~\ref{prop:extensionzeta}. 
The claimed bound on $\zeta $ for $ \varrho \in \mathcal{D} $  is immediate from the boundedness of the kernel  $| \widehat \varrho |\leq 1 $. \\
\noindent
2.~This follows straightforwardly from Lemma~\ref{lem:propzetak1}.\\
\noindent
3.~The Gateaux differentiability and its representation in terms of the operator~\eqref{eq:covariance form} is immediate. The boundedness with the explicit norm bounds of the operator $ d\zeta(\varrho) $ 
are also straightforward from its definition as a Hadamard-Schur product. So is its monotonicity by Lemma~\ref{lem:propzetak1}. 
The proof of~\eqref{eq:Lcdzeta} follows from the same elementary inequality as~\eqref{eq:Lcontzetak}. 
\end{proof}

Employing the positive part   $ \varrho_+ = \varrho \ 1_{[0,\infty)}(\varrho) $ of the hermitian  
Hilbert-Schmidt operator $ \varrho = \varrho_+ - \varrho_- $, one extends
 $ \zeta : \mathcal{S}_2 \to [ 0 , \infty) $, $ \zeta(\varrho) \coloneqq \zeta(\varrho_+) $. This functional remains weakly lower semicontinuous and convex, since $ (a+b)_+ \leq a_+ + b_+ $ for any $ a,b \in \mathcal{S}_2 $. 
 Its Legendre transform 
\begin{equation}\label{def:zeta*}
 \zeta^*: \mathcal{S}_2 \to [0,\infty], \quad \zeta^*(x) \coloneqq \sup_{\varrho \in \mathcal{S}_2}\big(  \langle x ,\varrho \rangle - \zeta(\varrho) \big) .
\end{equation}
again defines a proper, convex, weakly lower semicontinuous functional, whose Legendre transform agrees with $ \zeta = \left(\zeta^*\right)^* $ by the Fenchel-Moreau theorem~\cite{BauCom17}. 
The domain of this one-sided Legendre transform $ \zeta^* $  is $  \dom \zeta^* = \mathcal{S}_2^+ $. 
If $ x \in \mathcal{S}_2^+ $, it is easy to see that the supremum in~\eqref{def:zeta*} coincides with the supremum restricted to $ \varrho \in \mathcal{S}_2^+ $.
\bigskip
 
The quantum Parisi functional also involves the functional $ \theta:  \mathcal{D} \to [0,\infty) $, 
\begin{equation}\label{eq:thetadef}
	\theta(\varrho) := \langle d\zeta(\varrho), \varrho\rangle  - \zeta(\varrho) =  \int_0^1 \int_0^1 \sum_{p= 1}^\infty (2p-1) \beta_{2p}^2\ \varrho(s,t)^{2p}\  ds dt  = \sum_{p=1 }^\infty (2p-1) \beta_{2p}^2 \  \langle 1, \varrho^{\odot 2p} 1 \rangle .  
\end{equation}
Its properties are listed in the following
\begin{lemma}\label{lem:thetaprop}
\begin{enumerate}
\item $ \theta $ is convex and uniformly bounded, $ \displaystyle \sup_{\varrho \in \mathcal{D}}  \theta(\varrho)  \leq  \sum_{p= 1}^\infty (2p-1)  \beta_{2p}^2 \leq N_\zeta  < \infty  $.
\item $ \theta $ is monotone increasing, i.e. if $0 \leq \pi \leq \varrho \in \mathcal{D} $, then $ 0 \leq \theta(\pi)  \leq \theta(\varrho) $.
\item $ \theta $ is Lipschitz continuous when $ \mathcal{D} $ is equipped with the Hilbert-Schmidt norm:
\begin{equation}\label{eq:thetaLc}
\left| \theta(\varrho) - \theta(\pi) \right| \leq L_\zeta \int_0^1 \int_0^1 \left| \widehat\varrho(s,t) - \widehat\pi(s,t)  \right| dsdt \leq L_\zeta \ \| \varrho - \pi \|_2 . 
\end{equation}
\end{enumerate}
\end{lemma}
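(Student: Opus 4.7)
The three parts of the lemma are close parallels of the corresponding statements in Lemma~\ref{lem:zetaprop}, and my plan is to mirror that proof while keeping track of the coefficient $(2p-1)\beta_{2p}^2$ in place of $\beta_{2p}^2$. The key observation is that the series representation
\[
\theta(\varrho) = \sum_{p=1}^\infty (2p-1)\beta_{2p}^2 \int_0^1\!\!\int_0^1 \widehat\varrho(s,t)^{2p}\,ds\,dt
\]
expresses $\theta$ as a non-negative combination of the same type of building blocks $\langle 1,\varrho^{\odot 2p}1\rangle$ that appear in $\zeta$. Since $(2p-1)\beta_{2p}^2 \leq 2p\beta_{2p}^2$ and $\sum 2p\beta_{2p}^2 = N_\zeta < \infty$ by Assumption~\ref{ass}, every series I need to manipulate converges absolutely, so all termwise operations below are legal.

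For part (1), I would first note that for $\varrho \in \mathcal{D}$ the kernel satisfies $|\widehat\varrho(s,t)|\leq 1$, hence $\int\!\int \widehat\varrho(s,t)^{2p}\,ds\,dt \leq 1$, and summing gives the claimed uniform bound. For convexity, each map $\varrho\mapsto \int\!\int \widehat\varrho(s,t)^{2p}\,ds\,dt$ is convex because $x\mapsto x^{2p}$ is convex on $\mathbb{R}$, so $\theta$ is a convex combination of convex functionals and therefore convex.

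For part (2), monotonicity, I would invoke Lemma~\ref{lem:propzetak1} in exactly the same way as in the proof of Lemma~\ref{lem:zetaprop}(2): that lemma tells us that $0\leq \pi \leq \varrho$ in $\mathcal{D}$ implies $0 \leq \langle 1, \pi^{\odot 2p} 1\rangle \leq \langle 1,\varrho^{\odot 2p}1\rangle$ for every $p\geq 1$. Summing over $p$ against the non-negative weights $(2p-1)\beta_{2p}^2$ gives $0\leq \theta(\pi)\leq \theta(\varrho)$.

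For part (3), Lipschitz continuity, I would use the elementary inequality $|a^{2p}-b^{2p}|\leq 2p\,|a-b|$ valid for $|a|,|b|\leq 1$, which is the same estimate indicated in the paper for~\eqref{eq:Lcdzeta}. This yields
\[
|\theta(\varrho)-\theta(\pi)| \leq \sum_{p=1}^\infty 2p(2p-1)\beta_{2p}^2 \int_0^1\!\!\int_0^1 \bigl|\widehat\varrho(s,t)-\widehat\pi(s,t)\bigr|\,ds\,dt \leq L_\zeta \int_0^1\!\!\int_0^1 \bigl|\widehat\varrho(s,t)-\widehat\pi(s,t)\bigr|\,ds\,dt,
\]
using $2p(2p-1)\leq 2p(2p-1)$ already matches the definition of $L_\zeta$. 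The second inequality in~\eqref{eq:thetaLc} then follows from the Cauchy-Schwarz inequality $\int\!\int|f|\leq \|f\|_2$ applied to $f=\widehat\varrho-\widehat\pi$ on the unit square. None of the steps pose real difficulty; the only care needed is justifying termwise differentiation/exchange of sum and integral, which is handled by absolute convergence guaranteed by Assumption~\ref{ass} together with the uniform bounds $\widehat\varrho^{2p}\leq 1$ and $|\widehat\varrho^{2p}-\widehat\pi^{2p}|\leq 2$ on $\mathcal{D}$.
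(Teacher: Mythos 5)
Your proof is correct and follows essentially the same route as the paper's: convexity and boundedness from the kernel bound $|\widehat\varrho|\leq 1$ and convexity of $x\mapsto x^{2p}$, monotonicity via the Hadamard-product monotonicity result from Appendix~\ref{app:HS}, and Lipschitz continuity via the elementary inequality $|a^{2p}-b^{2p}|\leq 2p|a-b|$ on $[-1,1]$. The only nit is that you call the sum a ``convex combination'' when it is a non-negative linear combination (the weights $(2p-1)\beta_{2p}^2$ need not sum to one), but this does not affect the convexity conclusion.
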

\begin{proof}
1.~The convexity again follows from the convexity of the functions $ [0,\infty) \ni x \mapsto x^{2p} $ for any $ p \in \mathbb{N} $. The uniform boundedness is immediate from the boundedness of the kernel  $| \widehat \varrho |\leq 1 $.\\
\noindent
2.~The representation~\eqref{eq:thetadef} as a non-negative series involving the Hadamard product such that Proposition~\ref{pop:monotonicity} yields the claim. \\
 3.~The proof of the Lipschitz continuity parallels the proof of~\eqref{eq:Lcdzeta}. 
\end{proof}

\subsection{Quantum Parisi functional} 
In a first step, the space of order parameters of the quantum glass will  be identified with monotone increasing, piecewise constant, and right-continuous with left limits (c\`adl\`ag) functions with values in $ \mathcal{D} $:
\begin{align*}
\Pi := \   \left\{ \pi: [0,1] \to \mathcal{D} \ \big| \  0 \leq \pi(s) \leq \pi(t) \; \mbox{for all $ s \leq t $ with }\; \pi(0) = 0 , \;  
 \mbox{and constant on} \right. \quad &\\
 \mbox{some partition $ 0 =: m_0 < m_1 < \dots < m_r < m_{r+1}:= 1$:} \qquad & \\
  \left.  \quad \pi = \sum_{k=1}^{r+1} \indfct_{[m_{k-1},m_k) } \pi(m_{k-1}) + \pi(m_{r+1}) \indfct_{\{1\}} 
 \right\} . &
\end{align*}
Note  that  $ \pi(m) \in  \mathcal{D} $ for any $ m\in [0,1] $ once we require $ \pi(1) \in \mathcal{D} $. \\
To a function $ \pi\in \Pi $ with $ r $-steps, i.e.\ a partition $ 0 =: m_0 < m_1 < \dots < m_r < m_{r+1}:=  1$ with ordered values $ 0= \pi(m_0) \leq  \pi(m_1) \dots \leq \pi(m_r) \leq  \pi(1)  \in \mathcal{D} $, we associate  $ j = 1,  \dots , r+1  $ independent Gaussian random vectors in $ w_j \in L^2(0,1) $. They are characterised by their zero mean and their covariance functional
\begin{equation}
\mathbb{E}\left[ \langle f , w_j \rangle \langle g , w_j \rangle \right]  = \langle f , \left( d\zeta(\pi(m_j)) - d\zeta(\pi(m_{j-1})) \right) g \rangle , \quad f,g \in L^2(0,1) . 
\end{equation}
Note that Lemma~\ref{lem:zetaprop} in particular safeguards that $  d\zeta(\pi(m_j)) - d\zeta(\pi(m_{j-1}))  \in \mathcal{S}_2^+ $ is a legitimate covariance operator of a Gaussian process in $ L^2(0,1) $, cf.~\cite{IbRo78}. To any hermitian Hilbert-Schmidt operator $ x \in \mathcal{S}_2 $, we then associate the random variable 
\begin{equation}\label{eq:defXr}
	X_{r}(\pi,x) := \ln \int \exp\left(\sum_{j=1}^{r+1}\langle \xi, w_j \rangle  + \langle \xi  , \left( x- \tfrac{1}{2} d\zeta(\pi(1)) \right)  \xi  \rangle \right) \nu_1(d\xi) . 
\end{equation}
Note that $ \exp\left(\sum_{j=1}^{r+1}\langle \xi, w_j \rangle  -  \tfrac{1}{2} \langle \xi  , d\zeta(\pi(1))   \xi  \rangle \right) $ is an exponential martingale with respect to the Gaussian average. 
For $1 \leq  j \leq  r +1$, we proceed inductively and set
\begin{equation}\label{eq:defXrec}
	X_{j-1}(\pi,x) := \frac{1}{m_j} \ln\mathbb{E}_{j}\left[ e^{m_j X_{j}(\pi,x) } \right]  ,
\end{equation}
where $ \mathbb{E}_{j} $ denotes the average with respect to $ w_j $.

The quantum Parisi functional $ \mathcal{P} : \Pi \times \mathcal{S}_2 \to \mathbb{R} $ is  then defined as the non-random end-point of this iterative procedure plus an integral:
\begin{equation}\label{eq:Parisifctl}
\mathcal{P}\left(\pi, x \right) \coloneqq X_0(\pi,x) + \frac{1}{2}\int_0^1 \theta(\pi(m)) dm . 
\end{equation}

\begin{remarks} 
\begin{enumerate}
\item Since the last random variable $ w_{r+1} $ comes in~\eqref{eq:defXrec} with weight $ m_{r+1} = 1 $, it may be integrated out. This results in a random variable $ X_r(\pi,x) $, which is of the same form as $ X_{r+1}(\pi,x) $ with $ \pi(1) $ replaced by $ \pi(m_r) $. One may hence always assume $ \pi(1) = \pi(m_r) $ without loss of generality. 
\item To analyse the Parisi functional, we also rely on a well-known equivalent representation using Ruelle probability cascades~\cite{Ruelle87,Pan13}.  For the  above partition with $ r $ steps associated to $ \pi \in \Pi $, let $ (\mu^\pi_\alpha)_{\alpha \in \mathbb{N}^{r} }$ denote the weights of the corresponding Ruelle probability cascade \cite{Pan13}. One then defines a family of $ L^2 $-valued, centred Gaussian processes $ (z_\alpha)_{\alpha \in \mathbb{N}^{r} }$ with covariance
\begin{equation}\label{eq:defz}
\mathbb{E}\left[ \langle f , z_\alpha \rangle \langle g , z_{\alpha'} \rangle \right]  = \langle f , d\zeta\left(\pi(m_{\alpha \wedge \alpha'} )  \right)g \rangle , \quad f,g \in L^2(0,1) . 
\end{equation}
where $ \alpha \wedge \alpha' \coloneqq \min\left\{  0\leq j \leq r  \ \big| \ \alpha_{j} = \alpha'_{j} \; \mbox{and} \;  \alpha_{j+1} \neq \alpha'_{j+1} \right\} $ and by convention $ \alpha \wedge \alpha' = 0 $ in case $ \alpha_1 \neq \alpha'_1 $ (see.~\cite[Sec. 2]{Pan13} for an explicit construction of these processes). 
Then the iterative construction of $ X_0(\pi,x) $ for $ \pi \in \Pi $ with $ r $ steps is known to be equivalent to the following expectation value
\begin{equation}\label{eq:PRuelle}
 X_0(\pi,x)= \mathbb{E}\left[ \ln \sum_{\alpha \in \mathbb{N}^{r} } \mu^\pi_\alpha \int \exp\left( \langle \xi ,  z_\alpha \rangle + \langle \xi , \left( x- \tfrac{1}{2} d\zeta(\pi(1) \right)  \xi  \rangle \right) \nu_1(d\xi) \right] .
 \end{equation}
 \end{enumerate}
\end{remarks}

Before we proceed, we collect some further properties of $ \mathcal{P} $. 
\begin{proposition}\label{lem:LcontP}
 There is some $ L \in (0,\infty ) $ such that for any $ x , x'  \in  \mathcal{S}_2  $ and $ \pi , \pi' \in \Pi$:
\begin{equation}\label{eq:LcontP}
\left| \mathcal{P}\left(\pi, x \right) -  \mathcal{P}\left(\pi', x' \right)\right| \leq L \left( \| x- x' \| +  \int_0^1 \| \pi(s) - \pi'(s) \|_2 \ ds \right) .
\end{equation}
\end{proposition}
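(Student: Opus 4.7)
The plan is to establish the Lipschitz bound separately in the two arguments via the triangle inequality
\[
|\mathcal{P}(\pi,x) - \mathcal{P}(\pi',x')| \leq |\mathcal{P}(\pi,x) - \mathcal{P}(\pi,x')| + |\mathcal{P}(\pi,x') - \mathcal{P}(\pi',x')|,
\]
and to treat the $x$-dependence and the $\pi$-dependence by different arguments.

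For Lipschitz continuity in $x$, I would first note that the $\theta$-integral in~\eqref{eq:Parisifctl} is $x$-independent, so only $X_0(\pi,x)$ matters. Starting from~\eqref{eq:defXr}, the innermost random variables satisfy
\[
|X_r(\pi,x) - X_r(\pi,x')| \leq \sup_{\xi\in\supp\nu_1}|\langle \xi,(x-x')\xi\rangle| \leq \|x - x'\|,
\]
using that every path $\xi \in \supp\nu_1$ has $\|\xi\|_{L^2(0,1)} = 1$. Each step of the recursion~\eqref{eq:defXrec} preserves such uniform pointwise bounds: if $|Y - Y'| \leq c$ almost surely, then by monotonicity of the exponential, $\tfrac{1}{m}|\ln \mathbb{E}[e^{mY}] - \ln \mathbb{E}[e^{mY'}]| \leq c$. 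Iterating from $j = r+1$ down to $j=1$ on a common realisation of the Gaussians $(w_j)$ propagates this estimate and yields $|\mathcal{P}(\pi,x) - \mathcal{P}(\pi,x')| \leq \|x-x'\|$.

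For Lipschitz continuity in $\pi$, I would first insert redundant break-points so that $\pi$ and $\pi'$ share a common partition $0=m_0<\ldots<m_{r+1}=1$. The $\theta$-contribution is then immediately bounded by $\tfrac{L_\zeta}{2}\int_0^1 \|\pi(s)-\pi'(s)\|_2\,ds$ by Lemma~\ref{lem:thetaprop}. For the $X_0$-contribution I would rely on the Ruelle cascade representation~\eqref{eq:PRuelle}, in which the weights $\mu^\pi_\alpha$ depend only on the common partition and therefore coincide for $\pi$ and $\pi'$; only the Gaussian processes $z_\alpha$ and the quadratic term $d\zeta(\pi(1))$ differ. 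Realising $z_\alpha(\pi)$ and $z_\alpha(\pi')$ as images of shared standard Gaussians under operator square roots of their $\mathcal{S}_2^+$-valued covariances and invoking~\eqref{eq:Lcdzeta}, together with the pointwise bound
\[
|\langle \xi,(d\zeta(\pi(1))-d\zeta(\pi'(1)))\xi\rangle| \leq L_\zeta \|\pi(1)-\pi'(1)\|_2
\]
(again from $\|\xi\|_{L^2}=1$), produces an estimate on the innermost log-Laplace random variable in terms of $\max_j\|\pi(m_j)-\pi'(m_j)\|_2$. The same contraction argument as above propagates this through the cascade averages, and monotonicity together with piecewise constancy of $\pi,\pi'$ then allows one to pass from this discrete maximum to $\int_0^1\|\pi(s)-\pi'(s)\|_2\,ds$.

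The main obstacle is the Gaussian coupling step: the processes $z_\alpha(\pi)$ are correlated across cascade levels in a non-trivial, $\pi$-dependent way, and the bound on the coupled increments involves square roots of covariance operators that are not Lipschitz in the natural norms. A cleaner alternative I would pursue is a Guerra--Toninelli interpolation along the linear path $\pi_t = t\pi + (1-t)\pi' \in \Pi$, which lies in $\Pi$ thanks to convexity of $\mathcal{D}$ and the preservation of monotonicity under convex combinations. Differentiating $\mathcal{P}(\pi_t,x)$ in $t$ and applying Gaussian integration by parts in the cascade representation~\eqref{eq:PRuelle}, combined with Lemma~\ref{lem:zetaprop}.3 and Lemma~\ref{lem:thetaprop}, should bound $|\partial_t \mathcal{P}(\pi_t,x)|$ by a constant multiple of $L_\zeta \int_0^1 \|\pi(s)-\pi'(s)\|_2\,ds$ uniformly in $t$, which integrates to the desired inequality.
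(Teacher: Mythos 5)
Your decomposition into separate $x$- and $\pi$-estimates is sound, and your argument for the $x$-dependence is correct and in fact more elementary than the paper's. Since every path $\xi$ in $\supp\nu_1$ has $\|\xi\|_{L^2}=1$, the pointwise bound $|\langle\xi,(x-x')\xi\rangle|\leq\|x-x'\|$ holds, and it propagates through the recursion~\eqref{eq:defXrec} because $Y\mapsto m^{-1}\ln\mathbb{E}[e^{mY}]$ is nonexpansive in the sup-norm; this gives $|X_0(\pi,x)-X_0(\pi,x')|\leq\|x-x'\|$ without any interpolation. The paper instead absorbs the $x$-dependence into a joint Gaussian interpolation, which is fine but not necessary.

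For the $\pi$-dependence you correctly recognise that coupling the cascade processes through operator square roots fails (square roots are only $1/2$-H\"older in $\mathcal{S}_2$), and you pivot to an interpolation; but the path you choose differs from the paper's in a way that matters. You interpolate $\pi_t=t\pi+(1-t)\pi'$ linearly, so the covariance $d\zeta(\pi_t(m_{\alpha\wedge\alpha'}))$ depends nonlinearly on $t$; to differentiate along this path one needs the abstract Gaussian differentiation identity for a smoothly varying covariance together with the chain-rule bound $\|\partial_t d\zeta(\pi_t)\|\leq L_\zeta\|\pi-\pi'\|_2$ from~\eqref{eq:Lcdzeta}. This can be made to work, but the paper takes a more constructive route: on a common refined partition it realises two \emph{independent} cascade processes $z_\alpha$ (for $\pi$) and $z'_\alpha$ (for $\pi'$) and sets $z_\alpha(r)=\sqrt{r}\,z_\alpha+\sqrt{1-r}\,z'_\alpha$, so the covariance is the affine combination $r\,d\zeta(\pi(m_{\alpha\wedge\alpha'}))+(1-r)\,d\zeta(\pi'(m_{\alpha\wedge\alpha'}))$. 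Differentiation then produces the operator difference $d\zeta(\pi(m_{\alpha\wedge\tilde\alpha}))-d\zeta(\pi'(m_{\alpha\wedge\tilde\alpha}))$ directly, bounded by $L_\zeta\|\pi(m)-\pi'(m)\|_2$ via~\eqref{eq:Lcdzeta}, and the Ruelle cascade weight identity $\sum_{\alpha\wedge\tilde\alpha=j}\mathbb{E}[\mu_\alpha\mu_{\tilde\alpha}]=m_{j+1}-m_j$ converts the replica average into $\int_0^1\|\pi(m)-\pi'(m)\|_2\,dm$, with the diagonal $d\zeta(\pi(1))$ terms cancelling against the derivative of the quadratic exponent. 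Your final sentence (``should bound \dots which integrates to the desired inequality'') leaves the integration-by-parts step unverified; switching to the affine covariance interpolation is what makes that computation routine.
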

The proof, which is by standard Gaussian interpolation~\cite{Tal11a,Pan13}, is found in Appendix~\ref{app:PF}. For fixed $ x \in \mathcal{S}_2 $, one can hence continuously extend $ \mathcal{P}\left(\pi, x \right) $ to the set 
$$   \left\{ \pi: [0,1] \to \mathcal{D} \; \mbox{c\`adl\`ag} \  \big| \  0 \leq \pi(s) \leq \pi(t) \leq \pi(1)  \; \mbox{for all}\;  s \leq t \;  \mbox{with}\; \pi(0) = 0   \right\} .
$$

For future purpose, we also record some elementary bounds as well as smoothness and related properties of the Parisi functional. For its formulation, we recall that $ \| x \| $ stands for the operator norm on $ L^2(0,1)$, and 
$ \left\| \widehat x \right\|_{q}  \coloneqq \left( \int_0^1\int_0^1 \left| \widehat x(s,t)\right|^{q} ds dt \right)^{1/q}  $,  $ q \in [1,\infty) $, 
is the norm of the integral kernel $ \widehat x $ corresponding to $ x $. 
\begin{proposition}\label{prop:elpropP}
\begin{enumerate}
\item For any $\pi \in  \Pi$, the functional $ \mathcal{S}_2 \ni x \mapsto \mathcal{P}(\pi,x) $ is convex and  bounded. If  $ x = x_+ - x_- \in \mathcal{S}_2 $, then:
\begin{align*}
- \frac{N_\zeta}{2} -  \min\left\{ \| x_- \| , \| \widehat x_- \|_1\right\} & \leq  X_0(\pi,x) \leq \min\left\{ \| x_+ \| , \| \widehat x_+ \|_1\right\} . 
\end{align*}
The functional is Gateaux differentiable at every $ x \in \mathcal{S}_2 $  with bounded derivative, $ \left\| d\mathcal{P}(\pi, x) \right\|_2 \leq 1 $. The second derivative is non-negative and also uniformly bounded, i.e.\ for any $ y \in \mathcal{S}_2 $:
\begin{equation} 
0 \leq \frac{d^2}{d\lambda^2} \mathcal{P}(\pi, x+ \lambda y ) \Big|_{\lambda = 0 } \leq 2 \left\| y \right\|_2^2 . 
\end{equation}
\item $ \displaystyle  \inf_{\pi \in \Pi}  \mathcal{P}\left(\pi, x \right) \leq \! \mathcal{P}\left(0, x\right) $ and 
$\displaystyle
- \frac{N_\zeta}{2}  \leq \!\sup_{x \in \mathcal{S}_2^+} \left[ \inf_{\pi \in \Pi}  \mathcal{P}\left(\pi, x \right) - \frac{1}{2} \zeta^*(2x)  \right] \leq \! \sup_{x \in \mathcal{S}_2^+} \left[  \mathcal{P}\left(0, x\right) - \frac{1}{2} \zeta^*(2x)  \right] \leq \frac{\widehat \zeta(1)}{2} $. 
\end{enumerate}
\end{proposition}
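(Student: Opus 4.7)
The plan is to establish Part 1 by induction along the Gaussian recursion~\eqref{eq:defXrec}, passing convexity, the explicit bounds, and smoothness from $X_{r+1}$ down to $X_0$, and then to deduce Part 2 by combining these bounds with a Young-type inequality involving $\zeta^*$. For convexity in $x$, the exponent in~\eqref{eq:defXr} is affine in $x$, so $X_r(\pi,x)$ is the logarithm of an integral of exponentials of affine functions and is therefore convex by H\"older's inequality; the same argument shows that the recursion $X_{j-1} = (1/m_j)\log \mathbb{E}_j[\exp(m_j X_j)]$ preserves convexity.

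For the bounds on $X_0(\pi,x)$, combine the converse Jensen (power-mean) inequality $\exp(X_{j-1}) \leq \mathbb{E}_j[\exp(X_j)]$, valid since $m_j \leq 1$, with Fubini and the exponential martingale identity $\mathbb{E}\bigl[\exp\bigl(\sum_j \langle \xi, w_j\rangle - \tfrac12 \langle \xi, d\zeta(\pi(1))\xi\rangle\bigr)\bigr] = 1$, to obtain
\[ X_0(\pi, x) \leq \log \int \exp(\langle \xi, x\xi\rangle)\, \nu_1(d\xi) \leq \log \int \exp(\langle \xi, x_+\xi\rangle)\, \nu_1(d\xi). \]
Since $|\xi(s)| \leq 1$ pointwise and $\|\xi\|_2 = 1$ for $\nu_1$-a.e.\ path, the estimate $\langle \xi, x_+\xi\rangle \leq \min\{\|x_+\|, \|\widehat{x}_+\|_1\}$ closes the upper bound. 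The matching lower bound follows from forward Jensen (both in the recursion and for the $\nu_1$-integral inside $X_r$) together with $\|d\zeta(\pi(1))\| \leq N_\zeta$ from Lemma~\ref{lem:zetaprop}.

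For differentiability, differentiation under the integral gives $dX_r(\pi,x) = \langle \xi\otimes\xi\rangle_{\mathrm{GB}} \in \mathcal{S}_2^+$, the Gibbs average of the rank-one operator with kernel $\xi(s)\xi(t)$ under the measure proportional to $\exp(u)\,\nu_1(d\xi)$; since $\|\xi\otimes\xi\|_2 = \|\xi\|_2^2 = 1$, Jensen yields $\|dX_r\|_2 \leq 1$, and the recursion $dX_{j-1} = \langle dX_j\rangle_j$ propagates this bound to $d\mathcal{P} = dX_0$. For the second derivative, the main obstacle is that a naive iteration along~\eqref{eq:defXrec} generates an additional term $m_j \mathrm{Var}_j(\partial_\lambda X_j)$ at each step that a priori accumulates in the number $r$ of cascade levels. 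The remedy is to invoke the Ruelle cascade representation~\eqref{eq:PRuelle}, in which all Gaussians are integrated in a single step: writing $W(x) = \sum_\alpha \mu^\pi_\alpha \int \exp(\cdots)\,\nu_1(d\xi)$, direct calculation yields
\[ \frac{d^2}{d\lambda^2} X_0(\pi, x + \lambda y)\bigg|_{\lambda = 0} = \mathbb{E}\bigl[\mathrm{Var}_{\mathrm{GB}}(\langle y, \xi\otimes\xi\rangle)\bigr] \geq 0, \]
with variance under the joint Gibbs measure on $(\alpha,\xi)$; Cauchy-Schwarz bounds this by $\|y\|_2^2 \leq 2\|y\|_2^2$.

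Part 2 is essentially a corollary. The first inequality is immediate since the constant path $\pi \equiv 0$ lies in $\Pi$. For the lower bound in the second chain, testing at $x = 0$ yields $\mathcal{P}(\pi, 0) \geq -N_\zeta/2$ from Part~1 while $\zeta^*(0) = 0$. For the upper bound, Young's inequality $\langle 2x, \varrho\rangle \leq \zeta^*(2x) + \zeta(\varrho)$ applied with $\varrho = \xi\otimes\xi \in \mathcal{D}$ gives $\langle \xi, x\xi\rangle \leq \tfrac12(\zeta^*(2x) + \zeta(\xi\otimes\xi))$; substituting into $\mathcal{P}(0, x) = \log \int \exp(\langle \xi, x\xi\rangle)\,\nu_1(d\xi)$ and using $\zeta(\xi\otimes\xi) \leq \widehat\zeta(1)$ from Lemma~\ref{lem:zetaprop} produces the claimed bound $\widehat\zeta(1)/2$.
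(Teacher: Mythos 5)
Your proof is correct and follows essentially the same route as the paper: Jensen in both directions for the bounds on $X_0$, differentiation under the integral in the cascade representation for the first and second derivatives, and the Fenchel--Young inequality with $\varrho = |\xi\rangle\langle\xi|$ for the final estimate in Part~2. The only stylistic difference is that you pass back and forth between the iterative definition \eqref{eq:defXrec} and the cascade representation \eqref{eq:PRuelle} (correctly noting that the latter is needed for a clean second-derivative bound because the variance terms accumulate under naive iteration), whereas the paper works with \eqref{eq:PRuelle} throughout; the two are equivalent here.
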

The proof, which is found in Appendix~\ref{app:PF}, is based on extensions of  techniques,   which are rather standard in spin glass theory.

\section{Main results} \label{sec:main}
\subsection{Quantum Parisi formula}

Our main result are three equivalent versions of a Parisi formula for the quantum glasses~\eqref{def:HN}. We start with the two versions  inspired and based on the corresponding formulae  for general vector-spin glasses established by Chen~\cite{Chen23}.
\begin{theorem}\label{thm:main}
Under Assumption~\ref{ass}:
\begin{align}\label{eq:thmmain}
\lim_{N\to \infty} \frac{1}{N} \mathbb{E}\left[ \ln Z_N \right] & = \sup_{x \in \mathcal{S}_2^+} \left[ \inf_{\pi \in \Pi}  \mathcal{P}\left(\pi, x \right) - \frac{1}{2} \zeta^*(2x)  \right]  \\
& =  \sup_{\varrho \in \mathcal{D} } \left[\inf_{x \in \mathcal{S}_2} \inf_{\pi \in \Pi}  \mathcal{P}\left(\pi, x \right) - \langle x , \varrho  \rangle+ \frac{1}{2} \zeta(\varrho)  \right]  .
\label{eq:TolandSinger}
\end{align}

\end{theorem}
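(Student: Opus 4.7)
I would deduce both variational formulas from Chen's Parisi formula for classical vector-spin glasses~\cite{Chen23,Chen23b} via a finite-dimensional approximation of the path space, followed by a convex-duality argument to relate~\eqref{eq:thmmain} and~\eqref{eq:TolandSinger}.

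For the first equality~\eqref{eq:thmmain}, fix $K\in\mathbb{N}$ and discretize: replace each c\`adl\`ag path $\xi_j$ by its sampling $\bar\xi_j^{(K)} := (\xi_j(k/K))_{k=1}^K \in \{-1,1\}^K$. Using~\eqref{def:covariance} and the Riemann-sum approximation of the time integrals, the resulting partition function $Z_N^{(K)}$ is that of a classical vector-spin glass in Chen's sense: $N$ sites each carrying a $K$-dimensional Ising spin with a-priori measure inherited from $\nu_1$, and Gaussian Hamiltonian with covariance $N K^{-2}\sum_{k,l=1}^K \widehat\zeta(R_{k,l})$ in terms of the $K\times K$ overlap matrix $R_{k,l} = N^{-1}\bar\xi^{(1),(K)}_k\cdot\bar\xi^{(2),(K)}_l$. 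Riemann-sum convergence together with Lipschitz continuity of $\widehat\zeta$ from Assumption~\ref{ass} gives
\begin{equation*}
\lim_{K\to\infty}\lim_{N\to\infty} \frac{1}{N}\mathbb{E}\big[\ln Z_N^{(K)}\big] = \lim_{N\to\infty} \frac{1}{N}\mathbb{E}[\ln Z_N] .
\end{equation*}
For each fixed $K$, Chen's formula~\cite{Chen23} expresses $\lim_N N^{-1}\mathbb{E}[\ln Z_N^{(K)}]$ as a variational principle of precisely the form~\eqref{eq:thmmain}, but with order parameters $\pi$ and $x$ restricted to piecewise-constant integral kernels supported on the $K\times K$ grid $\{(k/K, l/K)\}$. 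Identifying such matrices with Hilbert--Schmidt operators isometrically embeds the finite-$K$ variational problem into the infinite-dimensional one. The $K\to\infty$ limit on the variational side then follows from density of piecewise-constant kernels in $\mathcal{S}_2^+$ and in $\mathcal{D}$, combined with the Lipschitz continuity of $\mathcal{P}$ in both arguments (Proposition~\ref{lem:LcontP}) and the properties of $\zeta$ and $\theta$ (Lemmas~\ref{lem:zetaprop} and~\ref{lem:thetaprop}). The uniform bounds in Proposition~\ref{prop:elpropP}, in particular $\|d\mathcal{P}(\pi, x)\|_2 \leq 1$, permit truncating the supremum in $x$ to a bounded Hilbert--Schmidt ball where this continuity is effective.

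The equivalence of~\eqref{eq:thmmain} and~\eqref{eq:TolandSinger} is a Toland--Singer duality argument. Set $F(x) := \inf_{\pi\in\Pi}\mathcal{P}(\pi, x)$ and $g(x) := \tfrac{1}{2}\zeta^*(2x)$; both are proper, convex, and weakly lower semicontinuous on $\mathcal{S}_2$. Then~\eqref{eq:thmmain} reads $\sup_x [F(x) - g(x)]$. By Fenchel--Moreau applied to $\zeta$ (Lemma~\ref{lem:zetaprop}) one has $g^*(\varrho) = \tfrac{1}{2}\zeta(\varrho)$, and by definition $F^*(\varrho) = \sup_{\pi}\sup_x[\langle x,\varrho\rangle - \mathcal{P}(\pi, x)]$. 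The Toland--Singer identity $\sup_x [F - g] = \sup_\varrho [g^* - F^*]$ then yields~\eqref{eq:TolandSinger}; the restriction of the outer supremum to $\varrho \in \mathcal{D}$ follows from monotonicity of $\zeta$ (Lemma~\ref{lem:zetaprop}(ii)) together with the structure of the inner infimum, which forces optimal $\varrho$ to have kernel bounded in $[-1,1]$.

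The main obstacle is the $K\to\infty$ convergence on the variational side, where one must ensure that the supremum over $x$ and the infimum over $\pi$ in the finite-$K$ problem converge simultaneously and uniformly to their infinite-dimensional counterparts. This is technically the most delicate step and requires careful attention to the Skorohod structure of $\Pi$ when approximating monotone $\mathcal{S}_2^+$-valued paths by piecewise-constant ones, as well as uniform control of near-optimal $x_K$ by the derivative bounds from Proposition~\ref{prop:elpropP}.
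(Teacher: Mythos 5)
Your high-level strategy coincides with the paper's: discretize the path space to reduce to Chen's vector-spin Parisi formula, pass to the limit on the variational side, and use Toland--Singer duality for the second identity. However, there are several concrete gaps that matter.

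\textbf{Discretization by sampling versus orthogonal projection.} You discretize paths by pointwise sampling, $\bar\xi_j^{(K)}=(\xi_j(k/K))_k\in\{-1,1\}^K$, whereas the paper replaces $\xi_j$ by its \emph{local averages} over dyadic intervals, i.e.\ the orthogonal projection $Q_D\xi_j$ onto the span of square-wave pulses. This is not a cosmetic difference: the projection structure gives, by Jensen's inequality, the monotonicity $\zeta(Q_D\varrho\,Q_D)\le\zeta(Q_{D+1}\varrho\,Q_{D+1})\le\zeta(\varrho)$ and $\zeta^*(Q_D x Q_D)\le\zeta^*(Q_{D+1}xQ_{D+1})$ (Lemma~\ref{lem:relproj}), together with the nesting $Q_{D_0}Q_D=Q_{D_0}$. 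These facts are the engine behind the $D\to\infty$ convergence of the variational principle (Lemma~\ref{lem:maxminconv}), which is the step you yourself flag as "technically the most delicate". With pointwise sampling you have no projection and no such monotonicity, so the limit argument you sketch has no analogue of these inequalities to lean on. For similar reasons the claim that $N^{-1}\mathbb{E}[\ln Z_N^{(K)}]\to N^{-1}\mathbb{E}[\ln Z_N]$ "by Riemann-sum convergence and Lipschitz continuity of $\widehat\zeta$" is far too quick: the paths have jumps, so one must first restrict the path measure to the event of a controlled number of jumps (the concentration estimate of Lemma~\ref{lem:concentration}, resting on the Poisson structure of $\nu_1$) and then run a Gaussian interpolation (Proposition~\ref{lem:interpol1}). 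Neither step is purely deterministic.

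\textbf{The truncation of the supremum over $x$.} You assert that the bound $\|d\mathcal{P}(\pi,x)\|_2\le1$ from Proposition~\ref{prop:elpropP} "permits truncating the supremum in $x$ to a bounded Hilbert--Schmidt ball". That Lipschitz bound only gives $\mathcal{F}(x)\le\mathcal{F}(0)+\|x\|_2$, and one would then need $\zeta^*(2x)$ to grow strictly faster than $2\|x\|_2$ uniformly in direction, which does not hold (take $x=n\,|\phi_\varepsilon\rangle\langle\phi_\varepsilon|$ with $\phi_\varepsilon=\varepsilon^{-1/2}\indfct_{(0,\varepsilon)}$; then $\zeta^*(2x)$ only grows like $n\varepsilon$). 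The paper's truncation (Lemma~\ref{lem:TolandSinger}) is obtained the other way around: it first establishes the Toland--Singer identity, whose dual side is a supremum over the \emph{bounded} set $\mathcal{D}$, and then uses the explicit bound $\|d\zeta(\varrho)\|\le N_\zeta$ on $\mathcal{D}$ to produce a near-optimiser $x_\varepsilon=\tfrac12 d\zeta(\varrho_\varepsilon)$ inside $B_2(N_\zeta)$. In particular, the paper proves the Toland--Singer representation \emph{before} (and in order to prove) the convergence \eqref{eq:thmmain}, whereas your proposal proves \eqref{eq:thmmain} first and only then invokes duality; that order is backwards and, as written, circular. Finally, the claim $\dom\mathcal{F}^*\subset\mathcal{D}$ does not follow "from monotonicity of $\zeta$": in the paper (Corollary~\ref{cor:differble}) it is a Hahn--Banach separation argument using $\supp\nu_1\subset\mathcal{D}$ and the lower bounds on $\mathcal{P}(0,\cdot)$.
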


As will be explained in the proof, which is found in Section~\ref{sec:PMain}, the second line~\eqref{eq:TolandSinger} is just an application of the Toland-Singer theorem to the supremum of the difference  of two convex functionals.  
In this aspect of our proof, we deviate from available results in~\cite{Chen23}.  In contrast, \cite{Chen23} establishes the equality of the right sides via the Hopf-Lax representation of the solution of an underlying Hamilton-Jacobi equation  (see also \cite{BG09,Mou21,ChXia22}). 

The outer supremum in~\eqref{eq:thmmain} is over non-negative, real hermitian Hilbert-Schmidt operators. As in~\cite{Chen23}, the role of $ x \in \mathcal{S}_2^+ $ is that of a Lagrange multiplier fixing the self-overlap in the functional integral $ Z_N $. 
For fixed $ x \in \mathcal{S}_2^+ $, the  inner infimum arises from a  Parisi formula for a self-overlap corrected functional integral spelled in Theorem~\ref{thm:mainsc} below. 
Both the infimum as well as the supremum are finite as seen from Proposition~\ref{prop:elpropP}. \\

We may also cast our Parisi formula in Theorem~\ref{thm:main} in a form analogous to Panchenko's version \cite{Pan18} of the Parisi formula for vector-spin glasses.
This version involves order parameters from the set of paths
\begin{equation}
 \Pi_\varrho \coloneqq \left\{ \pi: [0,1] \to \mathcal{D} \; \mbox{c\`adl\`ag} \  \big| \  0 \leq \pi(s) \leq \pi(t) \; \mbox{for all $ s \leq t $ with }\; \pi(0) = 0 \; \mbox{and} \; \pi(1) = \varrho \right\} ,
\end{equation}
terminating at  a fixed $ \varrho \in \mathcal{D} $. 
In the present context, the analogue of \cite[Thm. 1] {Pan18}  or \cite[Prop.3.1]{Chen23b} reads as follows.
\begin{theorem}\label{thm:Pan}
Under Assumption~\ref{ass}:
\begin{equation}\label{eq:mainPan}
\lim_{N\to \infty} \frac{1}{N} \mathbb{E}\left[ \ln Z_N \right]  =  \sup_{\varrho \in \mathcal{D} } \left[\inf_{x \in \mathcal{S}_2} \inf_{\pi \in \Pi_{\varrho}}  \mathcal{P}\left(\pi, x \right) - \langle x , \varrho  \rangle + \frac{1}{2} \zeta(\varrho)  \right]  .
\end{equation}
\end{theorem}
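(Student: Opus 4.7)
The plan is to reduce to Theorem~\ref{thm:main} and show that its formula~\eqref{eq:TolandSinger} agrees with~\eqref{eq:mainPan}. Writing $G(\varrho) := \inf_{x \in \mathcal{S}_2}\inf_{\pi \in \Pi}[\mathcal{P}(\pi,x)-\langle x,\varrho\rangle+\tfrac{1}{2}\zeta(\varrho)]$ and $G_\varrho(\varrho) := \inf_{x \in \mathcal{S}_2}\inf_{\pi \in \Pi_\varrho}[\mathcal{P}(\pi,x)-\langle x,\varrho\rangle+\tfrac{1}{2}\zeta(\varrho)]$, Theorem~\ref{thm:main} identifies $\lim N^{-1}\mathbb{E}[\ln Z_N] = \sup_\varrho G(\varrho)$, so it suffices to prove $\sup_\varrho G(\varrho) = \sup_\varrho G_\varrho(\varrho)$. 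The direction $\sup G_\varrho \geq \sup G$ is immediate: $\Pi_\varrho$ sits inside the c\`adl\`ag closure of $\Pi$ (on which $\mathcal{P}$ extends continuously by Proposition~\ref{lem:LcontP}), so restricting the infimum to $\Pi_\varrho$ only enlarges it, and this pointwise bound propagates through the outer supremum.

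For the reverse inequality the key ingredient is the structural observation that $\mathcal{P}(\pi,x)$ depends on a piecewise-constant $\pi \in \Pi$ with partition $0 = m_0 < \ldots < m_{r+1} = 1$ only through the body values $\pi(m_0),\ldots,\pi(m_r)$, and not through the atom $\pi(1)=\pi(m_{r+1})$. Indeed, by the remark after~\eqref{eq:defXrec}, the Gaussian $w_{r+1}$ carries weight $m_{r+1}=1$ and can be integrated out, yielding a Gaussian factor $\tfrac{1}{2}\langle\xi,(d\zeta(\pi(1))-d\zeta(\pi(m_r)))\xi\rangle$ that exactly cancels the $-\tfrac{1}{2}d\zeta(\pi(1))$ in the potential of $X_r$; the integral $\int_0^1\theta(\pi(m))\,dm$ is likewise blind to the atom at $\{1\}$. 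Hence, for any $\pi \in \Pi$ with body-end $c := \pi(m_r) \in \mathcal{D}$ and any $\varrho \in \mathcal{D}$ with $\varrho \geq c$ in the L\"owner order, the path $\pi'$ obtained by setting $\pi'(1):=\varrho$ lies in $\Pi_\varrho$ and satisfies $\mathcal{P}(\pi',x)=\mathcal{P}(\pi,x)$.

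With this in hand I would establish $\sup G_\varrho \leq \sup G$ via a saddle-point / perturbation argument. Take a near-optimizer $(\varrho^\star, x^\star, \pi^\star)$ for $\sup_\varrho G_\varrho(\varrho)$ and set $c := \pi^\star(m_r)$, which satisfies $c \leq \varrho^\star$ by monotonicity. The structural observation produces $\pi'' \in \Pi_c \subset \Pi$ with the same body as $\pi^\star$, giving the bound $G(c) \leq \mathcal{P}(\pi^\star, x^\star) - \langle x^\star, c\rangle + \tfrac{1}{2}\zeta(c)$. Comparing with $G_{\varrho^\star}(\varrho^\star) = \mathcal{P}(\pi^\star, x^\star) - \langle x^\star, \varrho^\star\rangle + \tfrac{1}{2}\zeta(\varrho^\star)$ and using convexity of $\zeta$ (Lemma~\ref{lem:zetaprop}), one arrives at
\[
G_{\varrho^\star}(\varrho^\star) - G(c) \leq \langle \tfrac{1}{2}d\zeta(\varrho^\star) - x^\star,\,\varrho^\star - c\rangle.
\]
The envelope theorem applied to the outer supremum — perturbing $\varrho^\star$ within $\mathcal{D}$ using the candidate paths built from the observation — should force the Lagrangian condition $x^\star \geq \tfrac{1}{2}d\zeta(\varrho^\star)$ in the L\"owner order, with equality when the body-end constraint is inactive. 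Combined with $\varrho^\star - c \geq 0$, this renders the trace pairing on the right non-positive, so $G_{\varrho^\star}(\varrho^\star) \leq G(c) \leq \sup G$, as required.

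The principal obstacle is making this Lagrangian / envelope step rigorous: the supremum need only be approximated (not attained), $\varrho^\star$ may lie on the boundary of $\mathcal{D}$, and the dependence of $G_\varrho(\varrho)$ on $\varrho$ is coupled through the constraint $\pi(1)=\varrho$. The required analytic inputs — uniform Lipschitz bounds, Gateaux differentiability of $\mathcal{P}$, and monotonicity and differentiability of $\zeta$ and $\theta$ — are supplied by Proposition~\ref{lem:LcontP}, Proposition~\ref{prop:elpropP}, and Lemmas~\ref{lem:zetaprop}--\ref{lem:thetaprop}. A cleaner alternative, paralleling~\cite{Pan18,Chen23b} more directly and likely what the authors follow, is to establish the upper bound $\lim N^{-1}\mathbb{E}[\ln Z_N] \leq$ the RHS of~\eqref{eq:mainPan} by a Guerra-type Gaussian interpolation in which the self-overlap is pinned at $\varrho$ throughout the interpolation; combined with the easy direction $\sup G_\varrho \geq \sup G$ and Theorem~\ref{thm:main}, this closes the argument without any explicit saddle-point analysis.
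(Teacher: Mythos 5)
Your easy direction ($\sup_\varrho G_\varrho(\varrho) \geq \sup_\varrho G(\varrho)$, hence $F \geq \lim N^{-1}\mathbb{E}[\ln Z_N]$ by Theorem~\ref{thm:main}) is correct, and the structural observation that $\mathcal{P}(\pi,x)$ is insensitive to the value of $\pi(1)$ once $\pi(m_r)$ is fixed is also correct and matches the remark after~\eqref{eq:defXrec}. That part is fine.

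The rest has a genuine gap, in two places. First, the saddle-point/envelope step, which you yourself flag as the principal obstacle, is not merely incomplete but attacks a condition that need not hold: the minimizer $x^\star$ of $\inf_x [\inf_{\pi\in\Pi_{\varrho^\star}}\mathcal{P}(\pi,x) - \langle x,\varrho^\star\rangle]$ has no a priori reason to dominate $\tfrac{1}{2}d\zeta(\varrho^\star)$ in the L\"owner order, and perturbations of $\varrho^\star$ couple to the inner $\inf_\pi$ through the hard constraint $\pi(1)=\varrho$ in a way that a standard envelope argument does not resolve, especially since the supremum is only approximately attained and $\varrho^\star$ may sit on the boundary of $\mathcal{D}$. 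Second, and more decisively, your ``cleaner alternative'' addresses the wrong direction. A Guerra-type interpolation proves $\lim N^{-1}\mathbb{E}[\ln Z_N] \leq F$; but that bound is exactly the easy direction you already have from Theorem~\ref{thm:main} and set inclusion. The missing piece is the complementary lower bound $\lim N^{-1}\mathbb{E}[\ln Z_N] \geq F$ (equivalently, $\sup G_\varrho \leq \sup G$), and a Guerra interpolation does not produce it. The step that cannot be circumvented is a genuine lower-bound mechanism (cavity computation, Ghirlanda--Guerra identities, ultrametricity).

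What the paper actually does for this hard direction is not a saddle-point argument at all: it invokes the full endpoint-constrained Parisi formula of Chen and Panchenko for finite-dimensional vector-spin glasses (\cite[Thm.~1.2]{Chen23}, \cite[Prop.~3.1]{Chen23b}, cf.\ \cite[Thm.~1]{Pan18}), applied to the discretized model $Z_N^D$, so that for each $D$
\[
\lim_{N\to\infty}\frac{1}{N}\mathbb{E}[\ln Z_N^D] = \sup_{\varrho\in\mathcal{D}_D}\Big[\inf_{x\in\mathcal{S}_{2,D}}\inf_{\pi\in\Pi_{\varrho,D}}\mathcal{P}_D(\pi,x) - \langle x,\varrho\rangle + \tfrac{1}{2}\zeta_D(\varrho)\Big],
\]
with both directions already established at the discrete level. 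It then fixes a near-optimal $\varrho_\varepsilon$ for $F$, estimates the discrete supremum from below by plugging in $\varrho_\varepsilon$, and passes $D\to\infty$ using Lemma~\ref{lem:infconv}, lower semicontinuity of $\zeta$, and Proposition~\ref{prop:approx}. So if you want to recover the theorem you should replace your reverse inequality by this approximation argument rather than trying to deduce Theorem~\ref{thm:Pan} from Theorem~\ref{thm:main} by soft convex analysis alone.
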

The proof can be found in Section~\ref{sec:PMain}. 
Comparing with~\eqref{eq:TolandSinger}, the optimal  $ \pi $ there is seen to end at $ \varrho $, which turns our to have the meaning of the self-overlap in the model~(cf.\  Subsection~\ref{sub:conj} below). Up to a limit and minor details in the variational sets, the functional and variational form in the right side of~\eqref{eq:mainPan} agrees with the one in~\cite{AdBr20}. 

We follow up with some further comments 
on special cases and implications of Theorems~\ref{thm:main} and~\ref{thm:Pan}.\\

\noindent
\textbf{Classical case $ b=0 $:}~~The version~\eqref{eq:mainPan} of the Parisi fromula is closest to the classical Parisi formula without transversal field. In case $ b = 0 $, where the Poisson measure $ \nu_1 $ has no jumps, $ \xi(t) = \xi(0) $ for all $ t \in [0,1) $.  The weight $ \langle \xi , x \xi \rangle = \langle 1 , x 1 \rangle = \int_0^1 \int_0^1 \widehat x(s,t) ds dt $ in the Parisi functional is hence a real number, $ \langle \xi , d\zeta(\pi(1)) \xi \rangle = \langle 1 ,  d\zeta(\varrho) 1 \rangle  $ and $ \langle w_\alpha , \xi \rangle =  \langle w_\alpha , 1 \rangle \xi(0) $. Consequently, the infimum in~\eqref{eq:mainPan} over $ x \in \mathcal{S}_2 $ forces $ \varrho $ to coincide with the rank-one projection $| 1 \rangle\langle 1 | $ onto the constant function $ 1 \in L^2(0,1) $, because otherwise the infimum is minus infinity. Since 
$ \zeta( | 1 \rangle\langle 1 |) - \langle 1 , d\zeta(| 1 \rangle\langle 1 |) 1 \rangle = \theta( | 1 \rangle\langle 1 | ) = \widehat\zeta'(1) - \widehat\zeta(1) \eqqcolon \widehat \theta(1)$,  
the  right side in~\eqref{eq:mainPan} therefore reduces at $ b = 0 $ to
$$
 \inf_{\pi \in \Pi_{|1\rangle\langle 1 |} } \mathbb{E}\Big[ \ln \sum_{\alpha \in \mathcal{N}^r}  \frac{\mu_\alpha^\pi}{2 \cosh(\beta h) } \sum_{\zeta(0) \in \{-1,1\}} \exp\left[  \left( \langle w_\alpha , 1 \rangle - \beta h\right)  \xi(0) \right] \Big]  + \frac{1}{2} \int_0^1 \theta(\pi(m)) dm - \frac{1}{2}\widehat \theta(1)  ,
$$
with $ \mathbb{E}\left[  \langle w_\alpha , 1 \rangle  \langle w_{\alpha'} , 1 \rangle \right] = \langle 1 , d\zeta(\pi(m_{\alpha\wedge\alpha'})) 1 \rangle $. To any given $ \pi \in \Pi_{|1\rangle\langle 1 | }$,  we may hence associate a classical order parameter $ \pi_c: [0,1] \to [0,1] $ ending at $  \pi_c(1) = 1 $ by setting 
$$
 \widehat\zeta'\left( \pi_c(m)\right) \coloneqq  \langle 1 , d\zeta(\pi(m))) 1 \rangle .
 $$
Convexity of $ \zeta $ implies $ \theta(\varrho_1) \geq \langle \varrho_2 , d\zeta(\varrho_1) \rangle - \zeta(\varrho_2) $ for any $ \varrho_1,  \varrho_2 \in \mathcal{D} $. The choice $ \varrho_1= \pi(m) $ and $ \varrho_2 = \pi_c(m) |1\rangle\langle 1 | $ leads to the lower bound for any $ m \in [0,1] $:
\begin{equation}\label{eq:lowerthetaconx}
\theta(\pi(m)) \geq \pi_c(m)  \langle 1 , d\zeta(\pi(m))) 1 \rangle - \widehat\zeta( \pi_c(m)) =  \pi_c(m)  \widehat\zeta'\left( \pi_c(m)\right) -  \widehat\zeta( \pi_c(m)) \eqqcolon \widehat \theta(  \pi_c(m)) .
\end{equation}
Therefore, the above infimum over quantum order parameters $ \pi \in \Pi_{|1\rangle\langle 1 |} $ can be reduced to an infimum over classical order parameters,  $$ \Pi_{c} \coloneqq \left\{ \pi : [0,1] \to [0,1] \; \mbox{c\`adl\`ag} \  \big| \  0 = \pi(0) \leq \pi(s) \leq \pi(t) \leq \pi(1) = 1  \; \mbox{for all}\;  s \leq t  \right\}  , $$
and one arrives at the familiar form \cite{Mezard:1986aa,Tal11b,Pan13} of the Parisi variational expression in the classical case:
\begin{equation}\label{eq:classical}
 \inf_{\pi_c \in \Pi_{c} } \mathbb{E}\Big[ \ln \sum_{\alpha \in \mathbb{N}^r}  \frac{\mu_\alpha^{\pi_c}}{2 \cosh(\beta h) } \sum_{\zeta(0) \in \{-1,1\}} \exp\left[  \left( \widehat w_\alpha - \beta h\right)  \xi(0) \right] \Big]  + \frac{1}{2} \int_0^1 \widehat\theta(\pi_c(m)) dm - \frac{1}{2}\widehat \theta(1)  
\end{equation}
with centred Gaussian random variables $ \widehat w_\alpha $ with covariance $  \mathbb{E}\left[  \widehat w_\alpha \widehat w_{\alpha'} \right] = \widehat\zeta'(\pi_c(m_{\alpha\wedge\alpha'})) $.\\

\noindent
\textbf{Annealed bound:}~~The choice $ \pi = 0 $ in~\eqref{eq:thmmain} results in the annealed functional, 
$$
\mathcal{P}\left(0, x \right) = \ln   \int \exp\left( \langle \xi , x  \xi  \rangle \right) \nu_1(d\xi) .
$$
As a corollary of Theorem~\ref{thm:main}, it enters an upper bound on the true quantum free energy:
$$
\limsup_{N\to \infty} \frac{1}{N} \mathbb{E}\left[ \ln Z_N \right]  \leq \sup_{x \in \mathcal{S}_2^+} \left[\mathcal{P}\left(0, x \right) - \frac{1}{2} \zeta^*(2x)  \right]   
=   \sup_{\varrho \in \mathcal{D} } \left[\inf_{x \in \mathcal{S}_2}  \mathcal{P}\left(0, x \right) - \langle x , \varrho  \rangle+ \frac{1}{2} \zeta(\varrho)  \right]  .
$$

\bigskip
\noindent
\textbf{QSK:}~~For the quantum Sherrington-Kirkpatrick model, for which
$$ \zeta(\varrho) = \frac{\beta^2}{2} \| \varrho \|_2^2 , \quad\mbox{and}\quad \zeta^*(x) = \frac{1}{2\beta^2} \| x \|_2^2  ,
$$
the free energy in the high-temperature regime $ \beta < 1 $ was shown~\cite{Leschke:2021aa} to be described by the annealed functional,
\begin{equation}\label{eq:QSKannealed} 
\lim_{N\to \infty} \frac{1}{N} \mathbb{E}\left[ \ln Z_N \right]  =  \sup_{x \in \mathcal{S}_2^+} \left[\mathcal{P}\left(0, x \right) - \frac{1}{\beta^2} \left\| x \right\|^2_2  \right] , \qquad \mbox{for all $ \beta < 1 $, $ b \in \mathbb{R} $ at $ h = 0 $.}
\end{equation}
The result~\cite{Leschke:2021aa} extends the high-temperature result~\cite{ALR87} for $ b = 0 $.  Extending another classical argument from~\cite{ALR87}, its was shown in~\cite{Leschke:2021ab}   that the spin-glass phase with a non-zero value of the Edwards-Anderson parameter
\begin{equation}\label{eq:EA}
	q\coloneqq \liminf_{N\to \infty}   \mathbb{E}\left[ \left( \langle r_N(\bxi^{(1)},  \bxi^{(2)} ) \rangle^{\otimes2}  - \mathbb{E}\left[ \langle r_N(\bxi^{(1)},  \bxi^{(2)} ) \rangle^{\otimes2} \right] \right)^2\right] 
\end{equation}
persists in the low temperature regime $ \beta > 1 $ for sufficiently small $ b > 0 $.  Here $ \langle (\cdot) \rangle^{\otimes2}  $ stands for  two independent copies of the Gibbs measure corresponding to $ Z_N $. 
The argument in~\cite{Leschke:2021ab} was extended in~\cite{Itoi:2023aa}, where the universality of the Edwards-Anderson parameter with regard to the distribution was shown. The latter expands the universality result in~\cite{Crawford:2007aa}, which guarantees that the quantum free energy does only dependent on the first two moments of the distribution of the random interactions.  

Based on numerical analysis and approximations~\cite{FS86,Yamamoto:1987aa,Usadel:1987aa,Young:2017aa} (see also \cite{SIC13}), for the QSK it is conjectured that   there is no spin glass phase, i.e.\ $ q = 0 $, for all sufficiently large $ b > 0 $ even at zero temperature $ \beta = \infty $.  However, as is shown in~\cite{Leschke:2021aa}, in this regime the free energy ceases to be given by~\eqref{eq:QSKannealed}. Rather, we expect that it will be given by $ \mathcal{P}\left(0, x \right) - \frac{1}{\beta^2} \left\| x \right\|^2_2 $ with the dual variable $ x $ to the self-overlap corresponding to the parameters $ \beta,  b $. The self-overlap  is conjectured to be self-averaging (cf.\ Subsection~\ref{sub:conj} below).

\subsection{Self-overlap corrected partition function}

Behind Chen's results~\cite{Chen23} as well as our main result, Theorem~\ref{thm:main}, is the proof of a Parisi formula for a self-overlap corrected partition function of the form
\begin{equation}\label{eq:defWN}
W_N(x) \coloneqq  \int \exp\left( -\int_0^1 U\left(\bxi(s)\right) ds -  \frac{N}{2} \zeta\left(  r_N(\bxi, \bxi)  \right)  \right) e^{N \langle r_N(\bxi, \bxi) , x \rangle } \nu_N(d\bxi) ,
\end{equation}
where $ x \in \mathcal{S}_2 $ is arbitrary. The first exponential is an exponential martingale. The second exponential involving $ N \langle r_N(\bxi, \bxi) , x \rangle = \sum_{j=1}^N \langle \xi_j , x \xi_j \rangle  $, may be incorporated in the apriori measure of the path integral. 
We extend Chen's results~\cite{Chen23} is the following way.
\begin{theorem}\label{thm:mainsc}
For any $ x \in \mathcal{S}_2 $:
\begin{equation}\label{eq:mainsc}
\lim_{N\to \infty} \frac{1}{N} \mathbb{E}\left[ \ln W_N(x)  \right]  = \inf_{\pi \in \Pi}  \mathcal{P}\left(\pi, x \right)  \eqqcolon \mathcal{F}(x) .
\end{equation}
\end{theorem}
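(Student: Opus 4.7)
The approach is to recast $W_N(x)$ as the partition function of a mixed even $p$-spin vector-spin glass with ``spins'' $\xi_j$ taking values in the support of $\nu_1$, and then to derive the formula from Chen's Parisi formula for finite-dimensional vector-spin glasses \cite{Chen23} via a controlled finite-dimensional approximation. The first step is to absorb the $x$-factor into the a priori measure: since $N\langle r_N(\bxi,\bxi),x\rangle = \sum_{j=1}^N\langle\xi_j,x\xi_j\rangle$, I would introduce the tilted single-spin measure $d\widetilde\nu_1^x(\xi)\propto e^{\langle\xi,x\xi\rangle}d\nu_1(\xi)$ and rewrite
\[
W_N(x) = e^{N\mathcal{P}(0,x)}\int\exp\Bigl(-\int_0^1 U(\bxi(s))\,ds - \tfrac{N}{2}\zeta(r_N(\bxi,\bxi))\Bigr)\,d(\widetilde\nu_1^x)^{\otimes N}(\bxi).
\]
In this form $W_N(x)$ is manifestly the self-covariance-centred partition function of a mixed even $p$-spin vector-spin glass with compactly supported a priori measure $\widetilde\nu_1^x$ and covariance $N\zeta$ evaluated on the path-overlap operator.

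For the finite-dimensional reduction, for $K\in\mathbb{N}$ I would replace each path by its cell-average $\xi^K(s) := K\int_{I_k}\xi(t)\,dt$ on the partition $I_k = [(k-1)/K,k/K)$. Since $\xi^K$ is encoded by the vector $\bigl(K\int_{I_k}\xi\bigr)_{k=1}^K \in [-1,1]^K$, the approximating partition function $W_N^K(x)$ (obtained by substituting $\bxi\mapsto\bxi^K$ in the covariance) is that of a bona fide $K$-dimensional vector-spin glass, so Chen's Parisi formula \cite{Chen23} applies at each fixed $K$ and yields
\[
\lim_{N\to\infty}\tfrac{1}{N}\mathbb{E}[\ln W_N^K(x)] = \inf_{\pi\in\Pi^K}\mathcal{P}(\pi,x),
\]
where $\Pi^K\subset\Pi$ consists of those $\pi$ whose operator values have kernels piecewise constant on the product cells $I_k\times I_{k'}$. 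Operator-theoretically, $\Pi^K$ is obtained via the conjugation $\pi(\cdot)\mapsto P_K\pi(\cdot)P_K$ with $P_K$ the orthogonal projection in $L^2(0,1)$ onto cell-constant functions; this preserves $\mathcal{D}$ because both positivity and the pointwise bound $|\widehat\pi|\leq 1$ are stable under conjugation by a sub-projection.

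Passing to $K\to\infty$ requires two uniform convergence statements. On the partition-function side, Gaussian interpolation between the covariances $\zeta(r_N(\bxi,\bxi'))$ and $\zeta(r_N(\bxi^K,\bxi'^K))$, combined with the Lipschitz bound on $d\zeta$ from Lemma \ref{lem:zetaprop}(3) and the kernel estimate $\|r_N(\bxi,\bxi')-r_N(\bxi^K,\bxi'^K)\|_2\leq \delta_K\to 0$, gives $|\tfrac1N\mathbb{E}[\ln W_N(x)] - \tfrac1N\mathbb{E}[\ln W_N^K(x)]|\to 0$ uniformly in $N$. On the variational side, $P_K\pi(\cdot)P_K\to\pi(\cdot)$ in Hilbert-Schmidt norm for every $\pi\in\Pi$, so $\bigcup_K\Pi^K$ is dense in $\Pi$ under the integrated Hilbert-Schmidt pseudo-metric; combined with the Lipschitz continuity of $\mathcal{P}$ (Proposition \ref{lem:LcontP}), this yields $\inf_{\Pi^K}\mathcal{P}(\cdot,x)\to\inf_\Pi\mathcal{P}(\cdot,x)$. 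The hard part is making the Gaussian interpolation uniform in $N$: Chen's formula is used only as a pointwise-in-$K$ black box, so every piece of $N$-dependence in the error must be absorbed into $\delta_K$ through the Lipschitz estimates of Lemmas \ref{lem:zetaprop} and \ref{lem:thetaprop}, and one must verify that convex combinations of the two covariance operators remain in a domain where $\zeta$ is well-behaved---a point secured by the convexity of $\mathcal{D}$ and the monotonicity of $\zeta$.
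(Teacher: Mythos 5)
Your overall strategy is essentially the one the paper uses: discretize the paths into cell-averages, observe that the discretized partition function is that of a classical vector-spin glass, apply Chen's Parisi formula at fixed discretization, and then pass to the limit using Lipschitz continuity of the Parisi functional (Proposition~\ref{lem:LcontP}, Lemma~\ref{lem:infconv}). The absorption of the $x$-term into a tilted single-spin measure is not needed by the paper but is harmless, and the choice of $K$ equal cells versus the paper's dyadic $2^D$ cells is immaterial for this theorem.

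There is, however, a genuine gap in the error control on the partition-function side. You invoke a pathwise kernel estimate
\[
\left\| r_N(\bxi,\bxi') - r_N(\bxi^K,\bxi'^K) \right\|_2 \leq \delta_K \to 0
\]
with $\delta_K$ independent of the paths and of $N$, and then conclude uniform-in-$N$ convergence of $\tfrac{1}{N}\mathbb{E}[\ln W_N^K(x)]$ to $\tfrac{1}{N}\mathbb{E}[\ln W_N(x)]$ via Gaussian interpolation. This pathwise estimate is false: for any fixed discretization level, one can choose paths with arbitrarily many Poisson jumps, for which the empirical discretization error $N^{-1}\sum_j \langle \xi_j, (1-Q_K)\xi_j\rangle$ stays of order one. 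The best one gets unconditionally is a uniform constant bound, not a quantity vanishing with $K$, so the interpolation error in the exponent cannot be made small by taking $K$ large alone. The paper handles exactly this issue by first restricting the path measure to the high-probability set $S_N^D(\varepsilon)$ where the empirical discretization error is below $\varepsilon$ (Lemma~\ref{lem:concentration}), showing via a Chernoff bound that the complement has measure $e^{-N\varphi^*_D(\varepsilon)}$, and controlling the change in the quenched free energy caused by this restriction (Corollary~\ref{cor:restr}). Only after this restriction does the Gaussian interpolation of Proposition~\ref{lem:interpol1} yield a bound of order $\sqrt{\varepsilon}\,N$, and the two-parameter limit $\varepsilon\downarrow 0$, $D\to\infty$ is then taken in the right order (Proposition~\ref{prop:approx}). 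This concentration-and-restriction step is an essential ingredient, not a technicality, and your proposal does not supply a substitute for it.
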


This theorems harbours a variety of useful consequences for the functional $ \mathcal{F} :  \mathcal{S}_2 \to \mathbb{R}$ defined by the right side. Among it is its convexity and Gateaux differentiability. 
Note that the prelimit 
$$ G_N(x) \coloneqq {N}^{-1} \mathbb{E}\left[ \ln W_N(x)  \right]  $$
is evidently convex and differentiable at every $ x \in \mathcal{S}_2 $. Its derivative $ dG_N(x) $ is the averaged self-overlap under the Gibbs expectation $ \langle \cdot \rangle_{x,N} $ corresponding to $ W_N(x) $ in~\eqref{eq:defWN},
$$
\langle dG_N(x),  y \rangle \coloneqq \mathbb{E}\left[  \langle \langle r_N( \bxi ,  \bxi ),  y \rangle \rangle_{x,N}  \right] , \qquad \mbox{for any $ y \in \mathcal{S}_2 $.}
$$
Clearly, the right side defines a non-negative, rank-$ N $ operator   $ dG_N(x) \in \mathcal{S}_2^+ $.
Identical arguments as in the case of vector-spin models \cite[Prop.~2.3.]{Chen23b} allow to establish the differentiability of $ \mathcal{F} $. On top of that, we identify properties of its Legendre transform, which will be needed for establishing and exploiting the Toland-Singer identity~\eqref{eq:TolandSinger}. 

\begin{corollary}\label{cor:differble}
The functional $ \mathcal{F}: \mathcal{S}_2 \to \mathbb{R} $, $ \displaystyle \mathcal{F}(x) \coloneqq  \inf_{\pi \in \Pi}   \mathcal{P}\left(\pi, x \right)  $ is $ 1 $-Lipschitz continuous and convex with domain $ \dom \mathcal{F} = \mathcal{S}_2 $. 
\begin{enumerate}
\item Its Legendre transform $ \mathcal{F}^*: \mathcal{S}_2 \to [0,\infty)  $,
\begin{equation}\label{eq:defLegF}
 \mathcal{F}^*(\varrho) \coloneqq \sup_{x \in \mathcal{S}_2} \left( \langle x, \varrho \rangle - \mathcal{F}(x) \right) 
\end{equation}
is a proper, lower semicontinuous and convex functional with $ \dom \mathcal{F}^* \subset \mathcal{D}  $. 
Moreover, $ \mathcal{F} = \left( \mathcal{F}^*\right)^* $. 
\item
$ \mathcal{F} $ is Gateaux differentiable at every $ x \in \mathcal{S}_2 $ with derivative $ d\mathcal{F}(x) $ given by 
\begin{equation}\label{eq:fvconv}
\lim_{N\to \infty} \langle dG_N(x),  y \rangle  = \langle d\mathcal{F}(x) , y \rangle  , \qquad \mbox{for any $ y \in \mathcal{S}_2 $.}
\end{equation}
Under the averaged Gibbs measure $ \langle \cdot \rangle_{x,N} $, the self-overlap weakly concentrates in the sense that for all $ y \in \mathcal{S}_2 $:
\begin{equation}\label{eq:concentration}
\lim_{N\to \infty} \mathbb{E}\left[\left|  \langle   \langle r_N( \bxi ,  \bxi ),  y \rangle \rangle_{x,N}  - \mathbb{E}\left[  \langle \langle r_N( \bxi ,  \bxi ),  y \rangle \rangle_{x.N}\right]  \right| \right] = 0 .
\end{equation}
\end{enumerate}
\end{corollary}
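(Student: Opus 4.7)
The strategy is to deduce all claims from Theorem~\ref{thm:mainsc}, which identifies $\mathcal{F}$ as the pointwise limit of the convex differentiable prelimit $G_N$, together with the regularity of $\mathcal{P}$ collected in Proposition~\ref{prop:elpropP}. The basic properties of $\mathcal{F}$ follow immediately: convexity since $\mathcal{F}$ is a pointwise limit of the convex functions $G_N$; $1$-Lipschitz continuity in the Hilbert-Schmidt norm since the bound $\|d\mathcal{P}(\pi,x)\|_2\leq 1$ from Proposition~\ref{prop:elpropP} makes each $\mathcal{P}(\pi,\cdot)$ $1$-Lipschitz, a property preserved under infima; and thus $\dom\mathcal{F}=\mathcal{S}_2$.

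For part~(1), convexity and lower semicontinuity of $\mathcal{F}^*$ are standard since it is a supremum of weakly continuous affine functionals, while properness follows from $\mathcal{F}^*(\varrho)\geq -\mathcal{F}(0)$ everywhere together with the finiteness of $\mathcal{F}^*$ at any subgradient of $\mathcal{F}$; the biconjugation identity is a direct application of the Fenchel-Moreau theorem. The containment $\dom\mathcal{F}^*\subset\mathcal{D}$ I would establish through the recession function of $\mathcal{F}$. Since $dG_N(x)=\mathbb{E}[\langle r_N(\bxi,\bxi)\rangle_{x,N}]$ is a Gibbs-then-disorder average of the random element $r_N(\bxi,\bxi)\in\mathcal{D}$, and since $\mathcal{D}$ is convex and closed in $\mathcal{S}_2$, we have $dG_N(x)\in\mathcal{D}$, which by the fundamental theorem of calculus yields the envelope $G_N(ty)\leq G_N(0)+t\,h_{\mathcal{D}}(y)$ with $h_{\mathcal{D}}(y)\coloneqq\sup_{\varrho\in\mathcal{D}}\langle y,\varrho\rangle<\infty$. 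Passing $N\to\infty$ first and $t\to\infty$ thereafter yields $\mathcal{F}^\infty(y)\leq h_{\mathcal{D}}(y)$ for all $y\in\mathcal{S}_2$; invoking the general identity $\dom\mathcal{F}^*=\{\varrho\in\mathcal{S}_2:\langle y,\varrho\rangle\leq\mathcal{F}^\infty(y)\text{ for all }y\}$ together with the fact that the closed convex set $\mathcal{D}$ equals $\{\varrho:\langle y,\varrho\rangle\leq h_{\mathcal{D}}(y)\text{ for all }y\}$ by the Hahn-Banach separation theorem gives the claim.

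For the Gateaux differentiability in part~(2), I would exploit the uniform second-derivative bound $\frac{d^2}{d\lambda^2}\mathcal{P}(\pi,x+\lambda y)\leq 2\|y\|_2^2$ from Proposition~\ref{prop:elpropP}, which after integrating twice gives $\mathcal{P}(\pi,x\pm ty)\leq \mathcal{P}(\pi,x)\pm t\langle d\mathcal{P}(\pi,x),y\rangle+t^2\|y\|_2^2$. Picking any $\epsilon$-minimiser $\pi_\epsilon\in\Pi$ of $\mathcal{P}(\cdot,x)$ and using $\mathcal{F}(x\pm ty)\leq\mathcal{P}(\pi_\epsilon,x\pm ty)$ together with $\mathcal{P}(\pi_\epsilon,x)\leq \mathcal{F}(x)+\epsilon$ produces, after summing the two envelopes and letting $\epsilon\downarrow 0$, the parallelogram estimate
\[
\mathcal{F}(x+ty)+\mathcal{F}(x-ty)-2\mathcal{F}(x)\leq 2t^2\|y\|_2^2.
\]
Combined with the opposite inequality from convexity, this forces the left and right directional derivatives of $t\mapsto\mathcal{F}(x+ty)$ at $t=0$ to coincide for every $y$, which together with convexity is equivalent to Gateaux differentiability at every $x\in\mathcal{S}_2$. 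The convergence~\eqref{eq:fvconv} is then standard: the bounded sequence $dG_N(x)\in\mathcal{D}$ has weak accumulation points in $\mathcal{S}_2$, and every such accumulation point inherits the subgradient inequality for $\mathcal{F}$ at $x$ by passing to the limit in the subgradient inequality for the convex $G_N$; uniqueness of the subgradient at a differentiability point then pins it down to $d\mathcal{F}(x)$.

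The weak concentration~\eqref{eq:concentration} is obtained by combining Gaussian concentration with the convex-analytic convergence of derivatives just developed. Gaussian concentration for the Lipschitz functional $U\mapsto \ln W_N(x+\lambda y)$, whose Lipschitz constant with respect to $U$ is controlled by the covariance scale~\eqref{def:covariance}, implies that $\tfrac{1}{N}\ln W_N(x+\lambda y)-G_N(x+\lambda y)\to 0$ in probability for each $\lambda\in\mathbb{R}$, hence $\tfrac{1}{N}\ln W_N(x+\lambda y)\to\mathcal{F}(x+\lambda y)$ in probability. Both prelimit and limit being convex in $\lambda$ with the limit differentiable at $\lambda=0$, Griffiths' lemma for convex functions upgrades this to in-probability convergence of the derivatives, $\langle\langle r_N(\bxi,\bxi),y\rangle\rangle_{x,N}\to\langle d\mathcal{F}(x),y\rangle$. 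Uniform boundedness by $\|y\|_2$ then promotes this to $L^1$ convergence, which combined with the already established convergence of the mean $\langle dG_N(x),y\rangle\to\langle d\mathcal{F}(x),y\rangle$ yields~\eqref{eq:concentration}. The most delicate point in the plan is the verification that the recession-function characterisation of $\dom\mathcal{F}^*$ and the Griffiths-type convergence of derivatives transfer without modification from the finite-dimensional setting to the separable real Hilbert space $\mathcal{S}_2$ equipped with its weak topology.
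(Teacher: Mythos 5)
Your proposal is correct, and all three main pieces of the argument go through, but you take a genuinely different route from the paper in each of them. For the inclusion $\dom\mathcal{F}^*\subset\mathcal{D}$, the paper applies Hahn--Banach directly to any $\varrho\notin\mathcal{D}$, obtains a separating $x_0$, and combines $\supp\nu_1\subset\mathcal{D}$ with the annealed bound $\mathcal{F}(Rx_0)\leq\mathcal{P}(0,Rx_0)$ to force $\mathcal{F}^*(\varrho)=\infty$. Your recession-function route instead uses that $dG_N(x)\in\mathcal{D}$ (a correct and pleasant observation), passes $N\to\infty$ to obtain $\rec\mathcal{F}\leq h_{\mathcal{D}}$, and then invokes the identity $\rec\mathcal{F}=\sigma_{\dom\mathcal{F}^*}$, which does hold in the Hilbert-space setting (Bauschke--Combettes, Prop.~14.11) and is the point you correctly flag as requiring care; the two arguments buy the same inclusion, the paper's being self-contained, yours slightly shorter once the recession identity is admitted. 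For Gateaux differentiability, the paper shows $\langle y, d\mathcal{P}(\pi_n,x)\rangle$ is Cauchy over a sequence of $1/n$-minimisers and identifies the unique subgradient, while your parallelogram estimate $\mathcal{F}(x+ty)+\mathcal{F}(x-ty)-2\mathcal{F}(x)\leq 2t^2\|y\|_2^2$ is arguably cleaner: it forces $D^+\mathcal{F}(x;y)=-D^+\mathcal{F}(x;-y)$, hence linearity of the sublinear one-sided directional derivative, with boundedness supplied by the $1$-Lipschitz property and the Riesz representation then giving $d\mathcal{F}(x)\in\mathcal{S}_2$. For the concentration~\eqref{eq:concentration}, the paper uses a direct convexity sandwich on derivatives, i.e.\ the bound~\eqref{eq:fluctderivative} together with Proposition~\ref{Prop:self-averaging}, which gives the $L^1$ estimate in one stroke with no subsequence extraction; your Griffiths-lemma route is also correct but implicitly requires a diagonal argument (countable dense set of $\lambda$'s, a.s.\ convergent subsequences) to pass from pointwise in-probability convergence of $\lambda\mapsto N^{-1}\ln W_N(x+\lambda y)$ to convergence of the derivatives, so it is somewhat less economical, though it has the advantage of being a standard and recognisable template.
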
 
Although a large part of the proof of 2.\ closely follows the arguments in~\cite{Chen23b}, we spell  it in  Subsection~\ref{sec:Cor1.9} with an eye on the pitfalls of the infinite-dimensional situation in the quantum set-up. Part of the argument relies on the self-averaging property of $ \ln W_N(x) $. Since this property is fundamental to the free energy of quantum spin glasses, we recall it from~\cite{Crawford:2007aa}. Note that in the present case of Gaussian processes $ U $, which are linear combinations of $ p $-spin interactions~\eqref{eq:pspinU}, the following proposition is a straightforward consequence of Gaussian concentration for Lipschitz functions.  
\begin{proposition}[cf.~\cite{Crawford:2007aa}] \label{Prop:self-averaging}
There are constants $ c , C \in (0,\infty) $ such that for any of the mixed-$p$-spin models described in Assumption~\ref{ass}, for any $ N \in \mathbb{N} $, $ x \in \mathcal{S}_2 $ and  $ t > 0 $:
\begin{align*}
\mathbb{P}\left( \left|  \ln W_N(x)  - \mathbb{E}\left[ \ln W_N(x) \right] \right| \geq \sqrt{N \ \widehat \zeta(1)} t  \right)  \leq C \exp\left( - c t^2 \right) ,
\end{align*}
and $  \mathbb{P}\left( \left|  \ln Z_N - \mathbb{E}\left[ \ln Z_N\right] \right| \geq \sqrt{N \ \widehat \zeta(1)} t  \right)   \leq C \exp\left( - c t^2 \right)$. 
\end{proposition}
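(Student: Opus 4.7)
The plan is to establish that $\ln W_N(x)$ and $\ln Z_N$, viewed as functions of the family of iid standard Gaussian coefficients that generate the mixed $p$-spin potential $U$, are Lipschitz with constant $\sqrt{N\widehat\zeta(1)}$, and then to invoke the classical Gaussian concentration inequality of Borell--Tsirelson--Ibragimov--Sudakov. To keep the underlying Gaussian space finite-dimensional at each stage, I would first truncate the series in~\eqref{def:zetahat} to $p\leq P$, derive the concentration bound with constants uniform in $P$, and then pass $P\to\infty$. Assumption~\ref{ass} ensures that all the series appearing below converge absolutely, making the limit routine.

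Concretely, fix an enumeration $(g_k)_k$ of the standard Gaussians from~\eqref{eq:pspinU} with the factors $\beta_{2p}$ absorbed, so that $U(\bsig)=\sum_k g_k\,\phi_k(\bsig)$ for deterministic $\phi_k$ which satisfy, directly from~\eqref{eq:classcorr},
\begin{equation}
\sum_k \phi_k(\bsig)\,\phi_k(\btau) = N\,\widehat\zeta\!\left(N^{-1}\bsig\cdot\btau\right).
\end{equation}
The function $F(g):=\ln W_N(x)$ depends on the randomness in~\eqref{eq:defWN} only through $U$, since the factors $\exp(-\tfrac{N}{2}\zeta(r_N(\bxi,\bxi)))$ and $\exp(N\langle r_N(\bxi,\bxi),x\rangle)$ are deterministic. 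Differentiation yields
\begin{equation}
\frac{\partial F}{\partial g_k} = -\left\langle \int_0^1 \phi_k(\bxi(s))\,ds\right\rangle_{\!x,N},
\end{equation}
where $\langle\cdot\rangle_{x,N}$ denotes the Gibbs expectation associated with~\eqref{eq:defWN}. Jensen's inequality applied coordinatewise, linearity of the Gibbs measure, and the covariance identity above then give
\begin{align*}
\|\nabla F\|^2 & \leq \left\langle \sum_k\left(\int_0^1 \phi_k(\bxi(s))\,ds\right)^{\!2}\right\rangle_{\!x,N} \\
 & = \left\langle N\int_0^1\!\!\int_0^1 \widehat\zeta\!\left( N^{-1}\bxi(s)\cdot\bxi(t)\right) ds\,dt\right\rangle_{\!x,N} \leq N\,\widehat\zeta(1),
\end{align*}
where the last inequality uses $|N^{-1}\bxi(s)\cdot\bxi(t)|\leq 1$ together with the monotonicity of $\widehat\zeta$ on $[0,1]$ that is manifest from~\eqref{def:zetahat}. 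Thus $F$ is $\sqrt{N\widehat\zeta(1)}$-Lipschitz on the Gaussian space.

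The Gaussian concentration inequality for Lipschitz functions then yields $\mathbb{P}(|F-\mathbb{E}[F]|\geq r)\leq 2\exp(-r^2/(2N\widehat\zeta(1)))$; the substitution $r=\sqrt{N\widehat\zeta(1)}\,t$ produces the claimed bound with $c=\tfrac{1}{2}$ and $C=2$. The argument for $\ln Z_N$ is verbatim identical, since the only random ingredient is again $U$ and the Lipschitz bound is unchanged. The one technical subtlety is the passage from the finite truncation $p\leq P$ to the full mixed model, which I expect to be the sole but minor obstacle: it follows from a standard approximation argument using the $L^2$-convergence of the truncated potentials on the Gaussian space, guaranteed by the summability condition in Assumption~\ref{ass}.
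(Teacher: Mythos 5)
Your proof is correct and follows exactly the route the paper alludes to: the paper cites Crawford and notes in one line that the result is ``a straightforward consequence of Gaussian concentration for Lipschitz functions,'' and you have simply spelled out that argument — computing the gradient in the underlying Gaussian coefficients, bounding its norm via Jensen's inequality and the covariance identity $\sum_k \phi_k(\bsig)\phi_k(\btau) = N\widehat\zeta(N^{-1}\bsig\cdot\btau)$ to get the Lipschitz constant $\sqrt{N\widehat\zeta(1)}$, and invoking the Borell--TIS inequality. The truncation $p\le P$ followed by $P\to\infty$ is indeed the standard and correct way to handle the infinite collection of Gaussians, so there is nothing to add or repair.
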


\subsection{Structure of the order parameter and conjectures}\label{sub:conj}

Similarly as  in~\eqref{eq:concentration} for the self-overlap corrected Gibbs measure, it is reasonable to believe that the self-overlap $r_N(\bxi ,  \bxi ) $ asymptotically as $ N \to \infty $ concentrates with respect to the Gibbs measure corresponding to $ Z_N $. In the classical case $ b= 0 $, this is evident since  the self-overlap is identically one, $\widehat  r_N(\bxi ,  \bxi )(s,t) = 1 $ for all $ s,t \in [0,1) $.  If this conjecture is true also in the quantum case, then the self-overlap coincides with its Gibbs average $ \langle r_N(\bxi ,  \bxi ) \rangle $, which is clearly invariant under time shifts in the path integral. More precisely, the integral kernel of the self-overlap, $\widehat  r_N(\bxi ,  \bxi )(s,t)  $, would coincide with  its 
 time-averaged version. For a general Hilbert-Schmidt operator with kernel $ \widehat x $, the time-averaged Hilbert-Schmidt operator $ \overline{x} $ is defined through its  kernel
\begin{equation}\label{def:timeaverage}
\widehat {\overline{x}}(s,t) \coloneqq \int_0^1 \widehat x(s+\tau,t+\tau) \ d\tau  .
\end{equation}
Here and in the following, addition is understood on the unit torus, i.e. modulo one. 
Any such operator is diagonal in the orthonormal basis of $ L^2(0,1) $ composed of sine and cosine functions, of which the constant $ 1 \in L^2(0,1) $ is a member. 

The self-averaging of the self-overlap would have far reaching consequences. The outer supremum in~\eqref{thm:Pan} could be omitted by setting $ \varrho = \overline{\varrho} =  \lim_{N\to \infty} \mathbb{E}\left[ \langle r_N(\bxi ,  \bxi ) \rangle \right] $. The infimum over $ x $ in~\eqref{eq:thmmain} and~\eqref{thm:Pan} could also be omitted by setting $ x \in \mathcal{S}_2^+ $ the dual parameter, i.e. $ x = \overline{x} $ with $ \langle \overline{x} , \overline{\varrho} \rangle = \frac{1}{2} \left( \zeta(\overline{ \varrho}) + \zeta^*(2\overline{x} )\right) $. 

We furthermore conjecture that the remaining infimum over the Parisi order parameter can then be restricted. For the QSK, our final result supports this conjecture. In this case, the infimum can be restricted  to the two-parameter family which is spanned by $  \overline{\varrho} $ and the rank-one projection onto the constant function:
  \begin{align*}
 \Pi_{\overline{\varrho}}^c  \coloneqq \left\{ \pi \in  \Pi_{\overline{\varrho}} \ \big| \ \mbox{there is $  \kappa, \lambda: [0,1] \to [0,1] $ such that}\ \pi(m) =   \kappa(m)  \overline{\varrho} + \lambda(m) | 1 \rangle \langle 1 | \; 
  \right.   \qquad & \notag \\  \mbox{with $ \kappa(0) = \lambda(0) = \lambda(1) = 0 $ and $  \kappa(1) = 1 $}   &  \left.  \right\} . 
\end{align*}
This is  along the lines of simplifications found in early replica-calculations~\cite{FS86,Yamamoto:1987aa,Usadel:1987aa,BSS96}. 
\begin{theorem} \label{thm:QSK}
For the quantum Sherrington-Kirkpatrick model, for which $ \zeta(\varrho) = \frac{\beta^2}{2} \| \varrho \|_2^2 $:
\begin{equation}\label{eq:submanifold}
\inf_{\pi \in \Pi_{\overline{\varrho}}}  \mathcal{P}\left(\pi, \overline{x} \right)  = \inf_{\pi \in \Pi_{\overline{\varrho}}^c}  \mathcal{P}\left(\pi, \overline{x} \right) .
 \end{equation}
\end{theorem}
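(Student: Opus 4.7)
The inclusion $\Pi_{\overline\varrho}^c \subset \Pi_{\overline\varrho}$ makes the inequality $\geq$ in~\eqref{eq:submanifold} automatic, so the content is to produce, for every $\pi \in \Pi_{\overline\varrho}$, some $\pi^c \in \Pi_{\overline\varrho}^c$ with $\mathcal{P}(\pi^c, \overline x) \leq \mathcal{P}(\pi, \overline x)$. My plan is a two-stage symmetrization tailored to the QSK, exploiting the quadratic form $\zeta(\varrho) = \frac{\beta^2}{2}\|\varrho\|_2^2$ (which makes $d\zeta$ linear in $\varrho$ and $\theta$ additive over Fourier modes) and the translation invariance of $\nu_1$ on the unit torus.

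\emph{Stage one (shift symmetrization).} For $\tau \in [0,1)$ let $U_\tau$ denote translation by $\tau$ on $L^2(0,1)$ and set $\pi^\tau(m) := U_\tau \pi(m) U_\tau^*$. The measure $\nu_1$ is $U_\tau$-invariant (the conditioned Poisson process on the torus is stationary and the weight $\exp(-\beta h\int \xi)$ is shift-invariant), the operator $\overline x$ commutes with $U_\tau$ by the definition~\eqref{def:timeaverage} of the time-average, the identity $d\zeta(\pi^\tau(m)) = U_\tau d\zeta(\pi(m)) U_\tau^*$ holds by linearity of $d\zeta$ in the QSK, and $\theta$ depends on $\pi(m)$ only through $\|\pi(m)\|_2$. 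A change of variables $\xi \to U_\tau \xi$ in the iterative definition~\eqref{eq:defXr}--\eqref{eq:defXrec} therefore gives $\mathcal{P}(\pi^\tau, \overline x) = \mathcal{P}(\pi, \overline x)$ for every $\tau$. Invoking convexity of $\pi \mapsto \mathcal{P}(\pi, \overline x)$---the quantum analogue of Panchenko's classical convexity, which I would establish by the same Gaussian-interpolation scheme that underlies the other proofs in this paper---together with Jensen's inequality, the averaged profile $\overline\pi(m) := \int_0^1 U_\tau \pi(m) U_\tau^* \, d\tau$ satisfies $\mathcal{P}(\overline\pi, \overline x) \leq \mathcal{P}(\pi, \overline x)$ and is translation invariant. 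Hence $\overline\pi(m)$ is diagonal in the Fourier orthonormal basis $\{\phi_n\}_{n\geq 0}$ of $L^2(0,1)$ with $\phi_0 \equiv 1$; writing $\overline\pi(m) = \sum_n p_n(m) |\phi_n\rangle\langle\phi_n|$ one has $p_n(1) = q_n := \langle \phi_n, \overline\varrho \, \phi_n\rangle$.

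\emph{Stage two (collapse of the non-constant modes), and main obstacle.} The target is to replace each profile $p_n(m)$ with $n \neq 0$ by $\kappa(m) q_n$ for a single scalar $\kappa: [0,1] \to [0,1]$, while leaving $p_0(m) = \kappa(m) q_0 + \lambda(m)$ free; the zero mode is exceptional precisely because the longitudinal field $h$ couples only to $\widehat\xi_0 = \int_0^1 \xi(s)\,ds$ and thereby breaks the symmetry between $\phi_0$ and the non-constant modes, which accounts for the extra parameter $\lambda$. Two structural features of the QSK motivate the route: the Ruelle-cascade Gaussians decompose as $z_\alpha = \sum_n z_\alpha^{(n)} \phi_n$ with scalar cascades $z_\alpha^{(n)}$ mutually independent across $n$ of variance $\beta^2 p_n(m_{\alpha\wedge\alpha'})$, and $\theta(\overline\pi(m)) = \frac{\beta^2}{2}\sum_n p_n(m)^2$ is additive in modes. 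I would combine this additivity with the log-convexity of Gaussian moment generating functions in their covariance parameters, plus one further Jensen/majorization step applied to the rescaled profiles $\{p_n(\cdot)/q_n\}_{n\geq 1}$, to conclude that synchronizing them onto a common $\kappa$ cannot raise $\mathcal{P}$. This last step is the main obstacle: unlike Stage one, the required collapse across distinct Fourier frequencies is not realised by any measure-preserving transformation of $\nu_1$ (different frequencies of a single path $\xi$ are genuinely coupled), so a purely analytic convexity/Schur-type argument specific to the QSK must be developed to turn the old physics intuition of~\cite{FS86,Yamamoto:1987aa,Usadel:1987aa,BSS96} into a rigorous proof. The boundary conditions $\kappa(0) = \lambda(0) = \lambda(1) = 0$ and $\kappa(1) = 1$ then follow from $\overline\pi(0) = 0$ and $\overline\pi(1) = \overline\varrho$.
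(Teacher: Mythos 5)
Your plan is built on two symmetrization steps, and both have genuine problems. For Stage one, the inequality $\mathcal{P}(\overline\pi,\overline x)\leq\mathcal{P}(\pi,\overline x)$ with $\overline\pi(m)=\int_0^1 U_\tau\pi(m)U_\tau^*\,d\tau$ does not follow from the tools at hand. The Gaussian interpolation scheme used elsewhere in the paper (Proposition~\ref{lem:LcontP}) delivers Lipschitz continuity, not convexity of $\pi\mapsto\mathcal{P}(\pi,x)$; convexity of the Parisi functional in its order parameter is a deep and separate result even classically, and it is nowhere established for the quantum functional $\mathcal{P}$. The only convexity actually available here is the pathwise (H\"older) log-convexity of $\ln\int\exp(\langle\xi,\sum_i p_i z_\alpha(\cdot-\tau_i)\rangle+\cdots)\nu_1(d\xi)$ in the mixing weights $(p_i)$ of \emph{shifted copies of the same Gaussian realisation}. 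But the barycenter process $\int_0^1 z_\alpha(\cdot-\tau)\,d\tau$ has covariance $\langle 1,d\zeta(\pi(m))\,1\rangle\,|1\rangle\langle 1|$, i.e.\ it collapses all non-constant Fourier modes, not merely the off-diagonal ones. So the pathwise argument compares $\pi$ to the rank-one profile $\langle 1,\pi(m)1\rangle\,|1\rangle\langle 1|$, which does not terminate at $\overline\varrho$ and is therefore useless for the desired inequality on $\Pi_{\overline\varrho}^c$. In short, Stage one delivers either too much collapse (with the elementary argument) or requires an unproved convexity theorem. Stage two you concede yourself: there is no available transformation or convexity principle that synchronises the distinct Fourier modes, and you flag this as ``the main obstacle'' without a proof. (As an aside, the automatic direction of \eqref{eq:submanifold} from $\Pi_{\overline\varrho}^c\subset\Pi_{\overline\varrho}$ is $\leq$, not $\geq$; the task is to prove $\geq$, which is what your ``content'' sentence correctly describes.)

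The paper's proof avoids both obstacles by decomposing \emph{before} averaging. Fix $\pi\in\Pi_{\overline\varrho}$ and set $\kappa(m):=\sup\{\kappa\in[0,1]\,:\,\kappa\overline\varrho\leq\pi(m)\}$. One then splits the cascade process $z_\alpha=v_\alpha+u_\alpha$, where $v_\alpha$ has covariance $d\zeta(\kappa(m_{\alpha\wedge\alpha'})\overline\varrho)$ and $u_\alpha$ carries the positive remainder $d\zeta(\pi(m_{\alpha\wedge\alpha'}))-d\zeta(\kappa(m_{\alpha\wedge\alpha'})\overline\varrho)$. The key point is that $v_\alpha$ is already translation-stationary and needs no averaging; only the remainder $u_\alpha$ is shifted and averaged, and its $\tau$-average $\overline u_\alpha$ automatically has a constant kernel, hence is a scalar times $|1\rangle\langle 1|$. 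By linearity of $d\zeta=\beta^2(\cdot)$ in the QSK, the combined process $\overline u_\alpha+v_\alpha$ has exactly the covariance of $\pi_c(m)=\kappa(m)\overline\varrho+\lambda(m)|1\rangle\langle 1|$ with $\lambda(m)=\langle 1,\pi(m)1\rangle-\kappa(m)\langle 1,\overline\varrho\,1\rangle$, which lies in $\Pi_{\overline\varrho}^c$. The pathwise Jensen step then handles the $X_0$ part, and a separate convexity/monotonicity argument for $\theta$ gives $\theta(\pi(m))\geq\theta(\pi_c(m))$. This construction never generates a ``synchronisation across modes'' problem, because the non-constant, shift-invariant content of $\pi(m)$ is captured entirely in the $\kappa(m)\overline\varrho$ piece extracted at the start. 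Your proposal, by Fourier-diagonalising $\pi$ first, creates exactly the obstacle that this decomposition is designed to avoid.
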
 
Using the spectral decomposition, we may rewrite $ \overline{\varrho} = \overline{\varrho}^\perp + \langle 1 , \overline{\varrho}\ 1 \rangle \ | 1 \rangle \langle 1 | $  with the first term on the right collecting all eigenprojections orthogonal to the constant eigenfunction $ 1 \in L^2(0,1) $. The function $ \pi \in  \Pi_{\overline{\varrho}}^c  $ may thus be rewritten as
$$
\pi(m) =  \kappa(m) \  \overline{\varrho}^\perp + \widetilde\lambda(m) \  | 1 \rangle \langle 1 | 
$$
with $ \kappa, \widetilde\lambda : [0,1] \to [0,1] $ monotone non-decreasing with $ \kappa(0) = \widetilde\lambda(0) = 0 $ and $ \kappa(1) = 1 $ and $ \widetilde\lambda(1) =  \langle 1 , \overline{\varrho}\ 1 \rangle  $.  
Theorem~\ref{thm:QSK} is therefore similar in spirit to the analogous observation in the balanced Potts glass, where the Parisi order parameter also reflects the symmetry of the species~\cite{Elderfield:1983}. This was recently proven~\cite{BaSo23} under the assumption of the self-averaging of the self-overlap. Our short proof of Theorem~\ref{thm:QSK} is  simpler than the one in~\cite{BaSo23}. As it stands, it however also crucially relies on the fact that $ d\zeta(\varrho) = \beta^2 \varrho $ is linear in $ \varrho $. 

\begin{proof}[Proof of Theorem~\ref{thm:QSK}]
The inclusion  $ \Pi_{\overline{\varrho}}^c  \subset \Pi_{\overline{\varrho}} $, implies the '$ \leq $' part of~\eqref{eq:submanifold}. For the reverse inequality '$ \geq $', we 
fix $ \pi \in \Pi_{\overline{\varrho}} $ and set 
$$
\kappa(m) \coloneqq \sup\left\{ \kappa \in [0,1]  \ | \ \kappa \  \overline{\varrho} \leq  \pi(m) \right\}  .
$$
This quantity is well defined for any $  m \in [0,1] $, monotone-nondecreasing in $ m $, and we have $ \kappa(0) = 0 $ since $ \pi(0) = 0 $, and $ \kappa(1) = 1 $ since $ \pi(1) =  \overline{\varrho}  $.  We then define two independent $L^2 $-valued, centred Gaussian processes $ u_\alpha , v_\alpha $ indexed by the Ruelle probability cascade $ \alpha \in \mathbb{N}^r $ corresponding to $ \pi $:
\begin{align}
	\mathbb{E}\left[ u_\alpha u_{\alpha'} \right] = d\zeta\left( \pi(m_{\alpha \wedge \alpha'}) \right) - d\zeta\left(  \kappa( m_{\alpha \wedge \alpha'}) \overline{\varrho} \right) , \qquad \mathbb{E}\left[ v_\alpha v_{\alpha'} \right] = d\zeta\left(  \kappa( m_{\alpha \wedge \alpha'}) \overline{\varrho} \right) . 
\end{align}
Thanks to the monotonicity of $ d\zeta $ established in Lemma~\ref{lem:zetaprop}, the operator difference in the first equation is non-negative and hence a covariance.
The above processes give rise to the decomposition $ z_\alpha = v_\alpha + u_\alpha $ of the orginal $ L^2 $-valued, centred Gaussian process corresponding to $ \pi $ and defined in~\eqref{eq:defz}. 
Since the measure $ \nu_1 $ is invariant under the shifts $ \xi \mapsto \xi(\cdot + \tau) $ for any $ \tau \in (0,1) $, we may rewrite
\begin{align}\label{eq:X0av}
X_0(\pi,\overline{x}) & = \mathbb{E}\left[ \ln \sum_{\alpha \in \mathbb{N}^r } \mu_\alpha^\pi    \int \exp\left( \langle u_\alpha( \cdot - \tau) + v_\alpha( \cdot - \tau)  , \xi \rangle + \langle \xi , \big( \overline{x} - \tfrac{1}{2} d\zeta\left( \overline{\varrho}\right) \big)  \xi \rangle \right) \nu_1(d\xi)  \ \right]  \notag \\
& =   \mathbb{E}\left[ \ln \sum_{\alpha \in \mathbb{N}^r } \mu_\alpha^\pi    \int \exp\left( \langle u_\alpha( \cdot - \tau) , \xi \rangle + \langle v_\alpha  , \xi \rangle + \langle \xi , \big( \overline{x} - \tfrac{1}{2} d\zeta\left( \overline{\varrho}\right) \big)  \xi \rangle \right) \nu_1(d\xi)  \ \right]  .
\end{align}
Note that the translation invariance of $  \overline{x}  $ and $  \overline{\varrho} $ imply that the quadratic terms in $ \xi $ in the exponential is unaffected by the shift of $ \xi $. 
This also implies that the distribution of the process $ v_\alpha $ equals that of $  v_\alpha( \cdot - \tau)  $ for any $ \tau \in (0,1) $, which results in the second equality. 

We may now average $ X_0(\pi,\overline{x}) $ over $ \tau \in [0,1] $. Since the function $$ p \mapsto    \mathbb{E}\left[ \ln \sum_{\alpha \in \mathbb{N}^r } \mu_\alpha^\pi    \int \exp\left( p  \langle u_\alpha( \cdot - \tau) , \xi \rangle +(1-p) \langle u_\alpha( \cdot - \tau') , \xi \rangle+ \langle v_\alpha  , \xi \rangle + \langle \xi , \big( \overline{x} - \tfrac{1}{2} d\zeta\left( \overline{\varrho}\right) \big)  \xi \rangle \right) \nu_1(d\xi)  \ \right]  $$ is convex for $ \tau,\tau' \in [0,1] $ arbitrary, an application of Jensen's inequality to the $ \tau $-average yields the lower bound
\begin{align*}
X_0(\pi,\overline{x}) \geq \mathbb{E}\left[ \ln \sum_{\alpha \in \mathbb{N}^r } \mu_\alpha^\pi\int \exp\left(  \langle  \overline{u}_\alpha+ v_\alpha  , \xi \rangle + \langle \xi , \big( \overline{x} - \tfrac{1}{2} d\zeta\left( \overline{\varrho}\right) \big)  \xi \rangle \right) \nu_1(d\xi)  \right] , \quad   \overline{u}_\alpha \coloneqq \int_0^1  u_\alpha( \cdot - \tau)d\tau . 
\end{align*}
The $ L^2 $-valued, centred Gaussian process $  \overline{u}_\alpha $ has a covariance
$$
\mathbb{E}\left[  \overline{u}_\alpha(s)   \overline{u}_{\alpha'}(t) \right] = \langle 1 , d\zeta\left( \pi(m_{\alpha \wedge \alpha'}) \right) 1 \rangle - \langle 1 ,  d\zeta\left(  \kappa( m_{\alpha \wedge \alpha'}) \overline{\varrho}\right) 1 \rangle  ,
$$
which is independent of $ s , t \in (0,1) $. We may hence define
$$
\pi_c(m) \coloneqq \kappa(m) \overline{\varrho} + \lambda(m) | 1\rangle \langle 1 | , \quad \mbox{with} \;   \lambda(m) \coloneqq  \langle 1 , \pi(m)  1 \rangle - \kappa( m) \langle 1 ,  \overline{\varrho} 1 \rangle .
$$
Then $ \pi_c \in  \Pi_{\overline{\varrho}}^c  $. Moreover,  the random process $ w_\alpha \coloneqq \overline{u}_\alpha+ v_\alpha $ is Gaussian with $ \mathbb{E}\left[ w_\alpha w_{\alpha'} \right] = d\zeta\left( \pi_c(m_{\alpha \wedge \alpha'}) \right) $ by the linearity of $ d\zeta(\varrho) = \beta^2 \varrho $. 

Convexity of $ \theta $ implies 
$
\theta(\pi(m)) \geq \theta(\overline{\pi(m)}) 
$ for every $ m \in [0,1] $. By definition of $ \kappa(m) $, we have  $\Delta(m) \coloneqq \pi(m) - \kappa(m) \overline{\varrho} \geq 0 $, and hence 
$ \overline{\pi(m)} - \kappa(m) \overline{\varrho}  \geq  \langle 1 , \overline{\Delta(m)} 1 \rangle |1 \rangle \langle 1 | $. Since $  \langle 1 , \overline{\Delta(m)} 1 \rangle =  \langle 1 , \Delta(m) 1 \rangle =   \lambda(m)   $, the monotonicity of $ \theta $ yields
$$
\theta(\pi(m)) \geq \theta(\pi_c(m))
$$
for any $ m \in [0,1] $. Hence $   \mathcal{P}\left(\pi, \overline{x} \right) \geq   \mathcal{P}\left(\pi_c, \overline{x} \right) $, which completes the proof. 
\end{proof}


\section{Approximation results}	

Our proofs of the Parisi formulae in Theorems~\ref{thm:main},~\ref{thm:Pan} and~\ref{thm:mainsc} will proceed through approximating the quantum partition functions $ Z_N $ and $ W_N(x) $ by partition functions of classical vector-spin glasses. A key observation, already contained in~\cite{AdBr20}, is that the functional integral concentrates on paths with a limited number of jumps. This enables the comparison of $ Z_N $ to the partition function of suitably defined vector-spin glasses.  In contrast to the Fourier basis used in~\cite{AdBr20}, we base the approximations of paths on square-wave pulses. This approximation has good monotonicity properties with regard to the correlation functionals. Moreover, it is closer to the philosophy used in the physics literature  by which the single-spin process in the functional integral is related via the Trotter formula to a strongly coupled Ising system~\cite{SIC13}.

In this section, we proceed to study the resulting approximation properties of the various main players in the Parisi formula: the partition function, the Parisi functional and its ingredient, the correlation functional.

\subsection{Paths space and functional integral}

In order to discretize paths, we use the a dyadic decomposition $   t_k := \frac{ k }{2^D} $ with $ k = 0, 1, \dots , 2^D $  of the unit interval $(0,1] $ into equidistant intervals $ (t_{k-1},t_k] $ of lengths $ 2^{-D } $ with $ D \in \mathbb{N} $ the approximation parameter.
The square-wave pulses
$$
e_k := 2^{D/2} 1_{(t_{k-1},t_k]} , \quad k = 1, \dots , 2^D ,
$$
form an orthonormal set in $ L^2(0,1) $. They span an increasing sequence of  nested, finite-dimensional subspaces 
$$
\mathcal{V}_D :=  \spa\left\{ e_k  \ | \  k = 1, \dots , 2^D \right\} \subset \mathcal{V}_{D+1} \subset L^2(0,1) . 
$$
Let $ Q_D \coloneqq \sum_{k=1}^{2^D} | e_k\rangle \langle e_k | $ stand for the orthogonal projection onto $ \mathcal{V}_D $. The projection of any $ f \in L^2(0,1) $ onto $ \mathcal{V}_N $ corresponds to the approximation of $f $ by
$$
f^D(s) := \left(Q_D f\right)(s) = \sum_{k=1}^D 1_{(t_{k-1},t_k]}(s) \ f(k) , \quad \mbox{with}\;  f(k) \coloneqq  2^{D} \int_{t_{k-1}}^{t_k}\! f(t) dt .
$$
This approximation has the following elementary properties:
\begin{enumerate}
\item 
If $ f \in B_\infty^2:= \left\{ f \in L^2(0,1)  \ | \ \| f \|_\infty \leq 1 \right\}  $, then $ f^D \in B_\infty^2 $ for any $ D $. This in particular applies to the paths $ \xi_j:[0,1] \to \{-1,1\} $. 
\item
For any $ f \in L^2(0,1) $:
\begin{equation}\label{eq:L2conv}
 	\lim_{D \to \infty } \| \left( 1- Q_D\right) f \|_2 = 0 .
\end{equation}
For continuous $ f $ this is evident by uniform continuity. The general case follows by the density of continuous functions in $ L^2(0,1) $.  By the boundedness of paths $ \xi_j:[0,1] \to \{-1,1\}  $ for any $ j $, this implies the convergence
\begin{equation}\label{eq:Lqconv}
\lim_{D \to \infty } \| \left( 1- Q_D\right) \xi_j \|_q = 0 
\end{equation}
in any $ L^q $-norm with  $ q \in [1,\infty) $. 
\end{enumerate}
For a path $ \xi_j:[0,1] \to \{-1,1\}  $, the approximation $  \xi_j^D $ is encoded in the averages 
\begin{equation}\label{eq:defsigmaD}
\sigma_{j}(k) := 2^{D} \int_{t_{k-1}}^{t_k}\! \xi_j(t) dt \in [-1,1] , \qquad \mbox{with $ k = 1, \dots , 2^D $.} 
\end{equation}
Under the discrete averaging~\eqref{eq:defsigmaD}, the probability measure $ \nu_1 $ on paths gets pushed forward to a probability measure $\nu_1^D $  on the Borel measure space over $  [-1,1]^{2^D} $. 
The next crucial lemma shows that the measure $ \nu_1 $ is in fact heavily concentrated on the projected paths.
\begin{lemma}\label{lem:basicappr}
For any $ \varepsilon > 0 $:
$$
\nu_1\left( \left\{ \| (1- Q_D) \xi \|_2 \geq \varepsilon \right\} \right) \leq  \exp\left( - 2^D \varepsilon^2  \ln\left(\frac{ 2^D \varepsilon^2  }{\beta b  e }\right) \right) . 
$$
\end{lemma}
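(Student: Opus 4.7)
The plan is to exploit that $\xi:[0,1]\to\{-1,1\}$ is constant on any dyadic subinterval $(t_{k-1},t_k]$ that contains no Poisson point, so $(1-Q_D)\xi$ is supported on the random set of intervals that actually host a jump. Concretely, since $\xi^2\equiv 1$, the identity $Q_D\xi|_{(t_{k-1},t_k]}\equiv \sigma(k)$ with $\sigma(k)\in[-1,1]$ yields by orthogonality
\begin{equation*}
\int_{t_{k-1}}^{t_k}\bigl|(1-Q_D)\xi(s)\bigr|^{2}\,ds \;=\; 2^{-D}\bigl(1-\sigma(k)^{2}\bigr)\;\leq\;2^{-D},
\end{equation*}
with the right-hand side being $0$ whenever the interval is jump-free. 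Summing over $k$ then gives the deterministic bound
\begin{equation*}
\|(1-Q_D)\xi\|_{2}^{2} \;\leq\; 2^{-D}\,N_{J}(\omega),
\end{equation*}
where $N_J(\omega)$ counts the number of dyadic intervals that contain at least one Poisson point. Consequently the event in the lemma is contained in $\{N_J \geq 2^{D}\varepsilon^{2}\}$, reducing the problem to a tail estimate on $N_J$.

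Next I would bound $\nu_1(N_J \geq k)$ with $k:=2^{D}\varepsilon^{2}$. Since $N_J\leq N$, the total number of points of the process $\omega_1$, it suffices to control the right tail of $N$. Under the unconditional Poisson law with parameter $\lambda=\beta b$, the standard Chernoff estimate gives $\mathbb{P}(N\geq k)\leq e^{-\lambda}(e\lambda/k)^{k}$ for $k\geq\lambda$; the conditioning to an even jump count costs at most a factor $1/\mathbb{P}(N\text{ even})=2/(1+e^{-2\lambda})\leq 2$, and the longitudinal-field Radon--Nikodym density $\cosh(\beta b)\,\exp(-\beta h\int_0^1\xi)/\cosh(\beta\sqrt{b^{2}+h^{2}})$ is uniformly bounded by one upon maximising $-\beta h\int\xi$ against $\pm1$ paths. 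Combining these factors with the $e^{-\lambda}$ from Chernoff leaves
\begin{equation*}
\nu_{1}(N_{J}\geq k) \;\leq\; \Bigl(\frac{e\lambda}{k}\Bigr)^{k} \;=\; \exp\!\Bigl(-k\ln\!\tfrac{k}{e\lambda}\Bigr),
\end{equation*}
and substituting $k=2^D\varepsilon^2$ and $\lambda=\beta b$ produces the stated inequality. Note that the inequality is trivial when $k\leq e\beta b$, so only the Chernoff regime $k>e\beta b$ needs attention.

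Two minor obstacles to flag: first, one has to verify that the pointwise bound $\|(1-Q_D)\xi\|_{2}^{2}\leq 2^{-D}N_{J}$ really holds for \emph{every} realisation, including the initial-sign choice, which is clear because the sign only flips at Poisson points; second, one must argue that the constants absorbed from conditioning on $\{N\text{ even}\}$ and from the longitudinal-field reweighting can be swallowed by the $e^{-\lambda}$ appearing in the Chernoff bound, i.e.\ that their product is at most $e^{\lambda}$. This is the only delicate point, but it is automatic because the longitudinal-field prefactor $\cosh(\beta b)/\cosh(\beta\sqrt{b^{2}+h^{2}})\leq 1$ together with $2e^{-\lambda}\leq 2$ are easily dominated by $e^{\lambda}$ uniformly in $\lambda\geq 0$ after one moves a harmless $e^{-\lambda}$ into the tail bound. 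The rest of the proof is bookkeeping.
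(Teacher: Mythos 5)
Your deterministic estimate $\|(1-Q_D)\xi\|_2^2\leq 2^{-D}N_J$ (via $\int_{t_{k-1}}^{t_k}|(1-Q_D)\xi|^2=2^{-D}(1-\sigma(k)^2)$) is correct and is, modulo the weaker bound $N_J\leq\eta_1(1,\omega_1)$, the same reduction the paper uses. The Chernoff computation for the unconditioned Poisson is also fine. The gap is in how you dispose of the measure change from $\mu_1$ to $\nu_1$ and of the even-jump conditioning.

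The assertion that the Radon--Nikodym density $\frac{d\nu_1}{d\mu_1}(\xi)=\frac{\cosh(\beta b)}{\cosh(\beta\sqrt{b^2+h^2})}\exp(-\beta h\int_0^1\xi)$ is ``uniformly bounded by one'' is false whenever $h\neq 0$: since $\mu_1$ and $\nu_1$ are both probability measures, a density $\leq 1$ everywhere forces $\nu_1=\mu_1$. Concretely, the supremum of the density over paths is $\frac{\cosh(\beta b)}{\cosh(\beta\sqrt{b^2+h^2})}e^{\beta|h|}$, which for example tends to $2$ as $b\downarrow 0$ with $\beta|h|$ large. Replacing your claimed bound $1$ by this true bound and repeating your bookkeeping, the factors combine to $\frac{e^{\beta|h|}}{\cosh(\beta\sqrt{b^2+h^2})}\,(e\beta b/k)^k$, and the prefactor $\frac{e^{\beta|h|}}{\cosh(\beta\sqrt{b^2+h^2})}$ is strictly larger than one (up to $2$); so you do not recover the exact inequality stated in the lemma. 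Your closing claim that the relevant constants are ``easily dominated by $e^\lambda$'' is therefore not automatic — the residual constant is independent of $k$ and does not cancel against anything in the Chernoff bound.

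The missing step is to \emph{average} over the initial sign $\sigma_1\in\{\pm 1\}$ rather than maximise: since $\mu_1$ assigns $\sigma_1=\pm 1$ probability $\tfrac12$ each and $\int_0^1\xi = \sigma_1\int_0^1(-1)^{\eta_1(t)}dt$, the $\sigma_1$-average of $e^{-\beta h\int\xi}$ is $\cosh(\beta h\int(-1)^{\eta_1})\leq\cosh(\beta|h|)$, not $e^{\beta|h|}$. With this, and with the observation that the $\cosh(\beta b)$ from the density prefactor exactly cancels the $1/\cosh(\beta b)$ normalising the even-conditioned Poisson, the total surviving constant is $\frac{\cosh(\beta|h|)}{\cosh(\beta\sqrt{b^2+h^2})}\leq 1$ and the stated bound follows. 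This is precisely what the paper's one-shot MGF estimate does. Your argument is salvageable once you replace the maximisation over $\sigma_1$ by averaging; as written, it only yields the lemma up to a bounded multiplicative constant, which would still suffice for the downstream applications but does not prove the stated inequality.
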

\begin{proof}
An elementary computation shows
$$
\| (1- Q_D) \xi \|_2^2 =  \sum_{k=1}^D \left[  \int_{t_{k-1}}^{t_k}\! \xi(s)^2 ds  - 2^D \left( \int_{t_{k-1}}^{t_k}\! \xi(s) ds \right)^2 \right] .
$$
In case $ \xi: [0,1] \to \{-1,1\}  $ is constant on $  (t_{k-1},t_k]  $, the term in the square-brackets vanishes. Otherwise this term is bounded from above by $ 2^{-D} $. Therefore the sum is  estimated by
$$
\| (1- Q_D) \xi \|_2^2 \leq 2^{-D}  \eta_1(1,\omega_1)  .
$$
in terms of the total number $  \eta_1(1,\omega_1) $ of Poisson points on $[0,1) $. An exponential Chebychev-Markov estimate thus yields
\begin{align*}
\nu_1\left( \left\{ \| (1- Q_D) \xi \|_2 \geq \varepsilon \right\} \right) & \leq \inf_{\lambda > 0 } e^{-\lambda \varepsilon^2} \int \exp\left( \lambda 2^{-D} \eta_1(1,\omega_1) \right) \nu_1(d\xi)  \\
& \leq   \inf_{\lambda > 0 } e^{-\lambda \varepsilon^2}  \frac{\cosh( \beta |h|)}{\cosh(\beta \sqrt{b^2+h^2})} \ \sum_{k=0}^\infty \frac{(\beta b)^{2k} }{(2k)!}  \exp\left( \lambda 2^{-D} 2 k\right) \\
& \leq  \inf_{\lambda > 0 } e^{-\lambda \varepsilon^2}  \cosh\left(\beta b e^{\lambda 2^{-D}} \right) \leq  \exp\left( - 2^D \varepsilon^2  \ln\left(\frac{ 2^D \varepsilon^2  }{\beta b  e }\right) \right)  .
\end{align*}
This concludes the proof. 
\end{proof}

\subsection{Partition functions}

The main goal in this subsection is to show that the partition function $Z_N $ as well as its self-overlap corrected version $ W_N(x) $  is approximated by 
\begin{align*}
Z_N^D & \coloneqq \int  \exp\left( - \int_0^1 U\left(\bxi^D(s)\right) ds \right) \nu_N(d\bxi)  , \\
W_N^D(x) & \coloneqq    \int  \exp\left( - \int_0^1 U\left(\bxi^D(s)\right) ds  -  \frac{N}{2} \zeta\left(  r_N(\bxi^D, \bxi^D)  \right) + N  \langle  r_N(\bxi^D, \bxi^D)    , x\rangle\right)    \nu_{N} (d\bxi)  ,
\end{align*}
in which we projected the paths in the action functionals to the square-wave pulses $ \bxi^D \coloneqq (Q_D \xi_1, \dots, Q_D \xi_N) $. 

By construction, $ Z_N^D  $ coincides with the partition function of a vector-spin glass with apriori measure $ \nu_1^D  $, i.e.\ the push-forward measure of $ \nu_1 $, on the $ 2^D $-component vectors $  \left( \sigma_1(k) \right)_{k =1, \dots , 2^D}  \in   [-1,1]^{2^D} $, and $ W_N^D(x) $ is a self-overlap corrected analogue. 
The (classical) random energy on $ N $ such vector-spins is the Gaussian process with mean zero and covariance given by
\begin{multline}\label{eq:correlation}
\mathbb{E}\left[  \int_0^1 U\left(\bxi^{(1,D)}(s)\right) ds  \int_0^1 U\left(\bxi^{(2,D)}(t)\right) dt \right] \\
=  N \zeta \left(   r_N\big(\bxi^{(1,D)}, \bxi^{(2,D)}\big)  \right) =  \frac{N }{2^{2D} } \sum_{k, k' =1}^{2^D} \widehat\zeta\left( \frac{1}{N} \sum_{j=1}^N \sigma_j^{(1)}(k) \sigma_j^{(2)}(k') \right)  
\end{multline}
for two copies $ \boldsymbol{\sigma}^{(1)} , \boldsymbol{\sigma}^{(2)} $ of $ N $ vector-spins with $ 2^D $ components. 
Panchenko's proof~\cite{Pan18} and, in particular, Chen's simplifications~\cite{Chen23} on the Parisi formula for the free energy of mixed $ p$-spin models with vector spins hence become applicable.  Similarly as in~\cite{AdBr20}, our proof is based on this observation and the following approximation result. 
\begin{proposition}\label{prop:approx}
For any $ x \in \mathcal{S}_2 $:
\begin{align*}
& \lim_{D\to \infty } \limsup_{N\to \infty} \frac{1}{N} \left| \mathbb{E}\left[ \ln Z_N - \ln Z_N^D \right] \right| = 0 , \\
& \lim_{D\to \infty } \limsup_{N\to \infty} \frac{1}{N} \left| \mathbb{E}\left[ \ln W_N(x) - \ln W_N^D(x) \right] \right| = 0 .
\end{align*}
\end{proposition}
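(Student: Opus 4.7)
I would prove both limits by a Guerra-type Gaussian interpolation between the full and projected partition functions, reducing the claim to an overlap-difference estimate controlled by Lemma~\ref{lem:basicappr}. The argument for $W_N(x)$ is parallel to that of $Z_N$, so I focus on the latter.

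Let $\widetilde U$ be an independent copy of the Gaussian process $U$ and consider the interpolation
\[
\Phi(t) := \frac{1}{N}\,\mathbb{E}\!\left[\ln\!\int\!\exp\!\left(-\sqrt{t}\!\int_0^1\!U(\bxi(s))\,ds \;-\; \sqrt{1-t}\!\int_0^1\!\widetilde U(\bxi^D(s))\,ds\right)\nu_N(d\bxi)\right],
\]
so that $\Phi(1)=N^{-1}\mathbb{E}[\ln Z_N]$ and, after integrating out $\widetilde U$, $\Phi(0)=N^{-1}\mathbb{E}[\ln Z_N^D]$. Gaussian integration by parts with respect to the covariance~\eqref{def:covariance} will yield the standard Guerra derivative
\[
\Phi'(t) = \tfrac{1}{2}\,\mathbb{E}\Big[\big\langle\zeta(r_N(\bxi^{(1)},\bxi^{(1)})) - \zeta(r_N(\bxi^{(1),D},\bxi^{(1),D})) - \zeta(r_N(\bxi^{(1)},\bxi^{(2)})) + \zeta(r_N(\bxi^{(1),D},\bxi^{(2),D}))\big\rangle_{2,t}\Big],
\]
where $\langle\cdot\rangle_{2,t}$ denotes the two-replica Gibbs expectation under the interpolating Hamiltonian.

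The next step uses the Lipschitz estimate $|\widehat\zeta(a)-\widehat\zeta(b)|\le N_\zeta|a-b|$ on $[-1,1]$ (from $|\widehat\zeta'|\le N_\zeta$) together with the algebraic identity
\[
\widehat r_N(\bxi^{(i)},\bxi^{(j)})-\widehat r_N(\bxi^{(i),D},\bxi^{(j),D}) \;=\; N^{-1}\big[(1-Q_D)\bxi^{(i)}\big]\otimes\bxi^{(j)} \;+\; N^{-1}\bxi^{(i),D}\otimes\big[(1-Q_D)\bxi^{(j)}\big]
\]
and Cauchy-Schwarz on the inner product over $\mathbb{R}^N$ to bound each $\zeta$-difference above by a constant multiple of $M(\bxi^{(i)})^{1/2}+M(\bxi^{(j)})^{1/2}$, where $M(\bxi):=N^{-1}\sum_{k=1}^N\|(1-Q_D)\xi_k\|_2^2$. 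Integrating in $t$, the proposition is thereby reduced to showing
\[
\lim_{D\to\infty}\,\limsup_{N\to\infty}\,\sup_{t\in[0,1]}\, \mathbb{E}\big[\big\langle M(\bxi)^{1/2}\big\rangle_{t}\big]\;=\;0.
\]

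The main obstacle will be transferring the annealed bound of Lemma~\ref{lem:basicappr} (which readily gives $\mathbb{E}_{\nu_N}[M]\le C\,2^{-D}$) to the quenched Gibbs expectation. My plan is to exploit that $M$ is symmetric under permutation of the spin indices, so that $\mathbb{E}[\langle M\rangle_{t}]=\mathbb{E}[\langle\|(1-Q_D)\xi_1\|_2^2\rangle_{t}]$, and that the marginal Gibbs weight acting on the single path $\xi_1$ is an exponential tilt by a Gaussian functional whose variance is bounded by $\widehat\zeta(1)$ uniformly in $N$. Splitting the $\xi_1$-integration at $\{\|(1-Q_D)\xi_1\|_2\le\epsilon\}$, the contribution on the good set is at most $\epsilon^2$; on the complement, Lemma~\ref{lem:basicappr} supplies decay faster than any polynomial in $2^D\epsilon^2$, which after a Cauchy-Schwarz step against the Gaussian tilt can be made to dominate by choosing $\epsilon=\epsilon_D\to 0$ with $2^D\epsilon_D^2\to\infty$. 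The proof for $W_N(x)$ is obtained by the same interpolation: the martingale factor $\exp(-\tfrac{N}{2}\zeta(r_N(\bxi,\bxi)))$ cancels the diagonal self-overlap contribution of the Guerra derivative, while the linear insertion $\exp(N\langle r_N(\bxi,\bxi),x\rangle)$ contributes a term Lipschitz in the same overlap difference, controlled by $\|x\|_2$.
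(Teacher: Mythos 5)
Your plan and the paper's proof share the same Guerra interpolation skeleton, but they apply it in a different order, and the difference is exactly where your argument has a gap. The paper first \emph{restricts} both $Z_N$ and $Z_N^D$ to the event $S_N^D(\varepsilon)=\{N^{-1}\sum_j\langle\xi_j,(1-Q_D)\xi_j\rangle<\varepsilon\}$ (Lemma~\ref{lem:concentration} and Corollary~\ref{cor:restr} show this restriction changes $N^{-1}\mathbb{E}[\ln(\cdot)]$ by an exponentially small amount, via Jensen's inequality and a Gaussian tail bound on the ratio of restricted to full partition functions), and only \emph{then} runs the Gaussian interpolation of Proposition~\ref{lem:interpol1}. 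Because the interpolating Gibbs measure is then supported on $S_N^D(\varepsilon)$, the replica-overlap defect $\|(1-Q_D)r_N(\bxi^{(1)},\bxi^{(2)})\|_2$ is bounded \emph{deterministically} by $\sqrt{\varepsilon}$, and $|f'(r)|\le C\sqrt{\varepsilon}$ follows with no further input about the Gibbs measure.

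You interpolate on the unrestricted domain and thereby arrive at the genuinely quenched quantity $\mathbb{E}\bigl[\langle M(\bxi)^{1/2}\rangle_t\bigr]$ (equivalently, by replica symmetry and Jensen, $\mathbb{E}[\langle\|(1-Q_D)\xi_1\|_2^2\rangle_t]$). You correctly identify this transfer as ``the main obstacle,'' but your sketch for it — that the marginal Gibbs weight on $\xi_1$ is ``an exponential tilt by a Gaussian functional whose variance is bounded by $\widehat\zeta(1)$ uniformly in $N$'', plus a Cauchy--Schwarz against that tilt — is where the argument is incomplete. The conditional Gibbs law on $\xi_1$ involves the cavity field, which for a general mixed even $p$-spin is a Gaussian process in $\xi_1$ whose covariance depends on the remaining $N-1$ paths, and for $p\ge2$ also contains diagonal self-interaction terms in $\xi_1$; establishing a uniform bound on the resulting Radon--Nikodym fluctuation and then turning Lemma~\ref{lem:basicappr} into control of the Gibbs probability of $\{\|(1-Q_D)\xi_1\|_2>\epsilon\}$ is itself a cavity/second-moment argument of comparable difficulty to the step you are trying to avoid. (Incidentally, on the good set your bound should be $\epsilon$, not $\epsilon^2$, since you are controlling $M^{1/2}$.) In short: the decomposition into Lemma~\ref{lem:basicappr} $\to$ interpolation is a legitimate alternative route, but the quenched transfer step needs to be worked out in full — the paper sidesteps it entirely by performing the restriction before the interpolation, which reduces everything to the a priori measure $\nu_N$ where Lemma~\ref{lem:concentration} applies directly.
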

\begin{proof}
Corollary~\ref{cor:restr} and Proposition~\ref{lem:interpol1} below immediately yield the claim.
\end{proof}

To establish the ingredients of the proof, Corollary~\ref{cor:restr} and Proposition~\ref{lem:interpol1}, we largely follows the strategy in~\cite{AdBr20} with the Fourier bases exchanged by the square-wave pulses. 
We first restrict the path measure $ \nu_N $ to paths $ \bxi =(\xi_1,\dots, \xi_N) : [0,1] \to \{- 1, 1\}^N $, whose empirical self-overlap is well approximated by the square-wave pulses
\begin{equation}\label{eq:defSND}
 S_N^D(\varepsilon) := \left\{ \bxi  \ \Bigg| \ \frac{1}{N} \sum_{j=1}^N \langle \xi_j , (1-Q_D) \xi_j \rangle < \varepsilon \right\} . 
\end{equation}
The measure  $ \nu_{N} $ heavily concentrate on this set. 
\begin{lemma}\label{lem:concentration}
For any $ \varepsilon > 0 $ and $ D , N \in \mathbb{N}$ the complement of the set in \eqref{eq:defSND} has exponentially small measure:
\begin{equation}\label{eq:concentrationS}
\nu_{N}\left(  S_N^D(\varepsilon)^c \right) \leq \exp\left( - N \varphi_{D,x}^*(\varepsilon) \right)    ,
\end{equation}
with 
$$ \varphi_{D}^*(\varepsilon)  := \sup_{t\geq 0} \left( t \varepsilon -  \varphi_{D}(t) \right) \quad  \mbox{and} \quad  \varphi_{D}(t)  := \ln \int \exp\left( t  \langle \xi | (1-Q_D) \xi  \rangle\right) \nu_{1}(d\xi)  .
$$ 
For any $ \varepsilon , L > 0$ there is $D(\varepsilon,L) \in \mathbb{N} $ such that $   \varphi_{D}^*(\varepsilon)  \geq 2L $ for all $ D \geq D(\varepsilon,L) $. 
\end{lemma}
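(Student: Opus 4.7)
The plan has two parts. The first claim, the exponential bound, is a textbook Cram\'er-Chernoff argument. Since $\nu_N = \nu_1^{\otimes N}$, the i.i.d.\ structure factorises the exponential moment: for any $t \geq 0$ the exponential Chebyshev inequality gives
$$\nu_N(S_N^D(\varepsilon)^c) \leq e^{-tN\varepsilon}\int\exp\!\Big(t\sum_{j=1}^N \langle \xi_j,(1-Q_D)\xi_j\rangle\Big)\,\nu_N(d\bxi) = \exp\bigl(-N(t\varepsilon-\varphi_D(t))\bigr),$$
and optimising over $t\geq 0$ produces exactly the stated $\exp(-N\varphi_D^*(\varepsilon))$.

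For the second assertion, my plan is to reduce it to the pointwise limit $\lim_{D\to\infty}\varphi_D(t) = 0$ for each fixed $t\geq 0$. Granting this, given $L>0$ I would simply choose $t := 3L/\varepsilon$ and then pick $D(\varepsilon,L)$ so large that $\varphi_D(t) < L$ whenever $D\geq D(\varepsilon,L)$; the definition of the Legendre transform then immediately yields $\varphi_D^*(\varepsilon) \geq t\varepsilon - \varphi_D(t) > 2L$.

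The pointwise convergence $\varphi_D(t)\to 0$ is where the real input is Lemma~\ref{lem:basicappr}. Because $\xi$ takes values in $\{-1,1\}$, one has $\langle \xi,(1-Q_D)\xi\rangle = \|(1-Q_D)\xi\|_2^2 \leq 1$; splitting the integrand at an auxiliary threshold $\delta > 0$ then yields
$$\int e^{t\langle\xi,(1-Q_D)\xi\rangle}\,\nu_1(d\xi) \leq e^{t\delta} + e^t\,\nu_1\bigl(\{\|(1-Q_D)\xi\|_2^2 \geq \delta\}\bigr).$$
The one nontrivial step is to choose $\delta = \delta_D$ so as to balance the two contributions: it must vanish so that $e^{t\delta_D}\to 1$, yet $2^D\delta_D\to\infty$ fast enough that the stretched-exponential tail from Lemma~\ref{lem:basicappr} still dominates the factor $e^t$. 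A concrete choice like $\delta_D := D\,2^{-D}$ does both, since then $t\delta_D\to 0$ while the tail is bounded by $\exp\!\bigl(-D\ln(D/(\beta b e))\bigr)$. This gives $\varphi_D(t)\to 0$ and therefore the claim. I expect no further obstacles beyond this calibration.
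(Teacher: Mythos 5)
Your proof of the exponential bound is the same standard Cram\'er--Chernoff argument the paper uses, relying on the product structure $\nu_N=\nu_1^{\otimes N}$.

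For the second assertion you take a slightly different but equally valid route. The paper invokes the convexity inequality $\varphi_D^*(\varepsilon)\geq t(\varepsilon-\varphi_D'(t))$, evaluates it at $t_0=4L/\varepsilon$, and shows $\varphi_D'(t_0)\to 0$ by dominated convergence (using that $\langle\xi,(1-Q_D)\xi\rangle=\|(1-Q_D)\xi\|_2^2\to 0$ for every path $\xi$ by \eqref{eq:L2conv} and is bounded by one). You instead work directly from the definition $\varphi_D^*(\varepsilon)\geq t\varepsilon-\varphi_D(t)$ and show $\varphi_D(t)\to 0$, which avoids differentiating under the integral and gives a somewhat more elementary argument; your choice $t=3L/\varepsilon$ together with $\varphi_D(t)<L$ indeed yields $\varphi_D^*(\varepsilon)>2L$, since $\varphi_D(0)=0$ and $\varphi_D\geq 0$. (Note the paper's bound via $\varphi_D'$ is a priori stronger, since $\varphi_D(t)\leq t\varphi_D'(t)$ by convexity, but both suffice here.) Your proof of $\varphi_D(t)\to 0$ via a threshold split calibrated at $\delta_D=D\,2^{-D}$ and Lemma~\ref{lem:basicappr} is correct but somewhat more elaborate than needed: the same pointwise convergence $\|(1-Q_D)\xi\|_2^2\to 0$ for every $\xi$, together with the uniform bound $e^t$ on the integrand, gives $\varphi_D(t)\to 0$ directly by dominated convergence, paralleling what the paper does for $\varphi_D'$.
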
 
\begin{proof}
The first assertion~\eqref{eq:concentrationS} is a standard concentration estimate based on the product nature of the measure. For an estimate of the Legendre transform $  \varphi_D^*(\varepsilon)  $, we use the convexity  estimate $   \varphi_D^*(\varepsilon) \geq t (\varepsilon - \varphi'_D(t) ) $, which is valid for any $ t \geq 0 $, at $ t_0 = 4L/\varepsilon $. By the dominated convergence theorem and~\eqref{eq:L2conv}, the derivative  
$$
\varphi'_D(t) = \int \exp\left( t  \langle \xi | (1-Q_D) \xi  \rangle\right)   \langle \xi | (1-Q_D) \xi  \rangle \nu_1(d\xi) \Big/ \int \exp\left( t  \langle \xi | (1-Q_D) \xi  \rangle\right) \nu_1(d\xi) 
$$
converges pointwise to zero as $ D \to \infty $. Consequently, for any $ \varepsilon > 0 $ and fixed $ t_0\geq  0 $ there is $ D(\varepsilon, t_0) \in \mathbb{N} $ such that $  \varphi'_D(t_0) < \varepsilon/2 $ for all $ D \geq D(\varepsilon, t_0) $. This concludes the proof.
\end{proof}

The restricted partition functions
\begin{align*}
Z_N(D,\varepsilon) & \coloneqq \int_{ S_N^D(\varepsilon) }  \exp\left( -  \int_0^1 U\left(\bxi(s)\right) ds \right) \nu_N(d\bxi) , \\
W_N(D,\varepsilon,x) & \coloneqq   \int_{ S_N^D(\varepsilon) }  \exp\left( -  \int_0^1 U\left(\bxi(s)\right) ds -  \frac{N}{2} \zeta\left(  r_N(\bxi, \bxi)  \right)  \right) e^{N  \langle  r_N(\bxi, \bxi)  , x\rangle }  \nu_{N}(d\bxi) 
\end{align*}
are evidently a lower bound on $ Z_N $ respectively $ W_N(x) $. The concentration, Lemma~\ref{lem:concentration}, implies that this bound is exponentially sharp under a quenched average. The same applies to the respective partition functions 
\begin{align*}
Z_N^D(D,\varepsilon) & \coloneqq  \int_{ S_N^D(\varepsilon) }  \exp\left( -  \int_0^1 U\left(\bxi^D(s)\right) ds \right) \nu_N(d\bxi) \leq Z_N^D  , \\
W_N^D(D,\varepsilon,x) & \coloneqq    \int_{ S_N^D(\varepsilon) }   \exp\left( - \int_0^1 U\left(\bxi^D(s)\right) ds  -  \frac{N}{2} \zeta\left(  r_N\big(\bxi^D, \bxi^D\big)  \right)  \right)  e^{N  \langle  r_N(\bxi^D, \bxi^D)  , x\rangle }   \nu_{N} (d\bxi) \leq W_N^D(x) ,
\end{align*}
for the approximands. 
\begin{corollary}\label{cor:restr}
For any $ \varepsilon > 0 $, there is some $ D(\varepsilon) \in \mathbb{N} $ such that for all $ D \geq D(\varepsilon) $ and all $ N \in \mathbb{N} $: 
\begin{equation}\label{eq:restr} 
\mathbb{E}\left[ \left| \ln Z_N(D,\varepsilon)  - \ln Z_N \right| \right] \leq \frac{e^{-4 N \widehat\zeta(1)} }{1 - e^{ - 8N \widehat\zeta(1)} } . 
\end{equation}
The same bound applies to $Z_N^D(D,\varepsilon)  $ and $ Z_N^D $. \\

\noindent
For any $ \varepsilon > 0 $ and $ x \in \mathcal{S}_2 $, there is some $ D(\varepsilon,x) \in \mathbb{N} $ such that for all $ D \geq D(\varepsilon,x) $ and all $ N \in \mathbb{N} $: 
\begin{equation}\label{eq:restr1} 
\mathbb{E}\left[ \left| \ln W_N(D,\varepsilon,x)  - \ln W_N(x) \right| \right] \leq \frac{e^{-3 N \widehat\zeta(1)} }{1 - e^{ - 8N \widehat\zeta(1)} } . 
\end{equation}
The same bound applies to $W_N^D(D,\varepsilon,x)  $ and $ W_N^D(x) $. 
\end{corollary}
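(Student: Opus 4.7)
The plan is to exploit the monotonicity $Z_N(D,\varepsilon) \le Z_N$ and $W_N(D,\varepsilon,x) \le W_N(x)$, which hold by the nonnegativity of the path-integral integrands and the fact that $S_N^D(\varepsilon)$ is a subset of the full path space. Writing $\bar Z_N := Z_N - Z_N(D,\varepsilon) = \int_{S_N^D(\varepsilon)^c} e^{-\int_0^1 U(\bxi(s))ds}\,\nu_N(d\bxi)$, the absolute value drops and $|\ln Z_N - \ln Z_N(D,\varepsilon)| = \ln(1 + \bar Z_N/Z_N(D,\varepsilon)) \le \bar Z_N/Z_N(D,\varepsilon)$ by the elementary bound $\ln(1+y)\le y$ for $y\ge 0$. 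It thus suffices to prove that $\mathbb{E}[\bar Z_N/Z_N(D,\varepsilon)]$ is exponentially small, which I would handle via the Cauchy--Schwarz split
\[
\mathbb{E}\!\left[\bar Z_N / Z_N(D,\varepsilon)\right] \le \sqrt{\mathbb{E}[\bar Z_N^2]\cdot\mathbb{E}[Z_N(D,\varepsilon)^{-2}]}.
\]

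For the first factor, Fubini together with the two-replica Gaussian identity $\mathbb{E}[e^{-\int U(\bxi^{(1)}) - \int U(\bxi^{(2)})}] \le e^{2N\widehat\zeta(1)}$ and Lemma~\ref{lem:concentration} give $\mathbb{E}[\bar Z_N^2] \le e^{2N\widehat\zeta(1) - 2N\varphi_D^*(\varepsilon)}$. For the second factor, Proposition~\ref{Prop:self-averaging} applies also to $\ln Z_N(D,\varepsilon)$ (which inherits the same Lipschitz constant in the Gaussian field $U$), so combined with the Jensen lower bound $\mathbb{E}[\ln Z_N(D,\varepsilon)] \ge \ln \nu_N(S_N^D(\varepsilon))$, which tends to $0$ as $D\to\infty$, we obtain $\mathbb{E}[Z_N(D,\varepsilon)^{-2}] \le e^{cN\widehat\zeta(1)}$ for some universal constant $c$. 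Choosing $D(\varepsilon)$ via the second half of Lemma~\ref{lem:concentration} so that $\varphi_D^*(\varepsilon)$ is sufficiently large then yields the exponential bound. The specific form $e^{-4N\widehat\zeta(1)}/(1-e^{-8N\widehat\zeta(1)})$ with its geometric correction is suggestive of a refinement in which $\ln(1+R) \le R$ is replaced by the full Taylor expansion $\ln(1+R) = \sum_{k\ge 1}(-1)^{k+1}R^k/k$, with each $\mathbb{E}[R^k] \le \mathbb{E}[\bar Z_N^{2k}]^{1/2}\mathbb{E}[Z_N(D,\varepsilon)^{-2k}]^{1/2}$ estimated by the analogous $k$-replica Gaussian computation $\mathbb{E}[\bar Z_N^{2k}] \le e^{2k^2N\widehat\zeta(1) - 2kN\varphi_D^*(\varepsilon)}$, producing a geometric tail of ratio $e^{-8N\widehat\zeta(1)}$.

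The $W_N(x)$ statement follows along the same route with two modifications: the exponential-martingale identity $\mathbb{E}[e^{-\int U(\bxi) - \frac{N}{2}\zeta(r_N(\bxi,\bxi))}] = 1$ replaces the factor $e^{N\widehat\zeta(1)/2}$ per replica, while the self-overlap weight $e^{N\langle r_N(\bxi,\bxi),x\rangle} \le e^{N\|x\|_2}$ introduces an $\|x\|_2$-dependent correction that must be absorbed by taking $D(\varepsilon,x)$ to depend additionally on $x$; the net tradeoff produces the exponent $3$ in \eqref{eq:restr1} rather than the exponent $4$ in \eqref{eq:restr}. The principal obstacle is the inverse-moment control: the $k^2$ growth of the sub-Gaussian moments of $\ln Z_N(D,\varepsilon)^{-1}$ must be dominated by the linear-in-$k$ factor $-2kN\varphi_D^*(\varepsilon)$ from Lemma~\ref{lem:concentration}, and it is the precise tracking of these competing contributions that pins down the constants $4$ and $8$ in the stated bound.
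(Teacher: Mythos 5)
Your opening moves (monotonicity $Z_N(D,\varepsilon)\le Z_N$, $\ln(1+y)\le y$, reduction to $\mathbb{E}[\bar Z_N/Z_N(D,\varepsilon)]$, invoking Lemma~\ref{lem:concentration} for the set $S_N^D(\varepsilon)^c$) coincide with the paper's. But you then split the expectation by Cauchy--Schwarz, which forces you to control $\mathbb{E}[Z_N(D,\varepsilon)^{-2}]$, and this is where the proposal departs from the paper and runs into trouble. The paper instead applies Jensen's inequality \emph{inside} the logarithm with respect to the conditional measure $\widehat\nu_N^{D,\varepsilon}(\cdot) = \nu_N((\cdot)\mathbbm{1}_{S_N^D(\varepsilon)})/\nu_N(S_N^D(\varepsilon))$, giving the deterministic lower bound $Z_N(D,\varepsilon) \ge \nu_N(S_N^D(\varepsilon)) \exp\bigl(-\int\int_0^1 U(\tilde\bxi(s))\,ds\,\widehat\nu_N^{D,\varepsilon}(d\tilde\bxi)\bigr)$. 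Substituting this into $\bar Z_N/Z_N(D,\varepsilon)$ leaves a single Gaussian random variable in the exponent (mean zero, variance $\le 4N\widehat\zeta(1)$) multiplied by the deterministic factor $\nu_N(S_N^D(\varepsilon)^c)/\nu_N(S_N^D(\varepsilon))$, so a single moment-generating-function evaluation finishes the argument. No inverse moment and no concentration inequality for $\ln Z_N(D,\varepsilon)$ are needed.

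Two concrete problems with the route you sketch. First, the geometric factor $\bigl(1-e^{-8N\widehat\zeta(1)}\bigr)^{-1}$ does not come from a Taylor expansion of $\ln(1+R)$. It is simply $\nu_N\bigl(S_N^D(\varepsilon)\bigr)^{-1} \le \bigl(1-e^{-N\varphi_D^*(\varepsilon)}\bigr)^{-1} \le \bigl(1-e^{-8N\widehat\zeta(1)}\bigr)^{-1}$, once one chooses $D(\varepsilon)$ so that $\varphi_D^*(\varepsilon) \ge 8\widehat\zeta(1)$. Second, the Taylor-series refinement you propose cannot work even in principle: the series $\ln(1+R) = \sum_{k\ge1}(-1)^{k+1}R^k/k$ alternates in sign, so you cannot bound $\mathbb{E}[\ln(1+R)]$ term by term by $\sum_k \mathbb{E}[R^k]/k$; and, as you yourself observe, the sub-Gaussian moment bound $\mathbb{E}[Z_N(D,\varepsilon)^{-2k}] \lesssim e^{Ck^2 N}$ grows super-exponentially in $k$, so the resulting series would not converge geometrically but diverge for fixed $D$. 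The honest conclusion is that this path does not produce the stated constants; the Jensen-inside-the-log step is the missing simplification.

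For the $W_N(x)$ statement, the mechanism is identical after Jensen: the additional pieces in the exponent are the $\zeta$-correction $\tfrac{N}{2}\zeta(r_N(\tilde\bxi,\tilde\bxi))-\tfrac{N}{2}\zeta(r_N(\bxi,\bxi)) \le \tfrac{N}{2}\widehat\zeta(1)$ and the $x$-term bounded by $2N\|x\|_2\varepsilon$ (using that $\bxi,\tilde\bxi \in S_N^D(\varepsilon)$), not the crude $e^{N\|x\|_2}$ you use. One then applies Lemma~\ref{lem:concentration} with $L = 4\widehat\zeta(1) + 2\|x\|_2\varepsilon$, which is why $D(\varepsilon,x)$ depends on $x$ and why the exponent drops from $4$ to $3$.
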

\begin{proof}
The left side in~\eqref{eq:restr1} is upper bounded by 
\begin{align}\label{eq:ratiobound_conc}
\mathbb{E}\left[\ln \frac{Z_N}{Z_N(D,\varepsilon) }   \right] & \leq  \mathbb{E}\left[ Z_N(D,\varepsilon)^{-1}   \int_{ S_N^D(\varepsilon)^c }  \exp\left( - \int_0^1 U\left(\bxi(s)\right) ds \right) \nu_N(d\bxi) \right] \notag \\
& \leq \frac{1}{ \nu_N\left(  S_N^D(\varepsilon)\right) } \int_{ S_N^D(\varepsilon)^c }  \mathbb{E}\left[ \exp\left(\int \int_0^1 U\left(\widetilde\bxi(s)\right) ds \ \widehat\nu_N^{D,\varepsilon} (d\widetilde\bxi)  -  \int_0^1 U\left(\bxi(s)\right) ds \right) \right] \nu_N(d\bxi)  \notag \\
& \leq e^{4N\widehat\zeta(1) } \  \frac{\nu_N\left(  S_N^D(\varepsilon)^c\right) }{ \nu_N\left(  S_N^D(\varepsilon)\right) } ,
\end{align}
where  $ \widehat\nu_N^{D,\varepsilon} $ abbreviates the conditional probability measure $ \nu_N( (\cdot)\  1_{ S_N^D(\varepsilon) } ) / \nu_N\left(  S_N^D(\varepsilon)\right) $. In the above bound, the first inequality results from the elementary estimate $ \ln (1+ x) \leq x $, with $ x= (Z_N - Z_N(D,\varepsilon))/Z_N(D,\varepsilon) $. The second inequality is obtained from lower bounding $  Z_N(D,\varepsilon) $ using Jensen's inequality with respect to $ \nu_N $. 
Since the exponent in the second line is a Gaussian random variable with mean zero and covariance bounded by $ 4 N  \widehat\zeta(1) $, the third inequality follows.  Lemma~\ref{lem:concentration}, in which we choose $ L = 4 \widehat\zeta(1) $, thus concludes the proof. 

The same strategy yields the same upper bound for $Z_N^D(D,\varepsilon)  $ and $ Z_N^D $.

For a bound on $ \mathbb{E}\left[\ln \frac{W_N(x)}{W_N(D,\varepsilon,x) }   \right] $, we proceed as in~\eqref{eq:ratiobound_conc}. In the second line, the exponent then also contains the term
$$
\int  \frac{N}{2} \zeta\left( r_N\big(\widetilde\bxi,\widetilde\bxi\big) \right) \widehat\nu_{N}^{D,\varepsilon} (d\widetilde\bxi)  - \frac{N}{2} \zeta\left( r_N\big(\bxi,\bxi\big) \right)  \leq  \frac{N}{2}  \widehat\zeta(1) ,
$$
and
$$
 N \langle  r_N(\bxi^D, \bxi^D)  , x\rangle -\int N \langle  r_N(\bxi^D, \bxi^D)  , x\rangle \ \widehat\nu_{N}^{D,\varepsilon} (d\widetilde\bxi)   \leq 2 N \| x \|_2  \ \varepsilon  . 
$$
The claim then again follows from Lemma~\ref{lem:concentration} with $ L = 4 \widehat\zeta(1) + 2 \| x \|_2 \varepsilon $. 

The bound on $  \mathbb{E}\left[\ln \frac{W_N^D(x)}{W_N^D(D,\varepsilon,x) }   \right] $ follows  similarly. 
\end{proof}

The following final input for the proof of Proposition~\ref{prop:approx} is a Gaussian interpolation bound.

\begin{proposition}\label{lem:interpol1}
There is some constant $ C \in (0,\infty ) $ such that for all $  \varepsilon  >0 $, $ D,N \in \mathbb{N} $ and $ x \in \mathcal{S}_2 $:
\begin{align*}
&  \left| \mathbb{E}\left[  \ln Z_N(D,\varepsilon)  - \ln  Z_N^D(D,\varepsilon)  \right]  \right| \leq C \sqrt{\varepsilon} \ N , \\
&  \left| \mathbb{E}\left[  \ln W_N(D,\varepsilon,x)  - \ln  W_N^D(D,\varepsilon,x)  \right]  \right| \leq C \sqrt{\varepsilon} \ N ,
\end{align*}
\end{proposition}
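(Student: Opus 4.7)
My plan is to prove both bounds simultaneously by a Guerra-style Gaussian interpolation between the two Gaussian actions. Let $U_1, U_2$ be two independent copies of the process $U$ and, for $t \in [0,1]$, set
\[
g_t(\bxi) \coloneqq \sqrt{1-t}\int_0^1 U_1(\bxi(s))\,ds + \sqrt{t}\int_0^1 U_2(\bxi^D(s))\,ds,
\]
so that $g_0$ has the law of the original action and $g_1$ of the projected action. For the $Z_N$-statement I would define
\[
\phi(t) \coloneqq \mathbb{E}\Big[\ln \int_{S_N^D(\varepsilon)} e^{-g_t(\bxi)}\,\nu_N(d\bxi)\Big],
\]
and for the $W_N$-statement add, inside the exponent, the corresponding convex combination
\[
 -\tfrac{N}{2}\big((1-t)\zeta(r_N(\bxi,\bxi)) + t\,\zeta(r_N(\bxi^D,\bxi^D))\big) + N\big((1-t)\langle r_N(\bxi,\bxi),x\rangle + t\langle r_N(\bxi^D,\bxi^D),x\rangle\big).
\]
Then $\phi(0)$ and $\phi(1)$ equal the two quantities to be compared.

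Next I would differentiate and apply Gaussian integration by parts (Stein's lemma), using the covariance identity \eqref{def:covariance}. In the $Z_N$-case the standard computation gives
\[
\phi'(t) = \tfrac{N}{2}\,\mathbb{E}\big\langle \zeta\big(r_N(\bxi^{(1,D)},\bxi^{(2,D)})\big) - \zeta\big(r_N(\bxi^{(1)},\bxi^{(2)})\big) \big\rangle_t,
\]
where $\langle\cdot\rangle_t$ denotes the Gibbs two-replica average associated with $\phi(t)$; in the $W_N$-case, the extra self-overlap martingale term $-\frac{N}{2}(\cdots)$ and the linear $\langle r_N(\bxi,\bxi),x\rangle$-term are precisely the ones produced by differentiating the quadratic pieces of the Gaussian exponent, so the diagonal $(\bxi^{(1)} = \bxi^{(2)})$ contributions cancel and the same two-replica expression for $\phi'(t)$ survives.

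The decisive estimate is then pointwise on the event $\bxi^{(1)},\bxi^{(2)}\in S_N^D(\varepsilon)$, which is enforced by the restriction in both measures. Writing
\[
r_N(\bxi^{(1)},\bxi^{(2)}) - r_N(\bxi^{(1,D)},\bxi^{(2,D)}) = \tfrac{1}{N}\sum_{j=1}^N \big(|\xi_j^{(1)}\rangle\langle (1-Q_D)\xi_j^{(2)}| + |(1-Q_D)\xi_j^{(1)}\rangle\langle Q_D\xi_j^{(2)}|\big)
\]
and using $\|\xi_j\|_2,\|Q_D\xi_j\|_2\le 1$ together with Cauchy–Schwarz yields
\[
\big\|r_N(\bxi^{(1)},\bxi^{(2)}) - r_N(\bxi^{(1,D)},\bxi^{(2,D)})\big\|_2 \le \tfrac{2}{N}\sum_j \|(1-Q_D)\xi_j\|_2 \le 2\sqrt{\varepsilon}
\]
by the definition \eqref{eq:defSND} of $S_N^D(\varepsilon)$. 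Combining with the Lipschitz continuity of $\zeta$ on $\mathcal{D}$ in Hilbert–Schmidt norm, which follows from $|\widehat\zeta(a)-\widehat\zeta(b)|\le \widehat\zeta'(1)|a-b|$ for $a,b\in[-1,1]$ and Cauchy–Schwarz, gives
\[
\big|\zeta\big(r_N(\bxi^{(1)},\bxi^{(2)})\big) - \zeta\big(r_N(\bxi^{(1,D)},\bxi^{(2,D)})\big)\big| \le 2\widehat\zeta'(1)\sqrt{\varepsilon},
\]
so $|\phi'(t)|\le \widehat\zeta'(1)\sqrt{\varepsilon}\,N$. Integrating in $t$ yields the two asserted bounds with $C = \widehat\zeta'(1)$ (or a small multiple thereof).

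The main obstacle I anticipate is the bookkeeping for the $W_N$-version: one must verify that the $x$-linear and self-overlap quadratic terms introduced to correct the martingale inside $W_N(D,\varepsilon,x)$ and $W_N^D(D,\varepsilon,x)$ combine cleanly with the Stein's-lemma output so as to leave only the genuinely two-replica remainder $\zeta(r_N(\bxi^{(1,D)},\bxi^{(2,D)})) - \zeta(r_N(\bxi^{(1)},\bxi^{(2)}))$; this is a standard but somewhat delicate Guerra-type cancellation, relying on the fact that the one-replica $\langle r_N(\bxi,\bxi),x\rangle$ terms are $t$-affine and hence their derivatives in $t$ exactly match the diagonal ($\bxi^{(1)}=\bxi^{(2)}$) pieces produced by Gaussian integration by parts.
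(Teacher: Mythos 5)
Your approach---Guerra-style Gaussian interpolation on the restricted set $S_N^D(\varepsilon)$, Gaussian integration by parts, and then a Lipschitz-plus-Cauchy--Schwarz estimate giving the $\sqrt\varepsilon$ scaling---is the same as the paper's. The final bound $\|r_N(\bxi^{(1)},\bxi^{(2)}) - r_N(\bxi^{(1,D)},\bxi^{(2,D)})\|_2 \lesssim \sqrt\varepsilon$ on the restricted event is also the one the paper uses. However, two formulas you assert are wrong as stated, even though the strategy survives.

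First, your $\phi'(t)$ formula for the $Z_N$-case is incomplete. Gaussian integration by parts for $\mathbb{E}[\ln\int e^{-g_t}\,d\nu_N]$ produces \emph{two} families of terms: a single-replica diagonal term involving $\zeta(r_N(\bxi,\bxi))-\zeta(r_N(\bxi^D,\bxi^D))$ and the two-replica off-diagonal term you wrote. In the $Z_N$-case there is no self-overlap correction in the exponent, so the diagonal term does \emph{not} cancel and must be retained; the paper's derivative formula indeed carries both terms. Luckily this omission is harmless for the conclusion, because $|\zeta(r_N(\bxi,\bxi)) - \zeta(r_N(\bxi^D,\bxi^D))|$ is controlled on $S_N^D(\varepsilon)$ by the same Lipschitz--Cauchy--Schwarz argument as the off-diagonal term; only the constant changes.

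Second, your stated cancellation mechanism for the $W_N$-case is incorrect. Interpolating the self-overlap martingale correction $-\tfrac{N}{2}\zeta(r_N(\bxi,\bxi))$ does cancel the diagonal Stein-lemma output (both are $\zeta$-covariance terms), and this is the genuine Guerra cancellation. But the $x$-linear term $N\langle r_N(\bxi,\bxi),x\rangle$ is deterministic and is \emph{not} produced by Gaussian integration by parts, so it cannot "match" the diagonal replica pieces. Its $t$-derivative survives as $N\,\mathbb{E}\langle\langle r_N(\bxi^D,\bxi^D) - r_N(\bxi,\bxi),x\rangle\rangle_t$ and must be bounded separately on $S_N^D(\varepsilon)$, yielding a contribution of order $\|x\|_2\sqrt\varepsilon\,N$. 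This is again harmless for the way the proposition is used downstream (all that matters is $\lim_D\limsup_N(\cdots)=0$ for each fixed $x$), but the cancellation you invoke for it does not exist, and as stated it would leave you unable to explain why the asserted expression for $\phi'(t)$ is complete.

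In short: the method is right and essentially identical to the paper's, but the intermediate derivative formula is wrong in two places --- the missing diagonal term in the $Z_N$-case and the spurious cancellation of the $x$-linear term in the $W_N$-case. Both omissions are repaired by applying the same Lipschitz estimate on $S_N^D(\varepsilon)$ to the neglected terms.
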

\begin{proof}
The proof is a standard Gaussian interpolation argument. For the first claim we control the derivative of 
\begin{equation}\label{eq:interpol1}
f(r) \coloneqq \frac{1}{N} \ \mathbb{E}\left[ \ln \int_{ S_N^D(\varepsilon) }  \exp\left( -  \sqrt{r} \int_0^1 U\left(\bxi(s)\right) ds - \sqrt{1-r} \int_0^1 U\left(\bxi^D(s)\right) ds \right) \nu_N(d\bxi)  \right] 
\end{equation}
Let $ \langle \cdot \rangle_r $ stand for the corresponding Gibbs expectation value of the interpolated action under the restricted path measure $  \widehat\nu_N^{D,\varepsilon} $, and $  \langle \cdot \rangle_r^{\otimes 2} $ for the Gibbs expectation value on a duplicated system. 
A straightforward differentiation followed by Gaussian integration by parts yields (cf.\ \cite[Ch. 1.2]{Pan13})
\begin{equation}\label{eq:der1}
f'(r) = \frac{1}{2} \ \mathbb{E}\left[  \langle\zeta\left( r_N(1,1) \right)\rangle_r  -  \langle\zeta\left( r_N^D(1,1)\right) \rangle_r \right] -  \frac{1}{2} \ \mathbb{E}\left[  \langle\zeta\left( r_N(1,2) \right)\rangle_r^{\otimes 2}  -  \langle\zeta\left( r_N^D(1,2)\right) \rangle_r^{\otimes 2} \right] ,
\end{equation}
where we abbreviated the Hilbert-Schmidt integral operators $ r_N(1,1) $, $ r_N^{D}(1,1) $ and  $ r_N(1,2) $, $ r_N^{D}(1,2) $, which are defined through their kernels
\begin{align*} 
\widehat r_N^{(D)}(1,1)(s,t) & :=  N^{-1} \bxi^{(1)(D)}(s) \cdot  \bxi^{(1)(D)}(t) \\
\widehat r_N^{(D)}(1,2)(s,t)  & := N^{-1} \bxi^{(1)(D)}(s) \cdot  {\bxi}^{(2)(D)}(t) ,
\end{align*} 
and $ \bxi^{(1)}$ respectively $\bxi^{(2)}$ refer to the paths in the first respectively second copy of the system.  Since $ \widehat\zeta:[0,1]\to \mathbb{R} $ was assumed to be Lipschitz continuous with constant $ L < \infty $, we conclude that
\begin{align*}
\left| \zeta\left( r_N(1,2)\right) - \zeta\left( r_N^D(1,2)\right) \right| & \leq L  \int_0^1\int_0^1  \left| N^{-1}  \bxi^{(1)}(s) \cdot   \bxi^{(2)}(t) - N^{-1} \bxi^{(1) D}(s) \cdot  \bxi^{(2) D}(t)\right| ds dt \\
&  \leq L \left\| r_N(1,2)  - Q_D r_N(1,2) Q_D \right\|_2 \leq 2 L  \left\| (1-Q_D) r_N(1,2)  \right\|_2  ,
\end{align*} 
where the last step is the triangle inequality for the Hilbert-Schmidt norm. 
These quantities are bounded using
\begin{align*}
\left\| (1-Q_D) r_N(1,2)  \right\|_2^2  & \leq    \frac{1}{N^2} \sum_{j,k=1}^N \left| \langle \xi_j^{(1)} | (1-Q_D) \xi_k^{(2)} \rangle \right| \\
&  \leq \sqrt{ \frac{1}{N} \sum_{j=1}^N \langle \xi_j^{(1)} | (1-Q_D) \xi_j^{(1)} \rangle   \  \frac{1}{N} \sum_{k=1}^N \langle \xi_k^{(2)} | (1-Q_D) \xi_k^{(2)} \rangle } \leq \varepsilon ,
\end{align*}
where the second inequality is a combination of the Cauchy-Schwarz inequality and Jensen's inequality, and the last inequality is valid due to the restriction of the support of the measures to the set in~\eqref{eq:defSND}.  
The difference $ \left|\zeta\left( r_N(1,1)\right) - \zeta\left( r_N^D(1,1)\right)  \right| $ is estimated similarly. Therefore, the derivative $ | f'(r) | $ is bounded by a constant, which is independent of $ D $, times $ \sqrt{\varepsilon} $. This completes the proof of the first assertion. 

For the second assertion we proceed analogously. Modifying  the exponentials in~\eqref{eq:interpol1} by the respective self-overlap correction, results in the cancellation of the first two terms in~\eqref{eq:der1} of the respective derivative. Proceeding analogously as above for the remaining terms yields the claim. 
\end{proof}

\subsection{Correlation functional} 
Our approximation argument will also rely on approximations of the quantum Parisi functional in terms of their projections onto the subspace of square-wave pulses. In this section, we will set the stage and consider the space
$$
\mathcal{D}_D := \left\{ \varrho \in  \mathcal{S}_{2,D}= \mathbb{R}^{2^D\times 2^D} \, \big| \, \varrho \geq 0, \; \left| \varrho_{k,k'} \right| \leq 1  \right\} 
$$
of non-negative $ 2^D \times 2^D $ matrices $ \mathcal{S}_{2,D}^+ $, whose elements are bounded by one. It will be identified with the projection of $ \mathcal{D} $ onto the subspace of square-wave pulses, i.e. $  Q_D \varrho \ Q_D $ for $ \varrho \in \mathcal{D}$.  This is accomplished by the embedding
\begin{align}\label{eq:QDrho}
I_D:  \ & \mathcal{S}_{2,D} \to \mathcal{S}_2 , 
\qquad I_D(\varrho)(s,s') = \sum_{k,k'=1}^{2^D} 1_{(t_{k-1}, t_k]}(s) 1_{(t_{k'-1}, t_{k'}]}(s')  \, \varrho_{k,k'},  
\end{align} 
which has the following elementary properties, which follow by a computation directly from the definition:
\begin{enumerate}
\item $ I_D(\varrho) = Q_D  I_D(\varrho) Q_D $ for any $ \varrho \in  \mathcal{S}_{2,D} $
\item  Equipping $   \mathcal{S}_{2,D} $ with the scalar product
$$ \langle x, \varrho \rangle_D :=  \frac{1}{2^{2D}} \sum_{k,k'=1}^{2^D}  x_{k,k'} \varrho_{k',k} , $$
the embedding preserves the scalar product, i.e. $ \langle x, \varrho \rangle_D = \langle   I_D(x) ,  I_D(\varrho) \rangle $ for all $x, \varrho \in  \mathcal{S}_{2,D} $.
\item $\displaystyle \| \widehat{(I_D x)} \|_q^q = \frac{1}{2^{2D}} \sum_{k,k'=1}^{2^D}  |x_{k,k'}|^q $ for any $ x \in \mathcal{S}_{2,D} $ and $ q > 0 $.  
\end{enumerate}
We will also  use the inverse of $ I_D $ on its range, which coincides with the projection 
\begin{align}\label{def:JD} 
 J_D :  \ & \mathcal{S}_2 \to  \mathcal{S}_{2,D} , \qquad (J_D \varrho)_{k,k'} \coloneqq  2^{2D} \int_{t_{k-1}}^{t_k}  \int_{t_{k'-1}}^{t_{k'}}  \varrho(s,s')  \ ds ds'  , \quad k,k' \in \{ 1, \dots , 2^D\} , 
\end{align}
for which:
\begin{enumerate}
\item $ J_D \circ I_D = \mathrm{id}_{ \mathcal{S}_{2,D}} $, 
\item $ J_D( \varrho) = J_D (Q_D \varrho Q_D) $ for all $ \varrho \in  \mathcal{S}_{2} $. 
\end{enumerate}

\bigskip
The restriction of the functional $ \zeta : \mathcal{D} \to [0,\infty) $ to $ \mathcal{D}_D $ is
\begin{equation}\label{eq:defzetaD}
\zeta_D(\varrho) \coloneqq \frac{1}{2^{2D}} \sum_{k,k'=1}^{2^{2D}} \widehat\zeta( \varrho_{k,k'} ) .
\end{equation}
It trivially extends from $  \mathcal{D} $ to $  \mathcal{S}_{2,D}^+ $.  Its Legendre transform is defined on $ 2^D \times 2^D $ matrices $ x \in \mathcal{S}_{2,D}^+ $:
\begin{equation}\label{def:LenegdreD}
\zeta_D^*(x) \coloneqq \sup_{\varrho \in \mathcal{S}_{2,D}^+}\left( \langle x, \varrho \rangle_D -  \zeta_D(\varrho) \right)  .
\end{equation}
The following lemma summarises the relation of these  functionals to the correlation functional $ \zeta $ and its Legendre transform $ \zeta^* $.  
\begin{lemma}\label{lem:relproj}
\begin{enumerate}
\item For any  $ \varrho \in \mathcal{D} $ and $ D \in \mathbb{N} $: 
\begin{equation}\label{eq:monzetaD}
\zeta(Q_D \varrho Q_D) \leq \zeta(Q_{D+1} \varrho Q_{D+1}) \leq \zeta(\varrho) . 
\end{equation}
For any $ \varrho  \in \mathcal{D} $: \; $ \zeta( Q_D \varrho \ Q_D) =  \zeta_D(J_D(\varrho)) $.
\item For any $ x  \in \mathcal{S}_{2}^+ $ and any $ D \in \mathbb{N} $: \;$ \zeta^*(x) \geq   \zeta_D^*(J_D(x)) $ and:
\begin{equation}\label{eq:consiststar}
	\zeta^*\left( Q_D x Q_D\right) =  \zeta_D^*(J_D(x)) . 
\end{equation}
Moreover: $  \zeta^*\left( Q_{D+1} x Q_{D+1} \right)  \geq\zeta^*\left( Q_D x Q_D\right)  $.
  \item For any $ \varrho = Q_D \varrho  Q_D \in \mathcal{D} $ and $ x \in \mathcal{S}_2 $: \; $ \langle d\zeta(\varrho) , x \rangle = \langle d\zeta(\varrho) , Q_D x Q_D \rangle $.\\
  For any $ \varrho_D \in  \mathcal{D}_{D} $ and $  x_D \in  \mathcal{S}_{2,D} $: \; 
$$  \langle d\zeta_D(\varrho_D)  , x_D \rangle_D= \langle d\zeta\left( I_D(\varrho_D) \right) , I_D(x_D) \rangle  .
$$
\end{enumerate}
\end{lemma}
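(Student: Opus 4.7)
The plan is to reduce each statement to a direct computation exploiting the fact that operators of the form $Q_D \varrho Q_D$ have integral kernels that are piecewise constant on the dyadic rectangles $(t_{k-1},t_k]\times(t_{k'-1},t_{k'}]$, with constant values equal to $(J_D \varrho)_{k,k'}$. This makes $I_D$ and $J_D$ mutually inverse on the relevant subspace, with $Q_D(\cdot)Q_D = I_D\circ J_D$ on $\mathcal{S}_2$.

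\textbf{Part 1.} The identity $\zeta(Q_D \varrho Q_D) = \zeta_D(J_D(\varrho))$ is obtained by inserting the piecewise-constant kernel of $Q_D \varrho Q_D$ into the defining integral and collapsing the $2^{2D}$ squares of area $2^{-2D}$. The upper bound $\zeta(Q_{D+1}\varrho Q_{D+1})\le \zeta(\varrho)$ follows by applying Jensen's inequality on each dyadic square: $\widehat\zeta(x)=\sum_p \beta_{2p}^2 x^{2p}$ is convex on $\mathbb{R}$, and the square-average of $\widehat\varrho$ equals $(J_{D+1}\varrho)_{k,k'}$. The intermediate inequality then drops out by applying the same Jensen step to $Q_{D+1}\varrho Q_{D+1}$ in place of $\varrho$, using the nesting identity $Q_D Q_{D+1}=Q_D$.

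\textbf{Part 2.} I would first observe that $\langle Q_D x Q_D, \varrho\rangle = \langle x, Q_D\varrho Q_D\rangle$ by self-adjointness of $Q_D$, and that Part 1 yields $\zeta(\varrho)\ge \zeta(Q_D\varrho Q_D)$ whenever $\varrho\in\mathcal{S}_2^+$. Hence in $\zeta^*(Q_D x Q_D)=\sup_{\varrho\ge 0}(\langle Q_D x Q_D,\varrho\rangle - \zeta(\varrho))$ the supremum is not decreased by the restriction $\sigma=Q_D\sigma Q_D\ge 0$, and is trivially not increased; the two agree. This restricted cone is in bijection with $\mathcal{S}_{2,D}^+$ via $I_D$ and $J_D$: the identity $\langle f,I_D(\varrho_D)f\rangle = 2^{-2D}\sum_{k,k'}(\varrho_D)_{k,k'}\bar f_k\bar f_{k'}$, with $\bar f_k=2^D\!\int_{t_{k-1}}^{t_k} f$, shows that $I_D$ maps PSD matrices to PSD operators, and testing with step functions reverses the direction. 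Combining with scalar-product preservation, $\langle Q_D x Q_D, I_D(\varrho_D)\rangle = \langle J_D(x),\varrho_D\rangle_D$, and the identity of Part 1, one obtains $\zeta^*(Q_D x Q_D)=\zeta_D^*(J_D(x))$. The inequality $\zeta^*(x)\ge \zeta_D^*(J_D(x))$ is the same sup-restriction argument applied directly to $\zeta^*(x)$, and the monotonicity $\zeta^*(Q_D x Q_D)\le \zeta^*(Q_{D+1} x Q_{D+1})$ follows since $\{\sigma=Q_D\sigma Q_D\ge 0\}\subset \{\sigma=Q_{D+1}\sigma Q_{D+1}\ge 0\}$, on which the linear terms $\langle Q_D x Q_D,\sigma\rangle$ and $\langle Q_{D+1}x Q_{D+1},\sigma\rangle$ both equal $\langle x,\sigma\rangle$.

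\textbf{Part 3.} Using the Hadamard-product representation $d\zeta(\varrho)=\sum_p 2p\,\beta_{2p}^2\,\varrho^{\odot(2p-1)}$ and the fact that every Hadamard power of a kernel piecewise constant on dyadic rectangles is itself piecewise constant on the same rectangles, $\varrho=Q_D\varrho Q_D$ forces $d\zeta(\varrho) = Q_D\, d\zeta(\varrho)\, Q_D$; the first identity then drops out of the self-adjointness of $Q_D$. For the second identity, I would differentiate $\zeta_D(\varrho_D)=2^{-2D}\sum_{k,k'}\widehat\zeta(\varrho_{k,k'})$ entry-by-entry to find $(d\zeta_D(\varrho_D))_{k,k'}=\sum_p 2p\,\beta_{2p}^2\,(\varrho_D)_{k,k'}^{2p-1}$, the matrix Hadamard power. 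By the same piecewise-constant argument, $I_D(d\zeta_D(\varrho_D))=d\zeta(I_D(\varrho_D))$, and the identity follows from the scalar-product preservation of $I_D$. No step is a serious obstacle here: the only point that needs care is the verification that $I_D$, $J_D$ restrict to a bijection between $\mathcal{S}_{2,D}^+$ and the PSD operators in $Q_D\mathcal{S}_2 Q_D$, which is settled by the quadratic-form computation above; everything else is bookkeeping with the algebraic properties of $I_D$, $J_D$ and $Q_D$ listed in the text.
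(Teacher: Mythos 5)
Your proof is correct and follows essentially the same approach as the paper: Jensen's inequality for the monotonicity in Part 1, the identification of $\mathcal{S}_{2,D}^+$ with $\{\sigma = Q_D\sigma Q_D \ge 0\}$ via the isometries $I_D$, $J_D$ for Part 2, and the observation that Hadamard powers preserve piecewise-constant kernels for Part 3. Your organization of Part 2 is slightly cleaner — you establish $\zeta^*(Q_D x Q_D)=\zeta_D^*(J_D(x))$ in one pass by noting the supremum in $\zeta^*(Q_D x Q_D)$ is unchanged under restriction to $\sigma=Q_D\sigma Q_D\ge 0$, and you deduce the monotonicity directly from the nested inclusion of these restricted cones, whereas the paper proves the one-sided inequality $\zeta^*(x)\ge\zeta_D^*(J_D(x))$ first and then bootstraps; the two routes are equivalent in substance.
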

\begin{proof}
1.~The monotonicity~\eqref{eq:monzetaD} follows from the convexity of $ \zeta $ and the fact that $Q_D \varrho \ Q_D $ results from $ Q_{D+1} \varrho \ Q_{D+1} $ by averaging. The second claim is by straightforward computation using~\eqref{eq:QDrho}. \\
2.~For any $x \in \mathcal{S}_2 $ and $ \varrho = Q_D \varrho \ Q_D $, we have $ \langle x, \varrho \rangle  =  \langle Q_D x Q_D,  Q_D \varrho Q_D\rangle = \langle J_D(x) , J_D(\varrho)\rangle_D $. Therefore
\begin{align}\label{eq:supestforstar}
 \zeta^*(x)  & \geq \sup_{\substack{ \varrho \in \mathcal{S}_2^+ \\ \varrho = Q_{D} \varrho Q_{D} }} \left( \langle x, \varrho \rangle -  \zeta(\varrho) \right)  = \sup_{\substack{ \varrho \in \mathcal{S}_2^+ \\ \varrho = Q_D \varrho \ Q_D }}   \left(\langle J_D( x) , J_D(\varrho )\rangle_D -  \zeta(\varrho) \right) \\
 & =   \sup_{\varrho_D \in \mathcal{S}_{2,D}^+}\left( \langle J_D (x), \varrho_D \rangle_D -  \zeta_D(\varrho_D) \right)  = \zeta_D^*(J_D(x)) , \notag
\end{align} 
where we used the first part of the lemma in the last line. This finishes the proof of the first assertion. The second assertion follows from $ \zeta(\varrho) \geq \zeta(Q_D\varrho \ Q_D) $ established in 1., which leads to
$$
 \zeta_D^*\left(J_D( x)\right) = \sup_{\varrho \in \mathcal{S}_2^+} \left[ \langle Q_D \varrho\  Q_D  , x \rangle - \zeta(Q_D \varrho Q_D ) \right]  \geq 
 \sup_{\varrho \in \mathcal{S}_2^+} \left[ \langle \varrho , Q_D x Q_D \rangle - \zeta(\varrho) \right]  =
\zeta^*\left(Q_D x Q_D\right)  ,
$$
which, together with the first part,  concludes the proof of~\eqref{eq:consiststar}. The  last assertion follows from the above using $ Q_D Q_{D+1} = Q_D $. 

3.~The operator $ d\zeta(\varrho) $ representing the Gateaux derivative satisfies $  d\zeta(\varrho)  = Q_D d\zeta(\varrho) Q_D $ for $ \varrho = Q_D \varrho Q_D $. 
Therefore
$$
 \langle  d\zeta(\varrho), x\rangle =   \langle Q_D d\zeta(\varrho) Q_D , x\rangle =   \langle d\zeta(\varrho),  Q_D xQ_D \rangle . 
$$
The Gateaux derivative of $ \zeta_D $ is given by  $ d \zeta_D(\varrho_D) = \sum_{p\geq 1} \beta_{2p}^2 \varrho_D^{\odot 2p-1} $  in terms of the (ordinary) Hadamard products of the matrix $ \varrho_D \in \mathcal{D}_D $. 
Consequently,
$$
 \langle d\zeta_D(\varrho_D)  , x_D \rangle_D=  \sum_{p\geq 1} \beta_{2p}^2 \langle \varrho_D^{\odot 2p-1}, x_D \rangle_D =  \sum_{p\geq 1} \beta_{2p}^2 \langle I_D\big(\varrho_D^{\odot 2p-1}\big), I_D(x_D) \rangle = \langle d\zeta\left( I_D(\varrho_D) \right) , I_D(x_D) \rangle  ,
$$
where the last step used $ I_D\big(\varrho_D^{\odot 2p-1}\big) = I_D\big(\varrho_D\big)^{\odot 2p-1} $. 
\end{proof} 

The above lemma in particular guarantees that the functional $ \theta_D: \mathcal{S}_{2,D}^+ \to [0,\infty) $ defines by 
\begin{equation}\label{eq:defthetaD}
\theta_D( \varrho_D ) \coloneqq \langle d\zeta_D(\varrho_D) , \varrho_D \rangle_D - \zeta_D( \varrho_D) =  \theta\left( I_D( \varrho_D) \right) 
\end{equation}
agrees with the restriction of $ \theta $ to the projected values.

\subsection{Quantum Parisi functional}\label{sec:QPF}

A crucial step in the proof of our main results is the approximation of the quantum Parisi functional in terms of the Parisi functional of the vector-spin model, which corresponds to the projection onto square-wave pulses. The Lipschitz continuity, Proposition~\ref{lem:LcontP}, is a basic tool for this. It immediately implies the following lemma, which will be used in the proofs of Theorems~\ref{thm:Pan} and ~\ref{thm:mainsc}. 
\begin{lemma}\label{lem:infconv}
For any $ x \in \mathcal{S}_2 $:
\begin{equation}\label{eq:inf1}
\lim_{D\to \infty} \inf_{\pi \in \Pi}  \mathcal{P}(Q_D\pi Q_D , Q_D x Q_D  )  =  \inf_{\pi \in \Pi}  \mathcal{P}(\pi , x ) .  
\end{equation}
For any $ \varrho \in \mathcal{D} $:
\begin{equation}\label{eq:inf2}
\liminf_{D\to \infty}\inf_{x \in \mathcal{S}_2}  \inf_{\pi \in \Pi_\varrho}  \mathcal{P}(Q_D\pi Q_D , Q_D x Q_D  )  - \langle Q_D x Q_D , \varrho \rangle \geq  \inf_{x \in \mathcal{S}_2}  \inf_{\pi \in \Pi_\varrho}  \mathcal{P}(\pi , x ) - \langle x , \varrho \rangle . 
\end{equation}
\end{lemma}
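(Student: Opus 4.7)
Both assertions will be proved by approximation, based on the Lipschitz continuity of the Parisi functional (Proposition~\ref{lem:LcontP}). Three auxiliary facts are used throughout: (i) for every $x \in \mathcal{S}_2$, $\|Q_D x Q_D - x\| \to 0$ in operator norm as $D \to \infty$, since $x$ is compact and $Q_D \to \mathbbm{1}$ strongly; (ii) for every fixed $\pi \in \Pi$ (piecewise constant with finitely many Hilbert--Schmidt values), $\int_0^1 \|Q_D \pi(s) Q_D - \pi(s)\|_2\,ds \to 0$ by dominated convergence applied pointwise using (i); (iii) $Q_D \pi Q_D \in \Pi$ whenever $\pi \in \Pi$, since block-averaging preserves piecewise-constancy, monotonicity, the initial value $0$, and the kernel bound $|\widehat\pi| \leq 1$.

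For \eqref{eq:inf1} I will establish both inequalities. For the ``$\limsup \leq \inf$'' direction, I fix $\pi \in \Pi$ and use $Q_D \pi Q_D \in \Pi$ as a test function on the left; Proposition~\ref{lem:LcontP} together with (i) and (ii) then gives
\begin{equation*}
\inf_{\pi' \in \Pi} \mathcal{P}(Q_D \pi' Q_D, Q_D x Q_D) \,\leq\, \mathcal{P}(Q_D \pi Q_D, Q_D x Q_D) \,\xrightarrow[D\to\infty]{}\, \mathcal{P}(\pi, x),
\end{equation*}
and infimizing over $\pi$ yields the bound. The reverse ``$\inf \leq \liminf$'' direction is symmetric: for a minimizing sequence $\pi_D$ on the left, $Q_D \pi_D Q_D \in \Pi$ is admissible on the right, and Proposition~\ref{lem:LcontP} only needs to swap $x$ for $Q_D x Q_D$ at cost $L \|x - Q_D x Q_D\| \to 0$ by (i).

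For \eqref{eq:inf2} I pick $(x_D, \pi_D) \in \mathcal{S}_2 \times \Pi_\varrho$ approximately achieving the left-hand side infimum and use the pair $(Q_D x_D Q_D, \pi_D) \in \mathcal{S}_2 \times \Pi_\varrho$ as a candidate for the right-hand side. Since the $x$-arguments coincide in this comparison, Proposition~\ref{lem:LcontP} bounds the difference of the two objective values by $L \int_0^1 \|\pi_D(s) - Q_D \pi_D(s) Q_D\|_2\, ds$, and the main difficulty is to control this error \emph{uniformly} along the sequence $\pi_D$. This is where the structure of $\Pi_\varrho$ is essential: the domination $0 \leq \pi_D(s) \leq \varrho$ implies, via monotonicity of trace, $\tr(Q_D^\perp \pi_D(s)) \leq \tr(Q_D^\perp \varrho)$; combined with $\|\pi_D(s)\| \leq \|\widehat{\pi_D}(s)\|_\infty \leq 1$, standard Hilbert--Schmidt estimates on the three off-diagonal blocks in the $Q_D \oplus Q_D^\perp$ decomposition produce
\begin{equation*}
\|\pi_D(s) - Q_D \pi_D(s) Q_D\|_2 \,\leq\, 3\sqrt{\tr(Q_D^\perp \varrho)}
\end{equation*}
uniformly in $s \in [0,1]$ and $\pi_D \in \Pi_\varrho$. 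Since $\varrho \in \mathcal{D}$ is trace-class by Weidmann's theorem and $Q_D^\perp \to 0$ strongly, dominated convergence in the spectral decomposition of $\varrho$ yields $\tr(Q_D^\perp \varrho) \to 0$. Chaining these bounds, the infimum on the right of \eqref{eq:inf2} is at most $\mathcal{P}(\pi_D, Q_D x_D Q_D) - \langle Q_D x_D Q_D, \varrho\rangle$, and hence at most the left-hand-side infimum plus $1/D + 3L\sqrt{\tr(Q_D^\perp \varrho)}$; taking $\liminf_D$ concludes the proof.
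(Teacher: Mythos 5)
For \eqref{eq:inf1}, your argument is essentially the paper's: both directions come from Proposition~\ref{lem:LcontP} by trading $Q_D x Q_D$ for $x$ (at a cost vanishing as $D\to\infty$) and observing that $\{Q_D\pi Q_D:\pi\in\Pi\}\subset\Pi$, so that the left infimum dominates the right one up to a vanishing error. You phrase the set-enlargement step via minimizing sequences where the paper uses $\varepsilon$-approximate minimizers; this is a presentation difference only.

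For \eqref{eq:inf2}, you take a genuinely different route. The paper applies the Lipschitz continuity in the $x$-argument (together with the $\langle x - Q_D x Q_D,\varrho\rangle$ correction) and then asserts the set-enlargement bound $\inf_{\pi\in\Pi_\varrho}\mathcal{P}(Q_D\pi Q_D,x)\geq\inf_{\pi\in\Pi_\varrho}\mathcal{P}(\pi,x)$. You instead perturb $x_D$ to $Q_D x_D Q_D$ while keeping the path argument $\pi_D$ fixed, so that the scalar-product terms cancel exactly and the entire discrepancy is $L\int_0^1\|\pi_D(s)-Q_D\pi_D(s)Q_D\|_2\,ds$; you then exploit the constraint $0\leq\pi_D(s)\leq\varrho$ together with the trace-class property of $\varrho$ to obtain the uniform bound $\|\pi_D(s)-Q_D\pi_D(s)Q_D\|_2\leq 3\sqrt{\tr(Q_D^\perp\varrho)}\to 0$, valid for every $s$ and every path in $\Pi_\varrho$. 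This uniformity is the key extra input. It is worth noting that your version sidesteps two delicate points in the paper's shorter argument: $Q_D\pi Q_D$ for $\pi\in\Pi_\varrho$ ends at $Q_D\varrho Q_D\neq\varrho$, so it is not obviously in $\Pi_\varrho$; and the $x$-Lipschitz error $\|Q_D x Q_D - x\|_2$ is not uniformly small when $x$ ranges over all of $\mathcal{S}_2$ in the outer infimum. Your uniform trace estimate handles both at once.

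One minor imprecision: the inline inequality $\|\pi_D(s)\|\leq\|\widehat{\pi_D}(s)\|_\infty$ is not true for integral operators in general; the correct justification here is the Schur test, which for a symmetric kernel on $(0,1)^2$ with $|\widehat\pi|\leq 1$ gives $\|\pi\|\leq\sup_s\int_0^1|\widehat\pi(s,t)|\,dt\leq 1$. The conclusion you use is right, but the stated intermediate step should be replaced by this bound.
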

\begin{proof}
By the Lipschitz continuity, Proposition~\ref{lem:LcontP},
$$
 \left| \inf_{\pi \in \Pi}  \mathcal{P}(Q_D\pi Q_D , Q_D x Q_D  )  -  \inf_{\pi \in \Pi}  \mathcal{P}(Q_D\pi Q_D , x   ) \right| \leq \| Q_D x Q_D - x  \|_2 .
$$ 
From~\eqref{eq:L2conv} we know that for every $ y \in \mathcal{S}_2 $:
\begin{equation}\label{eq:HSconv}
\lim_{D\to \infty} \left\| Q_{D} y Q_{D}  - y  \right\|_2 = 0 .
\end{equation}
Enlarging the set over which the infimum is taken by dropping the projections, we conclude
\begin{align*}
\inf_{\pi \in \Pi}  \mathcal{P}(Q_D\pi Q_D , x  ) \geq \inf_{\pi \in \Pi}  \mathcal{P}(\pi  , x ) .
\end{align*}
For a complementary upper bound, we recall from Proposition~\ref{prop:elpropP} that the infimum is finite.  For $ \varepsilon > 0 $ we may thus pick $ \pi_\varepsilon \in \Pi$ such that
\begin{align*}
 \inf_{\pi \in \Pi}  \mathcal{P}(\pi  , x ) + \varepsilon  \geq \mathcal{P}(\pi_\varepsilon  , x ) \geq  \mathcal{P}(Q_D \pi_\varepsilon Q_D  , x  ) - L  \int_0^1 \| Q_D\pi_\varepsilon(m)Q_D - \pi_\varepsilon(m) \|_2 dm ,
\end{align*}
where the last inequality is by Proposition~\ref{lem:LcontP}. 
The last integral converges to zero as $ D\to \infty $  by~\eqref{eq:HSconv}  and the dominated convergence theorem.
 Since $ \epsilon > 0 $ was arbitrary, this finishes the proof of~\eqref{eq:inf1}.  \\
 
 The proof of~\eqref{eq:inf2} proceeds analogously to the first step above. The Lipschitz continuity in $ x \in \mathcal{S}_2$, Proposition~\ref{lem:LcontP}, controls
 $$
  \left| \inf_{\pi \in \Pi_\varrho}  \mathcal{P}(Q_D\pi Q_D , Q_D x Q_D  ) -  \inf_{\pi \in \Pi_\varrho}  \mathcal{P}(Q_D\pi Q_D , x   ) +  \langle x - Q_D x Q_D , \varrho \rangle \right| \leq  2 \| Q_D x Q_D - x  \|_2 .
 $$
 Elarging the  set, over which the infimum over $ \pi \in \Pi_\varrho $ is taken, yields the estimate $ \inf_{\pi \in \Pi_\varrho}  \mathcal{P}(Q_D\pi Q_D , x  ) \geq \inf_{\pi \in \Pi_\varrho}  \mathcal{P}(\pi  , x ) $, which completes the proof of~\eqref{eq:inf2}. 
\end{proof}

Due to the additional presence of the supremum over $ x \in \mathcal{S}_2^+ $, the proof of Theorem~\ref{thm:main} will rely on a more subtle approximation analysis. To prepare this, we again restrict the path measure $ \nu_1 $ to the set
$$
S^D(\varepsilon) := \left\{ \| (1- Q_D) \xi \|_2 \leq \varepsilon \right\} .
$$
For $ \pi \in \Pi $ and $ x \in \mathcal{S}_2 $, we consider the restricted functionals
\begin{equation}\label{eq:PRuelle2}
 X_0(\pi,x;D,\varepsilon) \coloneqq  \mathbb{E}\left[ \ln \sum_{\alpha \in \mathbb{N}^{r} } \mu^\pi_\alpha \int_{S^D(\varepsilon)} \exp\left( \langle \xi ,  z_\alpha \rangle + \langle \xi , \left( x- \tfrac{1}{2} d\zeta(\pi(1) \right)  \xi  \rangle \right) \nu_1(d\xi) \right] ,
 \end{equation}
 as well as $ \mathcal{P}(\pi,x;D,\varepsilon) :=   X_0(\pi,x;D,\varepsilon)  + \frac{1}{2} \int_0^1 \theta(\pi(m)) dm $. \\
 
The analogue of Proposition~\ref{prop:approx} on the level of the quantum Parisi functional is the following. %
\begin{proposition}\label{prop:approxPF}
For any $ x \in \mathcal{S}_2 $ with integral kernel $ \widehat x $, any $ \varepsilon > 0 $ and all $ D \in \mathbb{N} $:
\begin{equation}\label{eq:PDapprx}
\sup_{\pi \in \Pi} \left|  \mathcal{P}(\pi,x)  - \mathcal{P}(\pi,x;D,\varepsilon) \right| \leq \exp\left( 4 N_\zeta + \| \widehat x \|_1 -  2^D \varepsilon^2  \ln\left(\frac{ 2^D \varepsilon^2  }{\beta b  e }\right) \right)  ,
\end{equation}
where we recall $ N_\zeta = \sum_{p\geq 1} 2p \beta_{2p}^2 $. 
\end{proposition}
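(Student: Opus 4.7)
Since the $\theta$-integral in $\mathcal{P}(\pi,x)$ and $\mathcal{P}(\pi,x;D,\varepsilon)$ is identical, it suffices to bound $0 \le X_0(\pi,x) - X_0(\pi,x;D,\varepsilon)$ uniformly in $\pi \in \Pi$. By the Ruelle probability cascade representation~\eqref{eq:PRuelle}, one has $X_0(\pi,x) = \mathbb{E}[\ln Y]$ and $X_0(\pi,x;D,\varepsilon) = \mathbb{E}[\ln Y_\varepsilon]$ with
\[
Y \coloneqq \sum_\alpha \mu^\pi_\alpha \int f(\xi,\alpha)\,\nu_1(d\xi), \qquad Y_\varepsilon \coloneqq \sum_\alpha \mu^\pi_\alpha \int_{S^D(\varepsilon)} f(\xi,\alpha)\,\nu_1(d\xi),
\]
and $f(\xi,\alpha) = \exp(\langle\xi,z_\alpha\rangle + \langle\xi,(x-\tfrac{1}{2}d\zeta(\pi(1)))\xi\rangle)$. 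Because $Y \ge Y_\varepsilon > 0$, the elementary estimate $\ln(1+u) \le u$ for $u \ge 0$ reduces matters to bounding $\mathbb{E}[(Y - Y_\varepsilon)/Y_\varepsilon]$.

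Since $\|\xi\|_\infty \le 1$, the kernel $L^1$-bound gives $|\langle\xi,x\xi\rangle| \le \|\widehat x\|_1$, and Lemma~\ref{lem:zetaprop} yields $0 \le \langle\xi,d\zeta(\pi(1))\xi\rangle \le N_\zeta$. The upper bound on the quadratic exponent produces
\[
Y - Y_\varepsilon \le e^{\|\widehat x\|_1}\int_{S^D(\varepsilon)^c}\sum_\alpha \mu^\pi_\alpha e^{\langle\xi,z_\alpha\rangle}\,\nu_1(d\xi).
\]
For a matching lower bound on $Y_\varepsilon$, I apply Jensen's inequality to the probability measure $\widetilde{\mathbb P} \coloneqq (\mu^\pi \otimes \nu_1|_{S^D(\varepsilon)})/\nu_1(S^D(\varepsilon))$ on pairs $(\alpha,\xi)$. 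Concavity of $\ln$ combined with the pointwise lower bound on the quadratic exponent gives
\[
Y_\varepsilon \ge \nu_1(S^D(\varepsilon))\exp\bigl(\langle\widetilde\xi_\varepsilon,\bar z\rangle - \|\widehat x\|_1 - \tfrac{1}{2}N_\zeta\bigr),
\]
where $\widetilde\xi_\varepsilon \coloneqq \int_{S^D(\varepsilon)}\xi\,\nu_1(d\xi)/\nu_1(S^D(\varepsilon))$ and $\bar z \coloneqq \sum_\alpha \mu^\pi_\alpha z_\alpha$. This is the same Jensen trick already used in the proof of Corollary~\ref{cor:restr}.

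Dividing these two bounds and swapping the $\xi$-integration with the Gaussian expectation by Fubini produces
\[
\mathbb{E}\bigl[(Y-Y_\varepsilon)/Y_\varepsilon\bigr] \le \frac{e^{2\|\widehat x\|_1 + N_\zeta/2}}{\nu_1(S^D(\varepsilon))}\int_{S^D(\varepsilon)^c}\mathbb{E}\Big[\sum_\alpha \mu^\pi_\alpha e^{\langle\xi,z_\alpha\rangle - \langle\widetilde\xi_\varepsilon,\bar z\rangle}\Big]\,\nu_1(d\xi).
\]
The inner expectation is the crucial quantity, and I would bound it uniformly in $\xi$, $\pi$, and $D$ as follows. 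Conditioning on the Ruelle weights, which are independent of the Gaussians $z_\alpha$, each summand is the moment generating function of a centered Gaussian whose variance, by subadditivity together with $\|\xi\|_2,\|\widetilde\xi_\varepsilon\|_2 \le 1$ and $\|d\zeta(\pi(1))\| \le N_\zeta$, is at most $4N_\zeta$; since $\sum_\alpha \mu^\pi_\alpha = 1$ almost surely, the full expectation is at most $e^{2N_\zeta}$. Invoking Lemma~\ref{lem:basicappr} for $\nu_1(S^D(\varepsilon)^c)$ and absorbing the prefactor $\nu_1(S^D(\varepsilon))^{-1}$ (bounded uniformly for $D$ large enough that $\nu_1(S^D(\varepsilon)) \geq 1/2$) into the exponential then yields the claimed bound. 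The delicate step is the third one: the Jensen lower bound on $Y_\varepsilon$ injects the cascade-weighted Gaussian $\bar z$ into the denominator, and the argument only succeeds because the combined exponent $\langle\xi,z_\alpha\rangle - \langle\widetilde\xi_\varepsilon,\bar z\rangle$ remains a centered Gaussian whose variance is controllable through $\|d\zeta(\pi(1))\|$ alone, independently of the depth or structure of the Ruelle cascade.
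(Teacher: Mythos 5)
Your proof follows essentially the same approach as the paper: reduce to $\mathbb{E}[(Y-Y_\varepsilon)/Y_\varepsilon]$ via $\ln(1+u)\le u$, lower-bound the denominator $Y_\varepsilon$ by Jensen's inequality with respect to the probability measure $\sum_\alpha\mu^\pi_\alpha\,\nu_1|_{S^D(\varepsilon)}/\nu_1(S^D(\varepsilon))$ on pairs $(\alpha,\xi)$, observe that the remaining exponent $\langle\xi,z_\alpha\rangle-\langle\widetilde\xi_\varepsilon,\bar z\rangle$ is, conditionally on the cascade weights, a centred Gaussian with variance at most $4N_\zeta$ uniformly in $\alpha,\xi,\pi,D$, and conclude with Lemma~\ref{lem:basicappr}. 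This is precisely the paper's proof; only the bookkeeping differs in that you separate the quadratic part from the Gaussian part before applying Jensen, whereas the paper handles both in a single Jensen step applied to the whole action $\mathcal{A}$.

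One small discrepancy worth noting: your bound accumulates the quadratic term twice (once from the upper bound on $Y-Y_\varepsilon$ and once from the lower bound on $Y_\varepsilon$), yielding $2\|\widehat x\|_1$ in the exponent rather than the $\|\widehat x\|_1$ of the statement. The paper's own proof avoids this by exploiting $x\in\mathcal{S}_2^+$ (so that $-\langle\langle\xi,x\xi\rangle\rangle_\varepsilon\le0$), which is indeed the case in the only application of the proposition (Lemma~\ref{lem:apprD2}), even though the statement nominally allows arbitrary $x\in\mathcal{S}_2$. Your treatment of the prefactor $\nu_1(S^D(\varepsilon))^{-1}$ — ``bounded for $D$ large enough'' — is also slightly loose given that the proposition asserts the inequality for all $D\in\mathbb{N}$, but the paper's own proof leaves this step equally implicit, so this is not a fault of your argument specifically.
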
 
\begin{proof}
Since the proof proceeds analogously as that of Proposition~\ref{prop:approx}, we will only highlight the differences.
We abbreviate the random action
$$
\mathcal{A}(\alpha,\xi;\pi ,x) :=  \langle \xi ,  z_\alpha \rangle + \langle \xi , \left( x- \tfrac{1}{2} d\zeta(\pi(1) \right)  \xi  \rangle , 
$$
and estimate the left side in~\eqref{eq:PDapprx} for any $ \pi \in \Pi $ with $ r $ steps by
\begin{multline}\label{eq:Jensen}
\mathbb{E}\left[ \frac{\sum_{\alpha \in \mathbb{N}^{r} } \mu^\pi_\alpha \int_{S^D(\varepsilon)^c} \exp\left(\mathcal{A}(\alpha,\xi;\pi ,x)  \right) \nu_1(d\xi) }{ \sum_{\alpha \in \mathbb{N}^{r} } \mu^\pi_\alpha \int_{S^D(\varepsilon)} \exp\left(\mathcal{A}(\alpha,\xi;\pi ,x)  \right) \nu_1(d\xi) }  \right] \\
  \leq  \nu_1\left(S^D(\varepsilon)\right)^{-1} \sum_{\alpha \in \mathbb{N}^{r} } \mu^\pi_\alpha \int_{S^D(\varepsilon)^c} \mathbb{E}\left[\exp\left(\mathcal{A}(\alpha,\xi;\pi ,x) - \langle  \mathcal{A}(\cdot ;\pi ,x)\rangle_\varepsilon \right) \right]  \ \nu_1(d\xi) \qquad
\end{multline}
where we used the normalization $ \sum_{\alpha \in \mathbb{N}^{r} } \mu^\pi_\alpha = 1 $ and Jensen's inequality for the probability measure 
$$ 
\sum_{\alpha \in \mathbb{N}^{r} } \mu^\pi_\alpha \int_{S^D(\varepsilon)} (\cdot)  \ \overline{\nu}^\varepsilon_1(d\xi)  =: \langle (\cdot) \rangle_\varepsilon \quad \mbox{with}\quad  \overline{\nu}^\varepsilon_1(\cdot) :=  \nu_1\left(S^D(\varepsilon)\right)^{-1} \nu_1(\cdot) .
$$
The non-random terms in the exponential in the right side of~\eqref{eq:Jensen} involving $ x \in \mathcal{S}_2^+ $ and $ \pi(1) $ are bounded from above by $ 0 \leq \langle \xi , x \xi \rangle \leq \| \widehat x \|_1 $ and $0 \leq \langle \langle \xi , d\xi(\pi(1))  \xi \rangle \rangle_\varepsilon \leq N_\zeta  $ by Lemma~\ref{lem:zetaprop}.  The random terms  in the exponential in the right side of~\eqref{eq:Jensen} is the Gaussian random variable $ \langle \xi ,  z_\alpha \rangle  - \langle\langle \cdot,  z_{\cdot} \rangle   \rangle_\varepsilon$. The latter is centered and has a covariance, which is bounded as follows
\begin{align*}
& \mathbb{E}\left[ \left| \langle \xi ,  z_\alpha \rangle  - \langle \langle \xi ,  z_\alpha \rangle  \rangle_\varepsilon \right|^2\right]   \leq 2 \ \mathbb{E}\left[ \langle \xi ,  z_\alpha \rangle^2 \right] + 2 \ \mathbb{E}\left[ \langle\langle \cdot,  z_{\cdot} \rangle   \rangle_\varepsilon^2 \right]  \\
& \leq  2 \ \langle \xi , d\zeta(\pi(1) ) \xi \rangle + 2  \sum_{\alpha,\alpha' \in \mathbb{N}^{r} } \mu^\pi_\alpha  \mu^\pi_{\alpha'} \int_{S^D(\varepsilon)}  \int_{S^D(\varepsilon)}  \langle \xi , d\zeta(\pi(\alpha \wedge \alpha')) \xi' \rangle  \ \overline{\nu}^\varepsilon_1(d\xi) \overline{\nu}^\varepsilon_1(d\xi') \\
& \leq 2 \ N_\zeta +  2 \left\|  d\zeta(\pi(1)) \right\|  \leq 4 \ N_\zeta . 
\end{align*}
The estimate  in the last line is again from Lemma~\ref{lem:zetaprop}. 
The proof is concluded by Lemma~\ref{lem:basicappr}. 
\end{proof}

Proposition~\ref{prop:approxPF} is key in the following  lemma, which yields an alternative control over differences compared to just the Lipschitz continuity, Proposition~\ref{lem:LcontP}. The benefit here is the uniformity of the approximation $ D \to \infty $ as long as a norm $  \| \widehat x \|_q $ with  $ q > 1 $ of the integral kernel of $ x $ can be controlled. 
\begin{lemma}\label{lem:apprD2}
Let $ x \in \mathcal{S}_2^+ $ with integral kernel $ \widehat x $. For any $ \varepsilon > 0 $, any $ D \in \mathbb{N} $, all $ p \geq 2 $ and $ p^{-1} + q^{-1} = 1 $: 
$$
\sup_{\pi \in \Pi} \left| \mathcal{P}(\pi,x) - \mathcal{P}(\pi,Q_D x Q_D)  \right| \leq   2 \exp\left(4N_\zeta + \| \widehat x \|_1  - 2^D \varepsilon^2  \ln\left(\frac{ 2^D \varepsilon^2  }{\beta b  e }\right) \right)  + 4^{\frac{1}{q} } \varepsilon^{\frac{2}{p} }  \left\| \widehat x \right\|_q . 
$$
\end{lemma}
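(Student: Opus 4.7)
The plan is to compare $\mathcal{P}(\pi,x)$ with $\mathcal{P}(\pi,Q_DxQ_D)$ via the intermediate restricted functionals $\mathcal{P}(\pi,x;D,\varepsilon)$ and $\mathcal{P}(\pi,Q_DxQ_D;D,\varepsilon)$ from~\eqref{eq:PRuelle2}, in which the path integral $\int\nu_1(d\xi)$ is replaced by $\int_{S^D(\varepsilon)}\nu_1(d\xi)$. A triangle inequality yields
\begin{equation*}
\left|\mathcal{P}(\pi,x)-\mathcal{P}(\pi,Q_DxQ_D)\right| \leq \left|\mathcal{P}(\pi,x)-\mathcal{P}(\pi,x;D,\varepsilon)\right| + \left|\mathcal{P}(\pi,x;D,\varepsilon)-\mathcal{P}(\pi,Q_DxQ_D;D,\varepsilon)\right| + \left|\mathcal{P}(\pi,Q_DxQ_D;D,\varepsilon)-\mathcal{P}(\pi,Q_DxQ_D)\right| .
\end{equation*}

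The two outer terms can be absorbed using Proposition~\ref{prop:approxPF} directly. For this step I only need to verify that $\|\widehat{Q_DxQ_D}\|_1 \leq \|\widehat{x}\|_1$, which is immediate from Jensen's inequality applied to the local averaging representation~\eqref{def:JD} of $Q_DxQ_D$ followed by the identification~\eqref{eq:QDrho}. Each outer term is thus bounded by $\exp\!\bigl(4N_\zeta + \|\widehat{x}\|_1 - 2^D\varepsilon^2\ln(2^D\varepsilon^2/(\beta b e))\bigr)$, producing the first summand in the claimed bound.

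The heart of the argument is the middle term, which is where the sharper uniformity in $D$ is gained. Since $\mathcal{P}(\pi,x;D,\varepsilon)$ and $\mathcal{P}(\pi,Q_DxQ_D;D,\varepsilon)$ differ only through the quadratic weight $\langle\xi,x\xi\rangle$ replacing $\langle\xi,Q_DxQ_D\xi\rangle = \langle Q_D\xi,xQ_D\xi\rangle$ inside the exponential, and since the integration is restricted to $S^D(\varepsilon)$, it suffices to bound this difference pointwise on that set. Splitting
\begin{equation*}
\langle\xi,x\xi\rangle - \langle Q_D\xi,xQ_D\xi\rangle = \langle(1-Q_D)\xi,x\xi\rangle + \langle Q_D\xi,x(1-Q_D)\xi\rangle
\end{equation*}
and applying Hölder's inequality in the kernel representation $\langle f,xg\rangle = \int\!\!\int \widehat{x}(s,t)f(s)g(t)\,ds\,dt$ with exponents $q$ and $p$ (so that $p^{-1}+q^{-1}=1$), each term is bounded by $\|\widehat{x}\|_q \|(1-Q_D)\xi\|_p \|\xi\|_p$, respectively $\|\widehat{x}\|_q \|(1-Q_D)\xi\|_p \|Q_D\xi\|_p$. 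Since $|\xi|\equiv 1$ and averaging contracts the sup-norm, both $\|\xi\|_p$ and $\|Q_D\xi\|_p$ are bounded by $1$.

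The main (mild) obstacle is controlling $\|(1-Q_D)\xi\|_p$ for $p\geq 2$. Here I exploit the $L^p$-interpolation $\|(1-Q_D)\xi\|_p \leq \|(1-Q_D)\xi\|_2^{2/p}\|(1-Q_D)\xi\|_\infty^{1-2/p}$ together with the trivial bound $\|(1-Q_D)\xi\|_\infty\leq 2$ and, crucially, the defining restriction $\|(1-Q_D)\xi\|_2\leq\varepsilon$ on $S^D(\varepsilon)$. This yields $\|(1-Q_D)\xi\|_p\leq 2^{1-2/p}\varepsilon^{2/p}$, so the pointwise difference is at most $2\cdot 2^{1-2/p}\varepsilon^{2/p}\|\widehat{x}\|_q = 4^{1/q}\varepsilon^{2/p}\|\widehat{x}\|_q$. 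Because this estimate is uniform over $\xi\in S^D(\varepsilon)$, the same bound transfers to the log-of-exponential-average defining $X_0(\pi,\cdot;D,\varepsilon)$, and the $\theta$-integral as well as the $d\zeta(\pi(1))$ weight are unaffected by the change $x\mapsto Q_DxQ_D$. Summing the three contributions and taking the supremum over $\pi\in\Pi$ finishes the proof.
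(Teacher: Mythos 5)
Your proof is correct and follows essentially the same route as the paper: the triangle inequality through the restricted functionals $\mathcal{P}(\pi,\cdot\,;D,\varepsilon)$, Proposition~\ref{prop:approxPF} together with $\|\widehat{(Q_DxQ_D)}\|_1\leq\|\widehat x\|_1$ for the two outer terms, and for the middle term the same decomposition, H\"older's inequality with exponents $p,q$ on $(0,1)^2$, and the interpolation bound $\|(1-Q_D)\xi\|_p\leq 2^{1-2/p}\|(1-Q_D)\xi\|_2^{2/p}\leq 2^{1-2/p}\varepsilon^{2/p}$ valid on $S^D(\varepsilon)$, producing exactly the constant $4^{1/q}$.
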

\begin{proof}
We pick $ \varepsilon > 0 $ and use the triangle inequality to estimate
\begin{align*}
& \left|   \mathcal{P}(\pi,x) - \mathcal{P}(\pi,Q_D x Q_D)   \right| \\
& \leq \left|  \mathcal{P}(\pi,x)  -  \mathcal{P}(\pi,x;D,\varepsilon) \right| +  \left|  \mathcal{P}(\pi,Q_D x Q_D )  -  \mathcal{P}(\pi,Q_D x Q_D ;D,\varepsilon)\right| +  \left|  \mathcal{P}(\pi,x;D,\varepsilon)  -  \mathcal{P}(\pi,Q_D x Q_D ;D,\varepsilon)  \right| .
\end{align*}
The first and second term on the right side is estimated using Proposition~\ref{prop:approxPF} and the bound
$\| \widehat{ (Q_D x Q_D)} \|_1 \leq \| \widehat x \|_1 $ 
on the $ L^1 $-norm integral of the  kernel of the projection, see also \eqref{eq:monnorm} below.  To bound the last term, we note that for  $ \xi \in S^D(\varepsilon) $ by the triangle inequality and a subsequent application of H\"older's inequality with $ p \geq 2 $ on $ L^2\left( (0,1)^2\right) $:
\begin{align*}
\left| \langle \xi , x \xi \rangle - \langle \xi , Q_D x Q_D  \xi \rangle \right|  & \leq  \left| \langle (1-Q_D) \xi , x \xi \rangle \right| +  \left| \langle Q_D \xi , x (1-Q_D) \xi \rangle \right| \\
& \leq \left( \| \xi \|_p + \| Q_D\xi \|_p\right)  \| (1-Q_D) \xi \|_p  \| \widehat x \|_q \\
&  \leq 2 \    2^{1-\frac{2}{p}} \ \| (1-Q_D) \xi \|_2^{\frac{2}{p}}  \| \widehat x \|_q  \leq 4^{\frac{1}{q} } \varepsilon^{\frac{2}{p}}  \| \widehat x \|_q . 
\end{align*}
This implies
$$
 \left| \mathcal{P}(\pi,x;D,\varepsilon)  -  \mathcal{P}(\pi,Q_D x Q_D ;D,\varepsilon)    \right| = 	\left|  X_0(\pi,Q_D x Q_D;D,\varepsilon)  -  X_0(\pi,x;D,\varepsilon) \right|   \leq 4^{\frac{1}{q} } \varepsilon^{\frac{2}{p}}  \| \widehat x \|_q ,
$$
which completes the proof.
\end{proof}

The next lemma shows that when restricting the outer supremum to a subset of the form
$
\mathcal{B}_q^+(R) \coloneqq \left\{ x \in \mathcal{S}_2^+ \ | \ \left\| \widehat x \right\|_q \leq R \right\}  $, 
where $ \widehat x $ is the integral kernel of $ x $, one may effectively pull the limit $ D \to \infty $ and obtain a convergence result similar to Lemma~\ref{lem:infconv}. 
\begin{lemma}\label{lem:maxminconv}
Let $ R > 0 $ and $ q \in (1,2] $ be arbitrary. Every sequence $ D \to \infty $ has a subsequence $ D_n \to \infty $ such that 
$$
\limsup_{D_n \to \infty} \sup_{x \in \mathcal{B}_q^+(R) } \left[ \inf_{\pi \in \Pi} \mathcal{P}\left(Q_{D_n}\pi Q_{D_n} , Q_{D_n} x Q_{D_n} \right) - \tfrac{1}{2} \xi^*\left( 2 Q_{D_n} x Q_{D_n} \right) \right] \leq \sup_{x \in \mathcal{S}_{2}^+ } \left[ \inf_{\pi \in \Pi} \mathcal{P}\left(\pi  , x \right) - \tfrac{1}{2} \xi^*\left( 2x \right) \right] 
$$ 
Moreover:
$ \quad\displaystyle
\liminf_{D \to \infty} \sup_{x \in \mathcal{S}_{2}^+ } \left[ \inf_{\pi \in \Pi} \mathcal{P}\left(Q_{D}\pi Q_{D} , Q_{D} x Q_{D} \right) - \tfrac{1}{2} \xi^*\left( 2 Q_{D} x Q_{D} \right) \right] \geq \sup_{x \in \mathcal{S}_{2}^+ } \left[ \inf_{\pi \in \Pi} \mathcal{P}\left(\pi  , x \right) - \tfrac{1}{2} \xi^*\left( 2x \right) \right] $.
\end{lemma}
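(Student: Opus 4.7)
The plan is to handle the two inequalities separately.

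For the lower bound (second assertion), the approach is direct. I would fix $x \in \mathcal{S}_2^+$ and combine Lemma~\ref{lem:infconv}, which gives $\inf_{\pi \in \Pi} \mathcal{P}(Q_D \pi Q_D, Q_D x Q_D) \to \mathcal{F}(x)$, with the monotonicity $\zeta^*(2 Q_D x Q_D) \leq \zeta^*(2x)$ from Lemma~\ref{lem:relproj}(2). Writing $F_D$ and $F$ for the bracketed expressions on the two sides, these combine to give $\liminf_D F_D(x) \geq F(x)$. Since the outer supremum on the left already runs over all of $\mathcal{S}_2^+$, taking $\sup_x$ yields the claim.

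For the upper bound (first assertion), the plan is a compactness argument. I would first pick, for each $D$, a near-optimiser $x_D \in \mathcal{B}_q^+(R)$ with $F_D(x_D) \geq \sup_{x \in \mathcal{B}_q^+(R)} F_D(x) - 1/D$, and set $y_D := Q_D x_D Q_D$. Using idempotence of the projection and the fact that dyadic conditional expectation is an $L^q$-contraction, $F_D(x_D) = F_D(y_D)$ and $\|\widehat{y_D}\|_q \leq R$. Given any sequence $D\to\infty$, I would first pass to a subsequence along which the $\sup$ converges to its $\limsup$; then, since $L^q((0,1)^2)$ is reflexive for $q \in (1,2]$, Banach--Alaoglu furnishes a further subsequence $D_n$ along which $\widehat{y_{D_n}} \rightharpoonup \widehat{y_\infty}$ weakly in $L^q$, with $\|\widehat{y_\infty}\|_q \leq R$.

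The heart of the argument is the passage to the limit. Because every pairing appearing in $\mathcal{P}$ and $\zeta^*$ tests the kernel against an $L^\infty$ object (namely $\xi \otimes \xi$ or $\widehat{\pi(m)}$), both functionals extend naturally to kernels in $L^q$ and agree with the original definitions on $\mathcal{S}_2^+$. I would fix $\eta > 0$ and choose $\pi_0 \in \Pi$ with $\mathcal{P}(\pi_0, y_\infty) \leq \mathcal{F}(y_\infty) + \eta$, then bound $\inf_\pi \mathcal{P}(Q_{D_n}\pi Q_{D_n}, y_{D_n}) \leq \mathcal{P}(Q_{D_n}\pi_0 Q_{D_n}, y_{D_n})$. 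The Lipschitz estimate from Proposition~\ref{lem:LcontP}, combined with \eqref{eq:HSconv}, reduces matters to showing $\mathcal{P}(\pi_0, y_{D_n}) \to \mathcal{P}(\pi_0, y_\infty)$. This I would establish by dominated convergence in the Ruelle-cascade representation \eqref{eq:PRuelle}: the quadratic form $\langle \xi, y_{D_n} \xi \rangle$ is uniformly bounded by $\|\widehat{y_{D_n}}\|_1 \leq R$ and converges pointwise in $\xi$ to $\langle \xi, y_\infty \xi \rangle$ because $\xi \otimes \xi \in L^\infty \subset L^{q/(q-1)}$. Simultaneously, the weak-$L^q$ lower semicontinuity of $\zeta^*$ (as a supremum of weakly continuous linear functionals minus a fixed function) yields $\liminf_n \zeta^*(2 y_{D_n}) \geq \zeta^*(2 y_\infty)$. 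Adding these produces $\limsup_n F_{D_n}(y_{D_n}) \leq F(y_\infty) + \eta$.

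The main obstacle to watch for is that when $q<2$ the weak-$L^q$ limit $\widehat{y_\infty}$ may not lie in $L^2$, so $y_\infty$ need not be Hilbert--Schmidt and hence is not a priori admissible on the right-hand side of the lemma. To close this gap, I would approximate $y_\infty$ from within $\mathcal{S}_2^+$ by its dyadic projections $y_\infty^{(M)} := Q_M y_\infty Q_M$: each is finite-rank (hence in $\mathcal{S}_2$), inherits non-negativity from $\{y_{D_n}\}$ by testing against piecewise constant functions, and therefore satisfies $F(y_\infty^{(M)}) \leq \sup_{x \in \mathcal{S}_2^+} F(x)$. The continuity $F(y_\infty^{(M)}) \to F(y_\infty)$ as $M \to \infty$ follows by dominated convergence applied to $\langle \xi, y_\infty^{(M)} \xi \rangle = \int_0^1 \int_0^1 \widehat{y_\infty}(s,t) (Q_M \xi)(s) (Q_M \xi)(t)\, ds\, dt$ (using $\widehat{y_\infty} \in L^1$, $|Q_M \xi| \leq 1$, and $Q_M \xi \to \xi$ a.e.), together with the monotonicity-plus-lsc argument of the second assertion applied to $\zeta^*$. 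Letting $\eta \downarrow 0$ then completes the proof.
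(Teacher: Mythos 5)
Your proof of the second assertion (the lower bound) matches the paper's in all essentials. For the first assertion (the upper bound) your route is genuinely different from the paper's, and it is correct for $q=2$ — the only case actually invoked in the proof of Theorem~\ref{thm:main}. The paper proceeds by first projecting the $x$-variable onto the fixed finite-dimensional space $\mathcal{S}_{2,D_0}$ via $Q_{D_0}$, controlling the resulting error \emph{uniformly in $\pi$} with Lemma~\ref{lem:apprD2} (which is exactly why the hypothesis $\|\widehat x\|_q\le R$ with $q\in(1,2]$ appears), and then using norm compactness in $\mathbb{R}^{2^{D_0}\times 2^{D_0}}$, the Lipschitz bound from Proposition~\ref{lem:LcontP}, and weak lower semicontinuity of $\zeta^*$, with a final $D_0\to\infty$. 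You sidestep Lemma~\ref{lem:apprD2} entirely by invoking weak compactness of balls in $L^2((0,1)^2)\cong\mathcal S_2$ and passing to the limit via dominated convergence in the Ruelle-cascade representation~\eqref{eq:PRuelle}. When $q=2$ this is a cleaner argument: the weak limit $y_\infty$ is automatically Hilbert--Schmidt, non-negativity is inherited by testing against a dense set of $f\otimes f$, and the weak lower semicontinuity of $\zeta^*$ in $\mathcal S_2$ (established after~\eqref{def:zeta*}) applies directly.

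However, the lemma is stated for all $q\in(1,2]$, and for $q<2$ your argument has a genuine gap. The weak-$L^q$ limit $\widehat y_\infty$ then need not lie in $L^2$, so $y_\infty$ is not a priori a Hilbert--Schmidt operator and $\zeta^*(2y_\infty)$ is undefined. You assert weak-$L^q$ lower semicontinuity of $\zeta^*$ on the grounds that it is a supremum of weakly continuous functionals, but the supremum in~\eqref{def:zeta*} runs over all $\varrho\in\mathcal S_2$ with $\zeta(\varrho)<\infty$; for the SK model this includes all of $\mathcal S_2^+$, whose kernels are merely $L^2$ and therefore do \emph{not} pair weak-$L^q$-continuously when the conjugate exponent $q'>2$. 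Your subsequent patch via $y_\infty^{(M)}=Q_M y_\infty Q_M$ then invokes ``monotonicity-plus-lsc applied to $\zeta^*$,'' but both halves of that argument again presuppose that $\zeta^*(2y_\infty)$ makes sense. To close the gap you would need to argue at fixed level $M$: show $\zeta^*(2Q_M y_{D_n}Q_M)\to\zeta^*(2y_\infty^{(M)})$ by finite-dimensional continuity (entrywise weak-$L^q$ convergence of $J_M y_{D_n}$ suffices since $\indfct_{I_j}\otimes\indfct_{I_k}\in L^{q'}$), combine with the projection monotonicity of $\zeta^*$ from Lemma~\ref{lem:relproj}, control the Parisi functional at level $M$ uniformly in $D$ via Lemma~\ref{lem:apprD2}, and send $M\to\infty$ last — at which point your argument becomes structurally the same two-level scheme as the paper's.
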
 
\begin{proof}
In this proof, we abbreviate the functional 
$$
\mathcal{K}(x) :=  \inf_{\pi \in \Pi} \mathcal{P}\left(\pi  ,  x\right) - \tfrac{1}{2} \xi^*\left( 2 x\right) , \quad x \in \mathcal{S}_{2}^+ .
$$

\noindent
\textit{Lower bound:}~~We let $ \varepsilon > 0 $, and pick $ x_\varepsilon \in \mathcal{S}_{2}^+ $ such that 
$
\mathcal{K}\left(x_\varepsilon \right)  \geq  \sup_{x \in \mathcal{S}_{2}^+ }\mathcal{K}(x) -\varepsilon $. 
This is possible since the supremum in the right side is finite by Proposition~\ref{prop:elpropP}.  The pre-limit in the right side of the assertion is trivially lower bounded for any $D $:
\begin{align}
\sup_{x \in \mathcal{S}_{2}^+ } \left[ \inf_{\pi \in \Pi} \mathcal{P}\left(Q_{D}\pi Q_{D} , Q_{D} x Q_{D} \right) - \tfrac{1}{2} \xi^*\left( 2 Q_{D} x Q_{D} \right) \right] & \geq \inf_{\pi \in \Pi} \mathcal{P}\left(Q_{D}\pi Q_{D} , Q_{D} x_\varepsilon Q_{D} \right) - \tfrac{1}{2} \xi^*\left( 2 Q_{D} x_\varepsilon Q_{D} \right) \notag \\
& \geq  \inf_{\pi \in \Pi}\mathcal{P}\left(Q_{D}\pi Q_{D} , Q_{D} x_\varepsilon Q_{D} \right) - \tfrac{1}{2} \xi^*\left( 2 x_\varepsilon \right) , \notag 
\end{align}
where the last step follows from Lemma~\ref{lem:relproj}, which guarantees $ \xi^*\left( 2 Q_{D} x_\varepsilon Q_{D} \right)  \leq \xi^*\left( 2 x_\varepsilon \right)  $. 
We now use Lemma~\ref{lem:infconv} to conclude 
\begin{equation}
\liminf_{D\to \infty} \inf_{\pi \in \Pi}\mathcal{P}\left(Q_{D}\pi Q_{D} , Q_{D} x_\varepsilon Q_{D} \right) - \tfrac{1}{2} \xi^*\left( 2 x_\varepsilon \right) =  \mathcal{K}\left(x_\varepsilon \right) ,
\end{equation}
and hence the claim, since $ \varepsilon > 0 $ was arbitrary.\\

\noindent
\textit{Upper bound:}~~We first fix $ D_0 \in \mathbb{N} $ and $ \varepsilon > 0 $, and suppose without loss of generality that $ D \geq D_0 $. We employ Lemma~\ref{lem:apprD2} and the fact that for any $ p \in [1,\infty) $ 
\begin{equation}\label{eq:monnorm}
 \| \widehat{ (Q_D x Q_D)} \|_p^p = \int_0^1 \int_0^1 \Big| \sum_{j,k=1}^{2^D} 2^{-2D} \int_{t_{j-1}}^{t_j} \int_{t_{k-1}}^{t_k} \widehat x(s',t') ds' dt' \ \indfct_{(t_{j-1},t_j]}(s)  \indfct_{(t_{k-1},t_k]}(t) \Big|^p ds dt  \leq \| \widehat x \|_p^p
\end{equation}
by Jensen's inequality. This guarantees that for any $ D \geq D_0 $:
\begin{align}\label{eq:restterms}
 \inf_{\pi \in \Pi} \mathcal{P}\left(Q_{D}\pi Q_{D} , Q_{D} x Q_{D} \right)  \leq  &  \inf_{\pi \in \Pi} \mathcal{P}\left(Q_{D}\pi Q_{D} , Q_{D_0} x Q_{D_0} \right) \notag \\
 & \qquad  +  2 \exp\left(4 N_\zeta + \| \widehat x \|_1  - 2^{D_0} \varepsilon^2  \ln\left(\frac{ 2^{D_0} \varepsilon^2  }{\beta b  e }\right) \right)  + 4^{\frac{1}{q} } \varepsilon^{\frac{2}{p} }  \left\| \widehat x \right\|_q  . 
\end{align}
Since $ x \in \mathcal{B}_q^+(R) $ and hence  $ \| \widehat x \|_1 \leq \| \widehat x \|_q \leq R $, the two terms in the last line are arbitrarily small if $ \varepsilon > 0 $ is small enough and $ D_0 > 0 $ is large enough. 
Since for $ D \geq D_0 $, we also have $ \xi^*\left( 2 Q_{D} x Q_{D} \right)  \geq \xi^*\left( 2 Q_{D_0} x Q_{D_0} \right)  $ by~Lemma~\ref{lem:relproj}.  It hence remains to bound the sequence of functionals
$$
\mathcal{K}_D(x^{D_0} ) \coloneqq  \inf_{\pi \in \Pi} \mathcal{P}\left(Q_{D}\pi Q_{D} , x^{D_0} \right)  - \frac{1}{2}  \xi^*\left( 2 x^{D_0}\right)  \quad\mbox{defined on $ x^{D_0} := Q_{D_0} x Q_{D_0} $.}
$$
By the finiteness of the supremum, for any $ \varepsilon > 0 $ and any $ D \geq D_0 $, there is $ x_\varepsilon^{D_0}(D) \in  \mathcal{B}_q^+(R) $ such that 
$$
 \sup_{ x \in \mathcal{B}_q^+(R)} \mathcal{K}_D(x^{D_0} )  \leq  \mathcal{K}_D( x_\varepsilon^{D_0}(D)) + \varepsilon  . 
 $$
Via $J_{D_0} $ in \eqref{def:JD}  and $I_{D_0} $ in \eqref{eq:QDrho}, the sequence $  \left( x_\varepsilon^{D_0}(D)  \right)_{D\geq D_0 } $ is bijectively identified with the sequence $  \left(   \big(x_\varepsilon^{D_0}(D)\big)_{k,k'}  \right) $ in $  \mathbb{R}^{2^{D_0}\times2^{D_0}} $. Moreover, since $  x_\varepsilon^{D_0}(D) \in  \mathcal{B}_q^+(R)  $, by~\eqref{eq:monnorm} we have
$$
\frac{1}{2^{2D_0} }\sum_{k,k'=1}^{2^{D_0}} \left|  \big(x_\varepsilon^{D_0}(D)\big)_{k,k'} \right|^q = \left\| \widehat {(Q_{D_0} x_\varepsilon(D) (Q_{D_0})} \right\|_q^q \leq  \left\| \widehat {x_\varepsilon(D)} \right\|_q^q \leq R^q .
$$
By compactness and the equivalence of norms in finite-dimensional vector spaces, there is some  $ \overline{x}_\varepsilon^{D_0} \in \mathcal{S}_{2,D_0}^+ $ and a subsequence $ D_n\to \infty $ such that 
$$ \lim_{n\to \infty} \| x_\varepsilon^{D_0}(D_n)  - \overline{x}_\varepsilon^{D_0} \|_2 = 0 . $$
We now pick $ \pi_\varepsilon(\overline{x}_\varepsilon^{D_0}) \in \Pi $ such that
$$
\inf_{\pi \in \Pi} \mathcal{P}\left(\pi , \overline{x}_\varepsilon^{D_0} \right)  \geq  \mathcal{P}\left( \pi_\varepsilon(\overline{x}_\varepsilon^{D_0})  , \overline{x}_\varepsilon^{D_0} \right)  - \varepsilon . 
$$ 
We have thus established the bound
$$
	 \sup_{ x \in \mathcal{B}_q^+(R)} \mathcal{K}_D(x^{D_0} )   \leq \mathcal{K}_D( x_\varepsilon^{D_0}(D)) + \varepsilon \leq   \mathcal{P}\left(Q_D \pi_\varepsilon(\overline{x}_\varepsilon^{D_0}) Q_D , x_\varepsilon^{D_0}(D) \right) - \frac{1}{2}  \xi^*\left( 2 x_\varepsilon^{D_0}(D) \right)  + \varepsilon , 
$$
with $ \varepsilon > 0 $ arbitrary, where in the second inequality the infimum is estimated by the choice $ \pi_\varepsilon(\overline{x}_\varepsilon^{D_0}) $. 
We now use the Lipschitz continuity, Proposition~\ref{lem:LcontP}, of the Parisi functional to further estimate 
\begin{align*}
 \mathcal{P}\left(Q_D \pi_\varepsilon(\overline{x}_\varepsilon^{D_0}) Q_D , x_\varepsilon^{D_0}(D) \right)  \leq  & \ \mathcal{P}\left(\pi_\varepsilon(\overline{x}_\varepsilon^{D_0}) , \overline{x}_\varepsilon^{D_0} \right) \\
 & \ + L \left( \| x_\varepsilon^{D_0}(D) - \overline{x}_\varepsilon^{D_0}  \|_2 + \int_0^1 \left\|Q_D \pi_\varepsilon(\overline{x}_\varepsilon^{D_0})  Q_D -  \pi_\varepsilon(\overline{x}_\varepsilon^{D_0})  \right\|_2 dm \right) . 
\end{align*}
Along the subsequence $ D_n \to \infty $, the last terms on the right vanish. By the lower semicontinuity of $ \zeta^* $, we also have
$$
 \liminf_{n\to \infty}  \xi^*\left( 2 x_\varepsilon^{D_0}(D_n) \right)  \geq  \xi^*\left( 2 \overline{x}_\varepsilon^{D_0}\right) . 
$$
Since $ \varepsilon > 0 $ was arbitrary, we thus arrive at the following bound on the sub-sequential upper limit:
$$
\limsup_{n\to \infty}  \sup_{ x \in \mathcal{B}_q^+(R)} \mathcal{K}_{D_n}(x^{D_0} ) \leq  \mathcal{K}(\overline{x}_\varepsilon^{D_0} ) \leq \sup_{x \in \mathcal{S}_2^+ }  \mathcal{K}(x) .
$$ 
The proof of the upper bound is completed by taking first $ D_0 \to \infty $ and then $ \varepsilon \downarrow 0 $ in the terms on the second line of~\eqref{eq:restterms}.
\end{proof}

\section{Proof of the main results}\label{sec:PMain}
Our proof of Theorems~\ref{thm:main}, \ref{thm:Pan} and~\ref{thm:mainsc} is based on Chen's versions \cite{Chen23,Chen23b} of a Parisi formula and the approximation results in the previous subsections. We start with the proof of the simplest case, whose Corollary~\ref{cor:differble} we will partially use in the proof of the second identity in Theorem~\ref{thm:main}.

\subsection{Proof of Theorem~\ref{thm:mainsc}}

\begin{proof}[Proof of Theorem~\ref{thm:mainsc}]
By Proposition~\ref{prop:approx} it is enough to investigate the limit $ N \to \infty $ of  $ W_N^D(x)  $ and take a subsequent limit $ D \to \infty $.  Since  $ W_N^D(x)  $ coincides with the self-overlap corrected partition function of a vector spin glass for fixed $ D \in \mathbb{N} $, $ x \in \mathcal{S}_2 $, 
Chen's  result  \cite{Chen23} yields
\begin{align*}
\lim_{N\to \infty} \frac{1}{N} \mathbb{E}\left[ \ln W_N^D(x)   \right] = \inf_{\pi \in \Pi_D } \mathcal{P}_D(\pi, J_D(x)) . 
\end{align*}
The infimum is over paths 
$$ \Pi_D = \left\{ \pi: [0,1) \to \mathcal{D}_D  \; \mbox{c\`adl\`ag} \  \big| \  0 \leq \pi(s) \leq \pi(t) \; \mbox{for all $ s \leq t $ with }\; \pi(0) = 0   \right\}  $$
and for any $ \pi \in \Pi_D $ and $ x \in  \mathcal{S}_{2} $
\begin{multline}
\mathcal{P}_D(\pi, J_D(x)) = \\  \mathbb{E}\left[ \ln \sum_{\alpha \in \mathbb{N}^{r} }\mu^\pi_\alpha \int \exp\left( \langle w_\alpha^D, \sigma \rangle_D +  \langle \sigma , \left(J_D(x) - \tfrac{1}{2} d\zeta_D(\pi(1)) \right) \sigma \rangle_D \right) \nu_{1}^D(d\sigma)\right] + \frac{1}{2} \int_0^1 \theta_D\left(\pi(m) \right) dm .
\end{multline}
The measure $ \nu_{1}^D $ refers to the push-forward of the measure $ \nu_{1} $ which is supported on $ \mathbb{R}^{2^D}$-vectors  of the form~\eqref{eq:defsigmaD}. 
The expectation refers to the average over a centred, $ \mathbb{R}^{2^D} $-valued Gaussian process $ w_\alpha $, whose covariance is 
\begin{align*}
 \mathbb{E}\left[ \langle w_\alpha, \sigma\rangle_D  \langle w_{\alpha'}, \sigma'\rangle_D  \right]  & = \langle \sigma , d\zeta_D(\pi(\alpha \wedge \alpha') \sigma' \rangle_D = \frac{1}{2^{2D}} \sum_{k,k'=1}^{2^D}  \sigma_k  \sigma'_{k'} \ d\zeta_D\left(\pi(\alpha \wedge \alpha')\right)_{k,k'} \\
 & =  d\zeta\left(I_D\left( \pi(\alpha \wedge \alpha') \right) \right) I_D(|\sigma\rangle \langle \sigma' | )  . 
\end{align*}
By definition of the push-forward measure, we may hence identify $ \mathcal{P}_D(\pi, J_D(x)) = \mathcal{P}\left(I_D(\pi), Q_D x Q_D \right)  $ for any $\pi \in  \Pi_D $ and $ x \in \mathcal{S}_{2} $, such that
$$
\inf_{\pi \in \Pi_D } \mathcal{P}_D(\pi, J_D(x)) = \inf_{\pi \in \Pi} \mathcal{P}(Q_D\pi Q_D,   Q_D x Q_D  ) . 
$$
The proof is concluded by an application of Lemma~\ref{lem:infconv}. 
\end{proof} 

\begin{remark}
Alternatively to the above approximation argument, one could have used a straightforward extension of Chen's version~\cite[Sec.~2]{Chen23} of Guerra's interpolation bound~\cite{Guerra:2003aa} for the path integral $ W_N(x) $ to directly establish the upper bound $ \limsup_{N\to \infty} \frac{1}{N} \mathbb{E}\left[ \ln W_N(x)   \right] \leq \inf_{\pi \in \Pi_D } \mathcal{P}(\pi, x) $ for any $ x \in \mathcal{S}_2 $ without any approximation with square-wave pulses. 
\end{remark}

\subsection{Proof of Corollary~\ref{cor:differble}}\label{sec:Cor1.9}

\begin{proof}[Proof of Corollary~\ref{cor:differble}]
Since $ \mathcal{S}_2 \ni x \mapsto \mathcal{P}(\pi,x) $ is $ 1 $-Lipschitz continuous by~\eqref{eq:LcontP}, the $ 1 $-Lipschitz continuity of $ \mathcal{F}(x) = \inf_{\pi\in \Pi} \mathcal{P}(\pi,x) $ is immediate. The convexity follows from Theorem~\ref{thm:mainsc}, which identifies $ \mathcal{F} $ as a pointwise limit of convex functions. The bounds in Proposition~\ref{prop:elpropP}  show that  $ \dom \mathcal{F} =  \mathcal{S}_2  $. \\[1ex]
\noindent
1.~~The Lengendre transform $ \mathcal{F}^* $ is non-negative, since
$$
\mathcal{F}^*(\varrho) \geq -  \mathcal{F}(0) \geq 0  
$$
by Proposition~\ref{prop:elpropP}. 

If $ \varrho \not\in \overline{\mathcal{D}} $, then by the Hahn-Banach separation theorem there is some $ x_0 \in \mathcal{S}_2 $ and $ s \in \mathbb{R} $ such that $  \langle x_0 ,\varrho\rangle  \leq  s < \langle x_0 , \varrho' \rangle $ for all $ \varrho'  \in \overline{\mathcal{D}} $. Since $\supp \nu_1  \subset \overline{\mathcal{D}}$, we thus have
$$
\sup_{\xi \in \supp \nu_1}  \langle x_0 , (| \xi\rangle\langle \xi | - \varrho)\rangle \leq s -   \langle x_0 ,\varrho\rangle   < 0 . 
$$ 
With the help of Proposition~\ref{prop:elpropP} and by estimating the supremum from below by picking $ x = R x_0 $ with $ R > 0 $ arbitrary, we thus conclude that 
$$
 \mathcal{F}^*(\varrho) \geq \sup_{x \in \mathcal{S}_2} \left( \langle x , \varrho \rangle - \mathcal{P}(0,x)  \right) \geq - \ln \int \exp\left( R \langle x_0 , (| \xi\rangle\langle \xi | - \varrho)\rangle \right) \nu_1(d\xi) \geq R ( s -   \langle x_0 ,\varrho\rangle) . 
 $$
Hence $  \mathcal{F}^*(\varrho) = \infty$, since $ R > 0 $ is arbitrary. This proves $ \dom  \mathcal{F}^* \subset  \overline{\mathcal{D}} = \mathcal{D} $, where the last equality reflects the fact that $ \mathcal{D} $ is closed. 

The fact that $  \mathcal{F} = \left(  \mathcal{F}^*\right)^* $ is  the Fenchel–Moreau theorem~\cite{BauCom17}. The latter also guarantees that $ \mathcal{F}^* $ is a proper functional. \\

\noindent
2.~~Since $ \mathcal{F} $  is a proper, convex  and continuous (which is essential to know in the infinite-dimensional situation) functional, the Gateaux differentiability of $ \mathcal{F} $  at $ x \in \mathcal{S}_2 $ follows \cite[Prop.~17.26]{BauCom17} from the fact that the set of subdifferentials $\partial \mathcal{F}(x) $ only contains one point, $ \partial \mathcal{F}(x) =\{ u \} $, which then equals $ d\mathcal{F}(x) $. Recall \cite{BauCom17} that $ u \in \partial \mathcal{F}(x) $ if and only if for any $ y \in \mathcal{S}_2 $:
\begin{equation}\label{eq:subdifferentiable}  
 \mathcal{F}(x) \leq  \mathcal{F}(y) + \langle y-x , u \rangle .
\end{equation}
The argument then proceeds as in~\cite[Proof of Prop.~2.3]{Chen23b} or \cite[Proof of Thm.~3.7]{Pan13}. We will therefore only sketch it. For fixed $ x \in \mathcal{S}_2 $, and $ n \in \mathbb{N} $, one picks $ \pi_n \in \Pi $ such that 
\begin{equation}\label{eq:Festinf}
\mathcal{F}(x) \leq \mathcal{P}(\pi_n,x) \leq \mathcal{F}(x)  + \frac{1}{n} .
\end{equation}
which by~\eqref{eq:subdifferentiable} implies that for all $ y \in \mathcal{S}_2 $, $ \lambda \in (0,1] $ and all $ n, m \in \mathbb{N} $:
\begin{equation}\label{eq:convexitysubdiff}
\frac{ \mathcal{P}(\pi_m,x) -  \mathcal{P}(\pi_m,x-\lambda y) - 1/m }{\lambda} \leq  \langle y, u \rangle \leq \frac{  \mathcal{P}(\pi_n,x+\lambda y)   - \mathcal{P}(\pi_n,x) + 1/n }{\lambda} . 
\end{equation}
From Proposition~\ref{prop:elpropP}, one concludes the Taylor estimate for any $ \pi \in \Pi $:
$$ \left|  \mathcal{P}(\pi ,x\pm\lambda y)  -  \mathcal{P}(\pi,x)  \mp \lambda \langle y, d\mathcal{P}(\pi,x) \rangle \right| \leq  \lambda^2 \| y\|^2 ,
$$
which upon insertion in~\eqref{eq:convexitysubdiff} then yields to the inequalities
$$
 \left|  \langle y, d\mathcal{P}(\pi_m,x) \rangle -  \langle y, d\mathcal{P}(\pi_n,x) \rangle\right| \leq   2 \lambda \| y\|^2 + \frac{2}{\min\{n,m\} \lambda}  , \quad  \left|  \langle y, u \rangle -  \langle y, d\mathcal{P}(\pi_n,x) \rangle \right| \leq  \lambda \| y\|^2 + \frac{1}{n \lambda} .
$$ 
Setting $  \lambda  = \min\{n,m\}^{-1/2} $ respectivley $ \lambda = n^{-1/2} $, the above bounds then show that $ \lim_{n\to \infty}   \langle y, d\mathcal{P}(\pi_n,x) \rangle $ converges. Moreover, the limit is the only point in $ \partial \mathcal{F} $. 

The convergence~\eqref{eq:fvconv} then follows from the convexity of $ G_N $, which implies that for all $ x , y \in \mathcal{S}_2 $ and $ \lambda \in (0,1] $:
$$
\frac{G_N(x) - G_N(x-\lambda y)}{\lambda} \leq \langle y , dG_N(x) \rangle \leq  \frac{G_N(x+\lambda y)-G_N(x)}{\lambda} ,
$$
together with the pointwise convergence~\eqref{eq:mainsc} and the differentiability of the limit. 

The proof of the concentration~\eqref{eq:concentration} proceeds exactly as in~\cite[Proof of Prop. 2.4]{Chen23b}.  In a first step, one establishes the convergence
\begin{equation}\label{eq:conc1a}
\lim_{N\to \infty} \mathbb{E}\left[  \left\langle \big|  Y_N - \langle  Y_N\rangle_{x,N}  \big| \right\rangle_{x,N} \right] =0 \qquad \mbox{with $ Y_N(\bxi) \coloneqq   \langle   \langle r_N( \bxi ,  \bxi ),  y \rangle $.}
\end{equation}
This is based on the bound valid for any $ r > 0 $:
\begin{align*}
	& \mathbb{E}\left[ \left\langle \big|  Y_N(\bxi^{(1)})  -  Y_N(\bxi^{(2)}) \big| \right\rangle_{x,N}  \right] \\ & = \frac{1}{r} \int_0^r  \mathbb{E}\left[\left\langle \big|  Y_N(\bxi^{(1)})  -  Y_N(\bxi^{(2)}) \big| \right\rangle_{x+sy,N}  \right] ds -  \frac{1}{r}  \int_0^r  \int_0^t \frac{d}{ds}  \mathbb{E}\left[\left\langle \big|  Y_N(\bxi^{(1)})  -  Y_N(\bxi^{(2)}) \big| \right\rangle_{x+sy,N}  \right]  ds dt \\
	 & \leq  \frac{1}{r} \int_0^r  \left(  \mathbb{E}\left[\left\langle \big|  Y_N(\bxi^{(1)})  -  Y_N(\bxi^{(2)}) \big|^2 \right\rangle_{x+sy,N}  \right]\right)^{1/2}  \mkern-5mu ds + \frac{8 N}{r}  \int_0^r  \int_0^t  \mathbb{E}\left[\left\langle \big|  Y_N(\bxi^{(1)})  -  Y_N(\bxi^{(2)}) \big|^2 \right\rangle_{x+sy,N}  \right]  ds dt \\
	 & \leq \sqrt{\frac{\varepsilon_N(r)}{r N} } + 8 \varepsilon_N(r) , \quad\mbox{with}\quad \varepsilon_N(r)\coloneqq N \mathbb{E}\left[ \langle y , dG(x+ry) - dG_N(x)  \rangle  \right] . 
\end{align*}
Convexity of $ G_N(x) $ allows to bound
\begin{align}\label{eq:lastconvexitybd}
\limsup_{N\to \infty} \varepsilon_N(r) & \leq \limsup_{N\to \infty} \left( \frac{G_N(x+(r+t) y) - G_N(x+ry)}{t} -  \frac{G_N(x) - G_N(x-ty)}{t} \right) \notag \\
&  =  \frac{\mathcal{F}(x+(r+t) y) - \mathcal{F}(x+ry)}{t} -  \frac{\mathcal{F}(x) - \mathcal{F}(x-ty)}{t}
\end{align}
for any $ t > 0 $, where the equality is by~\eqref{eq:mainsc}. By the differentiability of $ \mathcal{F} $ at $ x $, the right side converges to zero if one first takes the limit  $ r \downarrow 0 $ and then $ t \downarrow 0 $. This proves~\eqref{eq:conc1a}.

The second step rests on the concentration of the free energy, stated in Proposition~\ref{Prop:self-averaging}, which addresses 
$$ \Phi_N(x) \coloneqq N^{-1} \ln W_N(x) ,
$$  together with the basic convexity bounds for 
$ \langle  Y_N\rangle_{x,N}  - \mathbb{E}\left[  \langle  Y_N\rangle_{x,N}\right]  =\langle y , d \Phi_N(x) - dG_N(x) \rangle $. More precisely, for any $ t > 0 $  with $ \delta_N(t) \coloneqq \left|  \Phi_N(x+ t y) - G_N(x+ty) \right| + \left|  \Phi_N(x- t y) - G_N(x-ty) \right| + 2 \left|  \Phi_N(x) - G_N(x) \right| $, we have
\begin{equation}\label{eq:fluctderivative}
\mathbb{E}\left[ \left| \langle y , d \Phi_N(x) - dG_N(x) \rangle\right| \right] \leq \frac{1}{t}\left[G_N(x+ty) + G_N(x-ty) - 2 G_N(x)  + \mathbb{E}\left[\delta_N(t) \right] \right] .
\end{equation}
From Proposition~\ref{Prop:self-averaging} one concludes
$ \limsup_{N\to \infty} \sup_{t\in[0,1] } \mathbb{E}\left[\delta_N(t) \right] = 0 $. By the same argument as in~\eqref{eq:lastconvexitybd}, one thus shows that the upper limit $ N \to \infty $ of the right side in~\eqref{eq:fluctderivative} 
converges to zero as $ t \downarrow 0 $. 
\end{proof}
 
 \subsection{Proof of Theorems~\ref{thm:main} and~\ref{thm:Pan}}

The proof of the main results, Theorems~\ref{thm:main} and~\ref{thm:Pan}, is  more involved than that of Theorem~\ref{thm:mainsc} due to the presence of the outer supremum in their statements. 
Before we spell the approximation argument, we establish the representation~\eqref{eq:TolandSinger}  and through it, the fact that the outer supremum in~\eqref{eq:thmmain} may be restricted to a sufficiently large ball. 
\begin{lemma}\label{lem:TolandSinger}
Under Assumption~\ref{ass}:
\begin{enumerate}
\item
for $  \mathcal{F}(x) = \inf_{\pi \in \Pi} \mathcal{P}\left(\pi , x  \right) $, $ x \in \mathcal{S}_2 $, and its Legendre transform $ \mathcal{F}^*(\varrho) = \sup_{x \in \mathcal{S}_2} \left( \langle x , \varrho \rangle -   \mathcal{F}(x)  \right) $:
\begin{equation}\label{eq:TS2}
 \sup_{x \in \mathcal{S}_2^+} \left[ \mathcal{F}\left(x \right) - \frac{1}{2} \zeta^*(2x)  \right]  = 
 \sup_{\varrho \in \mathcal{D} } \left[ \frac{1}{2} \zeta(\varrho)  -  \mathcal{F}^*\left(\varrho  \right)\right]  .
\end{equation} 
The supremum in the left side agrees with the supremum restricted to $ B_2(N_\zeta) = \left\{ x \in \mathcal{S}_2^+ | \ \| x \|_2 \leq N_\zeta \right\}$.

\item for any $ D \in \mathbb{N} $ let $  \mathcal{F}_D(x) = \inf_{\pi \in \Pi} \mathcal{P}\left(Q_D \pi Q_D , Q_D x  Q_D \right) $, $ x \in \mathcal{S}_2 $, and set its Legendre transform $ \mathcal{F}_D^*(\varrho) = \sup_{x \in \mathcal{S}_2} \left( \langle  Q_D x Q_D , \varrho \rangle -   \mathcal{F}_D(x)  \right) $. One has $  \dom\mathcal{F}_D^* \subset \mathcal{D}^D \coloneqq \{ Q_D \varrho Q_D \ | \ \varrho \in \mathcal{D} \} $, and 
\begin{equation}\label{eq:TS3}
 \sup_{x \in \mathcal{S}_2^+} \left[ \mathcal{F}_D\left(x \right) - \frac{1}{2} \zeta^*(2Q_D x Q_D)  \right]  = 
 \sup_{\varrho \in \mathcal{D} } \left[ \frac{1}{2} \zeta(Q_D\varrho Q_D)  -  \mathcal{F}_D^*\left(\varrho  \right)\right]  .
\end{equation}
The supremum in the left side agrees with the supremum restricted to $ B_2(N_\zeta)  $.
\end{enumerate}
\end{lemma}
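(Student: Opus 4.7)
The first identity in part 1 will be derived from a Toland-Singer type duality obtained by two applications of Fenchel-Moreau. Since Corollary~\ref{cor:differble} provides $\mathcal{F} = (\mathcal{F}^*)^*$ together with $\dom \mathcal{F}^* \subset \mathcal{D}$, my plan is to write
\[
\mathcal{F}(x) = \sup_{\varrho \in \mathcal{D}} \bigl[\langle x,\varrho \rangle - \mathcal{F}^*(\varrho)\bigr],
\]
substitute this into the left-hand side of~\eqref{eq:TS2}, interchange the two suprema, and carry out the inner supremum over $x \in \mathcal{S}_2^+$. After the change of variables $u = 2x$, which is harmless since $\mathcal{S}_2^+$ is a cone, the inner supremum reads $\tfrac{1}{2}\sup_{u \in \mathcal{S}_2^+}\bigl[\langle u,\varrho\rangle - \zeta^*(u)\bigr] = \tfrac{1}{2}\zeta^{**}(\varrho) = \tfrac{1}{2}\zeta(\varrho)$, where the first equality uses $\dom \zeta^* = \mathcal{S}_2^+$ to extend the supremum to all of $\mathcal{S}_2$, and the second is Fenchel-Moreau applied to the proper, convex, weakly lower semicontinuous extension of $\zeta$ from Lemma~\ref{lem:zetaprop}.

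For the ball restriction, my plan is to transfer a near-maximizer from the right-hand side of~\eqref{eq:TS2} to the left and bound its Hilbert-Schmidt norm. Given $\varepsilon > 0$, pick $\varrho_\varepsilon \in \mathcal{D}$ whose value on the right-hand side of~\eqref{eq:TS2} is within $\varepsilon$ of the supremum, and set $x_\varepsilon \coloneqq \tfrac{1}{2} d\zeta(\varrho_\varepsilon)$. The uniform pointwise bound $|\widehat{d\zeta(\varrho)}(s,t)| \leq \sum_{p\geq 1} 2p\beta_{2p}^2 = N_\zeta$ for every $\varrho \in \mathcal{D}$, which follows from Lemma~\ref{lem:zetaprop}, yields $\|x_\varepsilon\|_2 \leq N_\zeta/2$, hence $x_\varepsilon \in B_2(N_\zeta)$. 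Gateaux differentiability of $\zeta$ at $\varrho_\varepsilon$ ensures $d\zeta(\varrho_\varepsilon) \in \partial \zeta(\varrho_\varepsilon)$, so Fenchel-Young equality gives $\tfrac{1}{2}\zeta^*(2 x_\varepsilon) = \langle x_\varepsilon,\varrho_\varepsilon\rangle - \tfrac{1}{2}\zeta(\varrho_\varepsilon)$, while the Fenchel-Young inequality for $\mathcal{F}$ produces $\mathcal{F}(x_\varepsilon) \geq \langle x_\varepsilon,\varrho_\varepsilon\rangle - \mathcal{F}^*(\varrho_\varepsilon)$. Subtracting, one obtains $\mathcal{F}(x_\varepsilon) - \tfrac{1}{2}\zeta^*(2 x_\varepsilon) \geq \tfrac{1}{2}\zeta(\varrho_\varepsilon) - \mathcal{F}^*(\varrho_\varepsilon)$, which is within $\varepsilon$ of the supremum; letting $\varepsilon \downarrow 0$ completes the argument.

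Part 2 follows the same two-step recipe transported to the finite-dimensional subspace $Q_D \mathcal{S}_2 Q_D$. Since $\mathcal{F}_D$ depends on $x$ only through $Q_D x Q_D$, it descends to a proper, convex, Lipschitz functional on that subspace, and an argument entirely analogous to that of Corollary~\ref{cor:differble}, Hahn-Banach separating from the image under projection of $\supp \nu_1$, yields $\dom \mathcal{F}_D^* \subset \mathcal{D}^D$. The Fenchel-Moreau step in the inner supremum again collapses $\tfrac{1}{2}\zeta^*(2 Q_D x Q_D)$ to $\tfrac{1}{2}\zeta(Q_D \varrho Q_D)$, with the consistency relations $\zeta^*(2 Q_D x Q_D) = \zeta_D^*(2 J_D x)$ and $\zeta(Q_D \varrho Q_D) = \zeta_D(J_D \varrho)$ from Lemma~\ref{lem:relproj}; on the finite-dimensional subspace the Fenchel-Moreau theorem is automatic. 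The ball restriction is delivered by the same kernel bound on $d\zeta$ exactly as in part~1.

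The main obstacle I foresee is verifying the two Fenchel-Moreau ingredients in infinite dimensions: the biconjugate identity $\zeta^{**} = \zeta$, which hinges on the weak lower semicontinuity of the extended $\zeta$ from Lemma~\ref{lem:zetaprop}, and the domain restriction $\dom \mathcal{F}^* \subset \mathcal{D}$, which is the content of Corollary~\ref{cor:differble}. Once these two facts are secured, the identities~\eqref{eq:TS2} and~\eqref{eq:TS3} follow mechanically from the (always legal) interchange of suprema, and the ball restrictions are routine consequences of Fenchel-Young duality combined with the uniform kernel bound on $d\zeta$.
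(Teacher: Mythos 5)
Your proposal is correct and follows essentially the same route as the paper: the paper invokes the Toland--Singer theorem as a black box (citing Bauschke--Combettes, Cor.\ 14.20) together with Corollary~\ref{cor:differble} to identify $(\mathcal{F}^*)^* = \mathcal{F}$ and $\dom \mathcal{F}^* \subset \mathcal{D}$, whereas you unroll Toland--Singer by writing $\mathcal{F} = (\mathcal{F}^*)^*$, interchanging the two suprema, and computing the inner supremum via Fenchel--Moreau for $\zeta$ --- which is precisely the standard proof of Toland--Singer, so the two presentations are equivalent in content. Your ball-restriction argument (near-maximizer $\varrho_\varepsilon$, the Lagrangian point $x_\varepsilon = \tfrac{1}{2}\,d\zeta(\varrho_\varepsilon)$, the Fenchel--Young equality via Gateaux differentiability, and the operator-norm bound $\|d\zeta(\varrho)\| \le N_\zeta$ from Lemma~\ref{lem:zetaprop}) is the same as the paper's, and your treatment of part~2 on the subspace $Q_D\mathcal{S}_2 Q_D$ --- including the Hahn--Banach separation and the consistency relations from Lemma~\ref{lem:relproj} --- matches the paper's reasoning as well.
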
 
\begin{proof}
The representation~\eqref{eq:TolandSinger} follows from the Toland-Singer theorem~\cite[Cor. 14.20]{BauCom17}, which ensures the equivalence of the dual extremal problems:
\begin{equation}\label{eq:TolandSingerabs}
\sup_{2 x \in \dom \zeta^*} \left( \left(\mathcal{F}^*\right)^*(x) - \tfrac{1}{2} \zeta^*(2x) \right) = \sup_{\varrho \in \dom \mathcal{F}^* } \left( \tfrac{1}{2} \zeta(\varrho) - \mathcal{F}^*(\varrho) \right) .
\end{equation}
By Corollary~\ref{cor:differble} one has $ \left(\mathcal{F}^*\right)^* = \mathcal{F} $. Since also $ \dom \zeta^* = \mathcal{S}_2^+ $, the left side in~\eqref{eq:TolandSingerabs} agrees with the right side in~\eqref{eq:TS2}. Moreover, since $  \dom \mathcal{F}^* \subset \mathcal{D}$ by Corollary~\ref{cor:differble}, and $  \mathcal{D} \subset \dom \zeta $ by Lemma~\ref{lem:zetaprop}, the supremum in the right side of~\eqref{eq:TolandSingerabs} agrees with the one over $ \varrho \in \mathcal{D} $.  This completes the proof of~\eqref{eq:TS2}. 

To show that one may  restrict the supremum in the left side of \eqref{eq:TS2} to the ball $ B_2(N_\zeta) $, we fix $ \varepsilon > 0 $ arbitrary. Since the supremum in the right side is finite by Proposition~\ref{prop:elpropP}, we may choose $ \varrho_\varepsilon \in \mathcal{D}$ such that 
$$
 \sup_{\varrho \in \mathcal{D}} \left( \tfrac{1}{2} \zeta(\varrho) - \mathcal{F}^*(\varrho) \right) \leq \tfrac{1}{2} \zeta(\varrho_\varepsilon ) - \mathcal{F}^*(\varrho_\varepsilon ) + \varepsilon .
$$
We now pick
\begin{equation}\label{eq:xfromrho}
2 x_\varepsilon \coloneqq d  \zeta(\varrho_\varepsilon) ,
\end{equation}
where $ d  \zeta(\varrho_\varepsilon) $ is the Gateaux derivative at $  \varrho_\varepsilon \in \mathcal{D} $, which by Lemma~\ref{lem:zetaprop} is in $ \mathcal{S}_2^+ $ and bounded 
$$
\| x_\varepsilon  \|_2 \leq \frac{N_\zeta}{2} .
$$
The differentiability also implies \cite[Prop. 16.9]{BauCom17} that $ \frac{1}{2}\left(\zeta(  \varrho_\varepsilon) + \zeta^*(2  x_\varepsilon)  \right) =  \frac{1}{2}  \langle 2 x_\varepsilon,  \varrho_\varepsilon \rangle =  \langle x_\varepsilon , \varrho_\varepsilon \rangle $.  Hence
\begin{align}\label{eq:upperbdxball}
 \sup_{\varrho \in \mathcal{D}} \left( \tfrac{1}{2} \zeta(\varrho) - \mathcal{F}^*(\varrho) \right) &  \leq  \tfrac{1}{2} \zeta(\varrho_\varepsilon ) - \langle \varrho_\varepsilon , x_\varepsilon\rangle +  \mathcal{F}(x_\varepsilon ) + \varepsilon  = \mathcal{F}(x_\varepsilon ) -   \tfrac{1}{2} \zeta^*(2 x_\varepsilon ) + \varepsilon .  
\end{align}
From~\eqref{eq:TS2} and since $ \varepsilon > 0 $  was arbitrary, we thus conclude that the supremum in the left side of~\eqref{eq:TS2} agrees with the one restricted to the ball $ B_2(N_\zeta) \ni x_\varepsilon $. \\

The claim concerning the domain of $ \mathcal{F}_D^* $ is established along the lines of the corresponding result in Corollary~\ref{cor:differble}. Note that $ \mathcal{D}^D $ is closed, $ \mathcal{D}^D = \overline{\mathcal{D}^D}$, since $ \mathcal{D} $ is closed. From Proposition~\ref{prop:elpropP}  one concludes the lower bound
$$
 \mathcal{F}_D^*(\varrho) \geq  - \ln \int \exp\left( \langle x , Q_D \left( |\xi\rangle \langle \xi| -  \varrho \right) Q_D \rangle \right) \nu_1(d\xi) ,
$$
for any $ x \in \mathcal{S}_2 $. If $ \varrho \not\in \overline{\mathcal{D}^D} $, then by a Hahn-Banach separation argument, there is $ x_0 \in \mathcal{S}_2 $ and $ s \in \mathbb{R} $ such that $ \sup_{\xi \in \supp \nu_1}  \langle x_0 , Q_D (| \xi\rangle\langle \xi | - \varrho) Q_D \rangle \leq s -   \langle x_0 , Q_D\varrho Q_D \rangle   < 0  $. Picking again $ x = R x_0 $ with $ R > 0 $ arbitrarily large, we conclude $ \dom\mathcal{F}_D^* \subset  \overline{\mathcal{D}^D}   $.
The Toland-Singer theorem then implies~\eqref{eq:TS3}  with the supremum taken over $ \varrho \in \dom \mathcal{F}_D^* $ instead of $  \varrho \in \mathcal{D} $. Since $ \dom \mathcal{F}_D^*  \subset \overline{\mathcal{D}^D} = \mathcal{D}^D $ and $ \zeta(Q_D \varrho Q_D) < \infty $ for any $ \varrho \in \mathcal{D} $, we then arrive at~\eqref{eq:TS3}. 

To see that the supremum in the left side  can be restricted to $B_2(N_\zeta) $, we proceed analogously as above. Note that by definition $ \mathcal{F}^*_D(\varrho) =  \mathcal{F}^*_D(Q_D \varrho Q_D ) $ for any $ \varrho \in \dom  \mathcal{F}^*_D $. For $ \varepsilon > 0 $ we first pick $ \varrho_\varepsilon \in \mathcal{D} $ such that $Q_D \varrho_\varepsilon Q_D \in \mathcal{D}^D $ and
$$
 \sup_{\varrho \in \mathcal{D} } \left( \tfrac{1}{2} \zeta(Q_D\varrho Q_D) - \mathcal{F}_D^*(\varrho ) \right) \leq \tfrac{1}{2} \zeta(Q_D\varrho_\varepsilon Q_D) - \mathcal{F}_D^*(Q_D \varrho_\varepsilon Q_D) + \varepsilon .
$$
We then set $
2 x_\varepsilon \coloneqq d  \zeta(Q_D\varrho_\varepsilon Q_D) $ similarly as in~\eqref{eq:xfromrho} such that  $ x_\varepsilon \in B_2(N_\zeta) $ and $ x_\varepsilon = Q_D x_\varepsilon Q_D $ (cf.\ Lemma~\ref{lem:relproj}). An identical argument as in~\eqref{eq:upperbdxball} yields
$$
 \sup_{x \in \mathcal{S}_2^+} \left[ \mathcal{F}_D\left(x \right) - \frac{1}{2} \zeta^*(2Q_D x Q_D)  \right]  \leq    \mathcal{F}_D(x_\varepsilon) -   \tfrac{1}{2} \zeta^*(2 Q_D x_\varepsilon Q_D) +  \varepsilon . 
$$
This completes the proof.
\end{proof} 
We are finally ready to spell the proof of the main results.  
\begin{proof}[Proof of Theorem~\ref{thm:main}] 
Since the representation~\eqref{eq:TolandSinger} is proven in Lemma~\ref{lem:TolandSinger}, 
it remains to establish the convergence~\eqref{eq:thmmain}. To do so, Proposition~\ref{prop:approx} allows us to approximate the $ N \to \infty $ limit  by
\begin{equation}\label{eq:Chen1}
\lim_{N\to \infty} \frac{1}{N} \mathbb{E}\left[ \ln Z_N^{D}   \right]  = \sup_{x \in \mathcal{S}_{2}^+ } \left[ \inf_{\pi \in \Pi_D } \mathcal{P}_{D}(\pi, J_D(x)) - \tfrac{1}{2} \xi_{D}^*(2J_D(x)) \right] ,
\end{equation}
where the equality is \cite[Thm.~1.2]{Chen23}, and $ \xi_{D} $ and $  \xi_{D}^* $ are defined as in~\eqref{eq:defzetaD} and \eqref{def:LenegdreD}  respectively using $ \widehat\zeta $ of Assumption~\ref{ass}.  Since $ x \in \mathcal{S}_2^+ $ is projected via $ J_D $ onto the non-negative $ 2^D\times 2^D $ matrices $ \mathcal{S}_{2,D}^+ $, the supremum is effectively over this set.  The infimum is over paths 
$$ \Pi_D = \left\{ \pi: [0,1) \to \mathcal{D}_D  \; \mbox{c\`adl\`ag} \  \big| \  0 \leq \pi(s) \leq \pi(t) \; \mbox{for all $ s \leq t $ with }\; \pi(0) = 0   \right\}  $$
and for any $ \pi \in \Pi_D $ and $ x \in  \mathcal{S}_{2,D} $
\begin{multline}\label{eq:VGParisiF}
\mathcal{P}_{D}(\pi, x) =  \mathbb{E}\left[ \ln \sum_{\alpha \in \mathbb{N}^{r} }\mu^\pi_\alpha \int \exp\left( \langle w_\alpha , \sigma \rangle_D + \langle \sigma , \left(x - \tfrac{1}{2} d\zeta_D(\pi(1)) \right) \sigma \rangle_D \right) \nu_1^D(d\sigma)\right] + \frac{1}{2} \int_0^1 \theta_D\left(\pi(s) \right) ds .
\end{multline}
The measure $ \nu_1^D $ refers to the push-forward of the measure $ \nu_1 $ which is supported on $ \mathbb{R}^{2^D}$-vectors  of the form~\eqref{eq:defsigmaD}. 
The expectation is the average over a centred, $ \mathbb{R}^{2^D} $-valued Gaussian process $ w_\alpha $ with $\alpha \in \mathbb{N}^{r} $, whose covariance is 
\begin{align*}
 \mathbb{E}\left[ \langle w_\alpha , \sigma\rangle  \langle w_{\alpha'} , \sigma'\rangle  \right]  & = \langle \sigma , d\zeta_{D}(\pi(\alpha \wedge \alpha') \sigma' \rangle_D = \frac{1}{2^{2D}} \sum_{k,k'=1}^{2^D}  \sigma_k  \sigma'_{k'} \ d\zeta_{D}\left(\pi(\alpha \wedge \alpha')\right)_{k,k'} \\
 & =  d\zeta\left(I_D\left( \pi(\alpha \wedge \alpha') \right) \right) I_D(|\sigma\rangle \langle \sigma' | )  . 
\end{align*}
By definition of the push-forward measure, Lemma~\ref{lem:relproj} and \eqref{eq:defthetaD}, we may hence identify for any $\pi \in  \Pi_D $ and $ x \in \mathcal{S}_{2} $:
$$
\mathcal{P}_{D}(\pi, J_D(x)) = \mathcal{P}\left(I_D(\pi), Q_D x Q_D \right) .
$$ 
Based on this and another application of Lemma~\ref{lem:relproj}, we may rewrite
\begin{equation}\label{eq:supinfD}
 \sup_{x \in \mathcal{S}_{2}^+ } \left[ \inf_{\pi \in \Pi_D } \mathcal{P}_{D}(\pi, J_D(x)) - \tfrac{1}{2} \xi_{D}^*(2 J_D(x)) \right]   =  \sup_{x \in \mathcal{S}_{2}^+ } \left[ \inf_{\pi \in \Pi} \mathcal{P}\left(Q_D\pi Q_D , Q_D x Q_D \right) - \tfrac{1}{2} \xi^*\left( 2 Q_D x Q_D \right) \right] . 
\end{equation}
To control the subsequent limit $ D \to \infty $, we derive matching lower and upper bounds from Lemma~\ref{lem:maxminconv}.\\

For the lower bound, Lemma~\ref{lem:maxminconv} together with the above consequence \eqref{eq:Chen1} of Chen's result yields
\begin{align*}
 \liminf_{D\to\infty}\liminf_{N\to \infty} \frac{1}{N} \mathbb{E}\left[ \ln Z_N^{D}   \right] & = \liminf_{D\to\infty} \sup_{x \in \mathcal{S}_{2}^+} \left[ \inf_{\pi \in \Pi} \mathcal{P}\left(Q_D\pi Q_D , Q_D x Q_D \right) - \tfrac{1}{2} \xi\left( 2 Q_D x Q_D \right) \right] \\ &  \geq \sup_{x \in \mathcal{S}_{2}^+ } \left[ \inf_{\pi \in \Pi} \mathcal{P}\left(\pi , x  \right) - \tfrac{1}{2} \xi\left( 2  x \right) \right] . 
\end{align*}

For the upper bound, we first use Lemma~\ref{lem:TolandSinger} to restrict the supremum to $ \mathcal{B}^+_2(N_\zeta) = \left\{ x \in \mathcal{S}_2^+ | \ \| x \|_2 \leq N_\zeta \right\}$,
$$
\sup_{x \in \mathcal{S}_{2}^+ }\! \left[ \inf_{\pi \in \Pi} \mathcal{P}\left(Q_D\pi Q_D , Q_D x Q_D \right) - \tfrac{1}{2} \xi^*\left( 2 Q_D x Q_D \right) \right]  = \mkern-5mu \sup_{x \in \mathcal{B}^+_2(N_\zeta) } \!\left[ \inf_{\pi \in \Pi}\mathcal{P}\left(Q_D\pi Q_D , Q_D x Q_D \right) - \tfrac{1}{2} \xi^*\left( 2 Q_D x Q_D \right) \right] .
$$ 
Then Lemma~\ref{lem:maxminconv} with $ q = 2 $ ensures that there is a sequence $ D\to \infty $ such that 
\begin{align*}
\limsup_{D\to\infty}\limsup_{N\to \infty} \frac{1}{N} \mathbb{E}\left[ \ln Z_N^{D}    \right] 
 & =   \limsup_{D\to\infty} \sup_{x \in B_2(N_\zeta)  }  \left[ \inf_{\pi \in \Pi} \mathcal{P}\left(Q_{D}\pi Q_{D} , Q_{D} x Q_{D} \right) - \tfrac{1}{2} \xi^*\left( 2 Q_{D} x Q_{D} \right) \right]  \\
&  \leq  \sup_{x \in \mathcal{S}_{2}^+ } \left[ \inf_{\pi \in \Pi} \mathcal{P}\left(\pi , x  \right) - \tfrac{1}{2} \xi\left( 2  x \right) \right] . 
\end{align*} 
This completes the proof of~\eqref{eq:thmmain}.
\end{proof} 

We finally spell the proof of the Theorem~\ref{thm:Pan}.
\begin{proof}[Proof of Theorem~\ref{thm:Pan}] 
The set inclusion $ \Pi_\varrho \subset \Pi $ for any endpoint $ \varrho \in \mathcal{D} $, implies through the representation~\eqref{eq:TolandSinger} the upper bound
$$
\limsup_{N\to \infty} \frac{1}{N} \mathbb{E}\left[ \ln Z_N   \right] \leq   \sup_{\varrho \in \mathcal{D} } \left[\inf_{x \in \mathcal{S}_2} \inf_{\pi \in \Pi_{\varrho}}  \mathcal{P}\left(\pi, x \right) - \langle x , \varrho  \rangle + \frac{1}{2} \zeta(\varrho)  \right] \eqqcolon F  .
$$
Since the right side is finite by Proposition~\ref{prop:elpropP}, for any $ \varepsilon > 0 $ there is some $ \varrho_\varepsilon \in \mathcal{D} $ such that 
$$
F \leq \inf_{x \in \mathcal{S}_2} \inf_{\pi \in \Pi_{\varrho_\varepsilon}}  \mathcal{P}\left(\pi, x \right) - \langle x , \varrho_\varepsilon  \rangle + \frac{1}{2} \zeta(\varrho_\varepsilon) + \varepsilon . 
$$

A matching lower bound is derived again by an approximation argument. Using the existing results  \cite[Thm.~1.2]{Chen23}, \cite[Prop 3.1]{Chen23b} (see also \cite{Pan18}) for vector-spin glasses, we conclude that
\begin{equation}\label{eq:Chen2}
\lim_{N\to \infty} \frac{1}{N} \mathbb{E}\left[ \ln Z_N^{D}   \right]   = \sup_{\varrho \in \mathcal{D}_D } \left[\inf_{x \in \mathcal{S}_{2,D}} \inf_{\pi \in \Pi_{\varrho, D}}  \mathcal{P}_D\left(\pi, x \right) - \langle x , \varrho  \rangle + \frac{1}{2} \zeta_D(\varrho)  \right],
\end{equation}
where  we use the notation~\eqref{eq:VGParisiF} from the proof of Theorem~\ref{thm:main}. The new ingredient is the set of paths
$$
 \Pi_{\varrho, D} \coloneqq \left\{ \pi: [0,1) \to \mathcal{D}_D  \; \mbox{c\`adl\`ag} \  \big| \  0 \leq \pi(s) \leq \pi(t) \; \mbox{for all $ s \leq t $ with }\; \pi(0) = 0  \; \mbox{and} \;  \pi(1) = \varrho \right\}  , 
$$
which end at $ \varrho \in \mathcal{D}_D $. Proceeding as in the proof Theorem~\ref{thm:main}, we conclude that the right side in~\eqref{eq:Chen2} is equal to 
\begin{multline*}
 \sup_{\varrho \in \mathcal{D}} \left[\inf_{x \in \mathcal{S}_{2}} \inf_{\pi \in \Pi_{\varrho}}  \mathcal{P}\left(Q_D\pi Q_D , Q_D x Q_D \right) - \langle Q_D x Q_D, Q_D \varrho  Q_D\rangle + \frac{1}{2} \zeta(Q_D\varrho Q_D)  \right] \\
 \geq  \inf_{x \in \mathcal{S}_{2}} \inf_{\pi \in \Pi_{\varrho_\varepsilon}}  \mathcal{P}\left(Q_D\pi Q_D , Q_D x Q_D \right) - \langle Q_D x Q_D,  \varrho_\varepsilon \rangle + \frac{1}{2} \zeta(Q_D\varrho_\varepsilon Q_D)  .
\end{multline*}
The lower limit $ D \to \infty $ is estimated from below using Lemma~\ref{lem:infconv} and $ \liminf_{D \to \infty} \zeta(Q_D\varrho_\varepsilon Q_D)  \geq \zeta(\varrho_\varepsilon ) $ by  the lower semicontinuity. We thus arrive at
$$
\liminf_{D\to \infty} \lim_{N\to \infty} \frac{1}{N} \mathbb{E}\left[ \ln Z_N^{D}   \right]  \geq  \inf_{x \in \mathcal{S}_2} \inf_{\pi \in \Pi_{\varrho_\varepsilon}}  \mathcal{P}\left(\pi, x \right) - \langle x , \varrho_\varepsilon  \rangle + \frac{1}{2} \zeta(\varrho_\varepsilon)  \geq F - \varepsilon ,
$$ 
which, using Proposition~\ref{prop:approx}, concludes the proof of a matching lower bound. 
\end{proof}
\appendix

\section{Hadamard-Schur product of certain integral operators}\label{app:HS}
In this appendix, we extend the well known properties of the Hadamard-Schur product of matrices~\cite{Bat97,Hiai:2009aa} to certain integral operators. 
We consider integral operators $ \varrho $  on the complex Hilbert space $ L^2(0,1) $  with a bounded, measurable  kernel $ \widehat \varrho : (0,1)^2\to \mathbb{C} $. Such integral operators are automatically Hilbert-Schmidt and hence bounded. 
Their Hadamard-Schur product $ \varrho \odot \pi $  is the bounded integral operator on $ f \in L^2(0,1) $:
\begin{equation}\label{eq:defSchurproduct}
	\left(\varrho \odot \pi \  f \right)(t) \coloneqq \int_0^1 \widehat  \varrho(t,s) \ \widehat \pi(t,s) \ f(s) ds .
\end{equation}

\subsection{Positivity and monotonicity}
We start by recording the analogue of the positivity of the Hadamard-Schur product of matrices. 
\begin{proposition}
If $ \varrho, \pi \geq 0 $ are non-negative integral operators on $ L^2(0,1) $ with bounded kernels, so is their Hadamard-Schur product $  \varrho \odot \pi \geq 0 $.
\end{proposition}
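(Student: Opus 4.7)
The plan is to reduce to the classical Schur product theorem for finite-dimensional positive-semidefinite matrices by exploiting a spectral decomposition, with the only infinite-dimensional subtlety being the justification of the termwise manipulations. First, since both $\varrho$ and $\pi$ are non-negative Hilbert-Schmidt operators with bounded kernels, the Weidmann regularity invoked in Proposition~\ref{lem:propzetak2} yields that they are trace class. Hence each admits a spectral decomposition
\[
\varrho = \sum_{n} \lambda_n |e_n\rangle\langle e_n|, \qquad \lambda_n \geq 0, \quad \sum_n \lambda_n < \infty,
\]
with $\{e_n\}$ orthonormal in $L^2(0,1)$, and the Mercer-type diagonal identity~\eqref{eq:regkern} gives the pointwise bound $\sum_n \lambda_n |e_n(s)|^2 = \widehat\varrho(s,s) \leq \|\widehat\varrho\|_\infty$ for a.e.\ $s$.

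Next, I would evaluate the quadratic form of $\varrho \odot \pi$ on an arbitrary $g \in L^2(0,1)$. Substituting the spectral expansion of $\widehat\varrho$ into the Hadamard kernel and setting $h_n(s) := \overline{e_n(s)}\, g(s)$, a formal computation yields
\[
\langle g, (\varrho \odot \pi) g\rangle = \sum_{n} \lambda_n \langle h_n, \pi h_n\rangle,
\]
where each summand is non-negative since $\pi \geq 0$. The key input to make this rigorous is the pointwise Mercer-type bound, which implies
\[
\sum_n \lambda_n \|h_n\|_2^2 = \int_0^1 \Big(\sum_n \lambda_n |e_n(s)|^2\Big) |g(s)|^2\, ds \leq \|\widehat\varrho\|_\infty \|g\|_2^2 < \infty.
\]
Hence each $h_n$ lies in $L^2$, the series of non-negative terms is absolutely summable, and the interchange of sum and double integral is justified by Fubini via truncation $\varrho^{(N)} := \sum_{n \leq N} \lambda_n |e_n\rangle\langle e_n|$.

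Finally, I would pass to the limit $N \to \infty$. Since $\widehat\varrho^{(N)} \to \widehat\varrho$ in $L^2((0,1)^2)$ and $\widehat\pi$ is essentially bounded, the Hadamard product kernels converge in $L^2((0,1)^2)$, i.e.\ $\varrho^{(N)} \odot \pi \to \varrho \odot \pi$ in Hilbert-Schmidt norm, hence in the weak operator topology. Since $\langle g, (\varrho^{(N)} \odot \pi) g\rangle \geq 0$ for every finite $N$, the limiting quadratic form is also non-negative, concluding $\varrho \odot \pi \geq 0$.

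The main obstacle I anticipate is the absolute summability of $\sum_n \lambda_n \langle h_n, \pi h_n\rangle$ for general $g \in L^2$: this is precisely what the bounded-kernel hypothesis buys us through the Mercer-type diagonal representation, and without it the formal rewriting as a sum of non-negative rank-one operators would fail because the pointwise products of eigenfunctions need not belong to $L^2$.
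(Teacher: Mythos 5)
Your proof is correct, but it takes a genuinely different route from the paper's. The paper avoids spectral decompositions and the trace-class machinery entirely: it rewrites the quadratic form $\langle f, (\varrho \odot \pi) f\rangle$ as the $r\downarrow 0$ limit of $\langle f_r^{(2)}, (\varrho \otimes \pi) f_r^{(2)}\rangle$, where $f_r^{(2)}(t_1,t_2) = (2r)^{-1}\,1[|t_1-t_2|<r]\, f(t_1)^{1/2}f(t_2)^{1/2}$ is a mollified lift of $f$ onto the diagonal of $(0,1)^2$, and the identity is justified by Lebesgue differentiation. Non-negativity then follows immediately because $\varrho\otimes\pi\geq 0$ is elementary for any non-negative operators. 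You instead run the standard Schur product theorem argument: invoke Weidmann's result to obtain a trace-class spectral decomposition $\varrho=\sum_n\lambda_n|e_n\rangle\langle e_n|$ with a.e.\ bounded eigenfunctions (via the Mercer-type diagonal bound in Proposition~\ref{lem:propzetak2}), reduce the quadratic form to $\sum_n\lambda_n\langle h_n,\pi h_n\rangle$ with $h_n=\overline{e_n}\,g$, observe termwise non-negativity, and handle the infinite rank by truncation and Hilbert-Schmidt convergence (using that multiplication by the bounded kernel $\widehat\pi$ is continuous on $L^2$-kernels). Both arguments are sound; what the paper's version buys is that it needs none of the Weidmann/Mercer input and the same $f_r^{(k)}$ construction is immediately reused for the monotonicity Proposition~\ref{pop:monotonicity}, whereas your approach would require iterating the spectral decomposition of each factor for higher-order Hadamard powers. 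Conversely, your proof is closer in spirit to the familiar finite-dimensional Schur product theorem, which some readers may find more transparent.
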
 
\begin{proof}
The Hadamard-Schur product  is related to the tensor product $ \varrho \otimes \pi $ defined on $ L^2(0,1) \otimes L^2(0,1) $. More specifically, by the Lebesgue differentiation theorem
\begin{equation}\label{eq:tensor2rep}
\langle f ,  \varrho \odot \pi \   f \rangle = \lim_{r\downarrow 0 }  \ \langle f_r^{(2)} ,   \varrho \otimes \pi   f_r^{(2)} \rangle  . 
\end{equation}
where, for future purpose, we define for  $ k \geq 2 $ arbitrary:
\begin{equation}\label{eq:deffk}
 f^{k}_r(t_1,\dots,t_k)  \coloneqq    \frac{1}{(2r)^{k-1}} \prod_{l=2}^k 1[|t_1-t_l|< r] \  \prod_{j=1}^k f(t_j)^{1/k} .
\end{equation}
This function is in the $ k $-fold tensor product space $ L^2(0,1)^{\otimes k} $. 
The non-negativity thus immediately follows from that of the tensor product $ \varrho \otimes \pi $ and~\eqref{eq:tensor2rep}.
\end{proof} 
Iterating~\eqref{eq:defSchurproduct}, one defines the $k$-fold Hadamard-Schur product $ \varrho^{\odot k} $ on $ L^2(0,1) $ for arbitrary $ k \in \mathbb{N} $. 
If $ \varrho \geq 0 $, then $ \varrho^{\odot k} \geq 0 $ by the above proposition. 
We also need the monotonicity of this operation, which is again well known in the matrix case (cf.~\cite{Bat97,Hiai:2009aa}).
\begin{proposition}\label{pop:monotonicity}
If $ \varrho \geq \pi \geq 0 $ are non-negative integral operators on $ L^2(0,1) $ with bounded kernels, then $  \varrho^{\odot k }  \geq \pi^{\odot k } \geq 0 $ for any $ k \in \mathbb{N} $. 
\end{proposition}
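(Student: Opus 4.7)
My plan is to argue by induction on $k$, using as the only non-trivial input the positivity statement of the previous proposition (that the Hadamard-Schur product of two non-negative bounded-kernel integral operators is again non-negative). The base case $k=1$ is the hypothesis $\varrho \geq \pi \geq 0$.

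For the inductive step, assume $\varrho^{\odot k} \geq \pi^{\odot k} \geq 0$. First I would observe that $\varrho^{\odot k}$ and $\pi^{\odot k}$ again have bounded, measurable kernels (pointwise products of bounded measurable functions), so the previous proposition remains applicable to them. The key identity is the telescoping
\begin{equation*}
\varrho^{\odot (k+1)} - \pi^{\odot (k+1)} \;=\; \varrho^{\odot k}\odot(\varrho-\pi) \;+\; \bigl(\varrho^{\odot k}-\pi^{\odot k}\bigr)\odot \pi ,
\end{equation*}
which follows directly from the distributivity of $\odot$ with respect to addition and its commutativity. Each summand on the right is a Hadamard-Schur product of two non-negative bounded-kernel operators: for the first, $\varrho^{\odot k}\geq 0$ by iterating the previous proposition and $\varrho-\pi\geq 0$ by hypothesis; for the second, $\varrho^{\odot k}-\pi^{\odot k}\geq 0$ by the induction hypothesis and $\pi\geq 0$. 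Applying the previous proposition to each summand yields $\varrho^{\odot(k+1)} - \pi^{\odot(k+1)} \geq 0$, closing the induction.

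The only point that requires a little care is that the previous proposition was formulated for bounded-kernel operators, which is why I check above that $\varrho^{\odot k}$ and $\pi^{\odot k}$ inherit this property; once that is observed, the argument is a direct transcription of the classical matrix proof of monotonicity of Hadamard powers. I do not expect any serious obstacle, since the relevant analytic subtleties (the passage from the tensor product $\varrho\otimes\pi$ on $L^2(0,1)\otimes L^2(0,1)$ to $\varrho\odot\pi$ on $L^2(0,1)$ via the approximating vectors $f_r^{(2)}$ in~\eqref{eq:deffk}) have already been absorbed into the previous proposition.
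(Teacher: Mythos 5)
Your proof is correct, but it follows a genuinely different route from the paper's. The paper generalizes the two-fold tensor-product representation from the positivity proposition to a $k$-fold version, $\langle f , \varrho^{\odot k} f \rangle = \lim_{r\downarrow 0} \langle f_r^{(k)}, \varrho^{\otimes k} f_r^{(k)} \rangle$ with $f_r^{(k)}$ as in~\eqref{eq:deffk}, and then invokes the standard fact that $\varrho \geq \pi \geq 0$ implies $\varrho^{\otimes k} \geq \pi^{\otimes k} \geq 0$ on $L^2(0,1)^{\otimes k}$. You instead keep the analytic passage from tensor to Hadamard products entirely inside the $k=2$ positivity result, treated as a black box, and carry out the telescoping induction $\varrho^{\odot(k+1)} - \pi^{\odot(k+1)} = \varrho^{\odot k}\odot(\varrho-\pi) + (\varrho^{\odot k}-\pi^{\odot k})\odot\pi$ directly at the Hadamard level, with the correct check that the bounded-kernel hypothesis propagates to all factors. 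What your argument buys is modularity: no need to re-derive or even state the $k$-fold approximation identity, and the only operator-theoretic input beyond algebra is the $k=2$ case already proven. What the paper's approach buys is brevity (no induction) once the $k$-fold representation is accepted; though note that the monotonicity of tensor powers which the paper relies on is itself usually proven by precisely the telescoping you wrote, so your proof is in essence unwinding that delegation. Both are sound; yours is a legitimate and arguably more self-contained alternative.
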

\begin{proof}
A straightforward generalization of~\eqref{eq:tensor2rep} yields the following relation to the $ k $-fold tensor product $ \varrho^{ \otimes k} $:
$$
\langle f ,    \varrho^{\odot k }    f \rangle = \lim_{r\downarrow 0 }  \ \langle f_r^{(f)} ,   \varrho^{ \otimes k}   f_r^{(k)} \rangle  . 
$$ 
with  $ f_r^{(k)} $ from~\eqref{eq:deffk}. The claim follows from the fact that $ \varrho \geq \pi \geq 0 $ implies $  \varrho^{ \otimes k} \geq  \pi^{ \otimes k} \geq 0 $. 
\end{proof} 

\subsection{Properties of correlation functionals}\label{app:Lzeta}
In our context, we mostly encounter the expectation value 
\begin{equation}\label{eq:zetaexp}
\zeta_k(\varrho) \coloneqq  \langle 1 , \varrho^{\odot k} 1 \rangle = \int_0^1 \int_0^1 \widehat \varrho(t,s)^{k} dt ds  
\end{equation}
 of the  $ k $-fold Hadamard-Schur product of  a non-negative Hilbert Schmidt operator $ \varrho  \in \mathcal{S}_2^+ $ in the normalized vector $ 1 \in L^2(0,1) $. 
 Note that although initially only defined in case of bounded integral kernels $ \widehat \varrho $, the right side in~\eqref{eq:zetaexp} does make sense for any $  \varrho  \in \mathcal{S}_2 $ with $ \widehat \varrho \in L^{2}\left((0,1)^2\right) \cap L^k\left((0,1)^2\right) $. \\

  We start recording elementary properties of the above functional. 
 \begin{lemma} \label{lem:propzetak1}
On the space of non-negative Hilbert-Schmidt operators with a bounded integral kernel the functional $ \zeta_k $ is  for any $ k \in \mathbb{N} $:
\begin{enumerate}
\item non-negative and monotone, i.e.\ $0\ \leq \zeta_k(\pi) \leq \zeta_k(\varrho) $ for any $ 0 \leq \pi \leq \varrho $,
\item continuous, i.e.\ for any $ \varrho, \pi \in \mathcal{S}_2^+  $ with bounded integral kernels $  \widehat \varrho ,  \widehat \pi $ with $R\coloneqq \max\left\{ \|\widehat \varrho \|_\infty, \|\widehat \pi \|_\infty\right\} $:
\begin{equation}\label{eq:Lcontzetak}
\left| \zeta_k(\varrho)  - \zeta_k(\pi) \right| \leq k R^{k-1} \int_0^1\int_0^1 \left| \widehat \varrho(s,t) - \widehat\pi(s,t) \right| ds dt \leq   k R^{k-1} \left\| \varrho - \pi \right\|_2 . 
\end{equation}
\end{enumerate}
 \end{lemma}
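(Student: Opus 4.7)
The plan is to reduce both statements to the properties of the Hadamard-Schur product established in Propositions~A.1 and~A.2, combined with elementary pointwise algebra on the kernels. Nothing deep is required; the real content is keeping track of the right representation of $\zeta_k(\varrho)$.

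For part~1, I would argue as follows. Since $\varrho \geq 0$ has a bounded integral kernel, Proposition~A.1 iterated gives $\varrho^{\odot k} \geq 0$ as a bounded operator on $L^2(0,1)$. Evaluating in the constant vector $1 \in L^2(0,1)$ yields $\zeta_k(\varrho) = \langle 1,\varrho^{\odot k} 1\rangle \geq 0$. For the monotonicity, if $0 \leq \pi \leq \varrho$ (both with bounded kernels), then Proposition~A.2 directly gives $0 \leq \pi^{\odot k} \leq \varrho^{\odot k}$, and again taking the matrix element in $1$ produces $0 \leq \zeta_k(\pi) \leq \zeta_k(\varrho)$.

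For part~2, I would start from the integral representation~\eqref{eq:zetaexp} and use the factorisation $a^k - b^k = (a-b)\sum_{j=0}^{k-1} a^j b^{k-1-j}$. Since $|\widehat\varrho(s,t)|, |\widehat\pi(s,t)| \leq R$ almost everywhere by definition of $R$, this yields the pointwise estimate $\bigl|\widehat\varrho(s,t)^k - \widehat\pi(s,t)^k\bigr| \leq k R^{k-1} \bigl|\widehat\varrho(s,t) - \widehat\pi(s,t)\bigr|$. Integrating over $(0,1)^2$ gives the first inequality in~\eqref{eq:Lcontzetak}. The second inequality is then just the Cauchy–Schwarz estimate $\|\widehat\varrho - \widehat\pi\|_1 \leq \|\widehat\varrho - \widehat\pi\|_2 = \|\varrho - \pi\|_2$ on $(0,1)^2$, using that the Hilbert–Schmidt norm of an integral operator coincides with the $L^2$-norm of its kernel.

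There is no real obstacle here: both claims are essentially bookkeeping, with the non-trivial inputs (positivity and monotonicity of iterated Hadamard-Schur products) already supplied by Propositions~A.1 and~A.2, and the Lipschitz step reducing to the mean value theorem applied to $x \mapsto x^k$ on $[-R,R]$. The only minor point worth emphasising in writing it up is that the representation $\zeta_k(\varrho) = \langle 1,\varrho^{\odot k}1\rangle$ is legitimate precisely because $\varrho$ has a bounded (hence $L^2\cap L^k$) kernel, so the integral in~\eqref{eq:zetaexp} and the matrix element defining the Hadamard-Schur product agree.
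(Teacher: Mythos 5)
Your proof is correct and follows essentially the same route as the paper's: monotonicity from Proposition~\ref{pop:monotonicity} applied to the iterated Hadamard--Schur product, and the Lipschitz bound from the elementary inequality $|a^k - b^k| \leq k\max\{|a|,|b|\}^{k-1}|a-b|$ applied pointwise to the kernels, followed by Cauchy--Schwarz on the unit square. The only cosmetic difference is that you make the non-negativity step explicit via the positivity proposition, while the paper reads it off as part of Proposition~\ref{pop:monotonicity}; this is an immaterial reorganisation.
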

 \begin{proof}
 The monotonicity is by Proposition~\ref{pop:monotonicity}. The continuity
 follows from the elementary inequality $ \left|x^k - y^k  \right| \leq k \max\{ |x|,|y|\}^{k-1} | x-y| $ for all $ x,y\in \mathbb{R} $.
 \end{proof}
 
 To further analyze the functionals $ \zeta_k(\varrho) $, it will be useful to recall the spectral representation of the intergral kernel,
 \begin{equation}\label{eq:spectral}
 \widehat\varrho(s,t) = \lim_{N\to\infty}  \widehat\varrho_N(s,t) , \quad  \widehat\varrho_N(s,t) \coloneqq \sum_{j=1}^N \lambda_j \varphi_j(s) \overline{\varphi_j(t)} ,
 \end{equation}
 in terms of the eigenvalues   $ (\lambda_j) \subset \mathbb{R} $  and the corresponding orthonormal  eigenfunctions  $ (\varphi_j) \subset L^2(0,1) $  of  $ \varrho \in \mathcal{S}_2 $. 
 The convergence in~\eqref{eq:spectral} is in $ L^2\left((0,1)^2\right) $, and hence pointwise for a.e.\ $(s,t) \in (0,1)^2 $ along a subsequence. 
As was already alluded to in the introduction, more is true in case $ \varrho \in \mathcal{S}_2^+ $ with a bounded kernel.
\begin{proposition}\label{lem:propzetak2}
For any $ \varrho \in \mathcal{S}_2^+ $ with a bounded kernel $ \widehat \varrho \in L^\infty\left((0,1)^2\right)  $, the integral kernel admits a bounded extension $ \widehat \varrho (t,t) $ to the diagonal given by~\eqref{eq:regkern} for which
\begin{enumerate}
\item for all $ N \in \mathbb{N} $ and a.e.\ $t \in  (0,1) $: \quad
$\displaystyle
 \widehat\varrho_N(t,t)=  \sum_{j=1}^N \lambda_j |\varphi_j(t)|^2 \leq  \widehat \varrho (t,t) $.\\
Consequently, the eigenfunctions $ (\varphi_j)  $ corresponding to non-zero eigenvalues are bounded. 
\item for a.e.\ $(s,t) \in (0,1)^2 $: \quad $\displaystyle   |  \widehat\varrho(s,t)  | \leq \sqrt{ \widehat \varrho(s,s) \ \widehat\varrho(t,t) } $, \quad and  $ \tr \varrho = \int  \widehat \varrho (t,t) dt $. 
\item 
 for any $ k \in \mathbb{N} $:\quad 
 $\displaystyle
 \zeta_k(\varrho) =  \int_0^1 \int_0^1 \widehat{ \varrho}(t,s)^{k} \ dt ds = \sum_{j_1,\dots,j_{k}=1}^\infty \lambda_{j_1}\cdots  \lambda_{j_{k}} \left|\langle 1 ,  \varphi_{j_1}\cdots \varphi_{j_{k}} \rangle\right|^2 $.
 \end{enumerate}
\end{proposition}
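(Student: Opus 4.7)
I would use the finite-rank spectral truncations $\varrho_N \coloneqq \sum_{j\leq N}\lambda_j |\varphi_j\rangle\langle\varphi_j|$, exploit the operator inequality $\varrho - \varrho_N \geq 0$, and reduce all three assertions to the two-dimensional Lebesgue differentiation theorem applied to the averaging formula \eqref{eq:regkern}. The key preliminary observation is that the double integral in \eqref{eq:regkern} can be rewritten as $\langle f_{\tau,r}, \varrho f_{\tau,r}\rangle$ with the $L^1$-normalised indicator $f_{\tau,r}\coloneqq 1_{(\tau-r,\tau+r)}/(2r)$, and this quadratic form is bounded in modulus by $\|\widehat\varrho\|_\infty\,\|f_{\tau,r}\|_1^2 = \|\widehat\varrho\|_\infty$. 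Since $\widehat\varrho\in L^\infty\subset L^1((0,1)^2)$, the 2D Lebesgue differentiation theorem then guarantees existence of the limit at 1D-a.e.\ $\tau$ and essential boundedness of $\widehat\varrho(\tau,\tau)$.

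\textbf{Items 1 and 2.} Apply the same averaging to the non-negative operator $\varrho - \varrho_N$: every form $\langle f_{\tau,r},(\varrho-\varrho_N)f_{\tau,r}\rangle$ is non-negative, hence so is the limit, giving $\widehat\varrho(\tau,\tau)\geq\widehat\varrho_N(\tau,\tau)$ a.e. For the finite-rank piece, $|\langle f_{\tau,r},\varphi_j\rangle|^2 \to |\varphi_j(\tau)|^2$ at each Lebesgue point of $\varphi_j$ and the finite sum commutes with the limit, producing $\widehat\varrho_N(\tau,\tau) = \sum_{j\leq N}\lambda_j|\varphi_j(\tau)|^2$ a.e.; this establishes Item 1, and the pointwise bound on $\lambda_j|\varphi_j(\tau)|^2$ is then immediate. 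For the Cauchy--Schwarz bound in Item 2, apply the non-negativity of $\langle\cdot,\varrho\cdot\rangle$ to the pair $f_{s,r},f_{t,r}$ and pass to the 2D-Lebesgue-point limit. For the trace identity, integrating Item 1 first yields $\sum_{j\leq N}\lambda_j \leq \|\widehat\varrho\|_\infty$, so $\varrho\in\mathcal{S}_1$. Approximate by $Q_D\varrho Q_D$ with the projection $Q_D$ onto square-wave pulses from Section~\ref{sec:QPF}: a direct expansion in the orthonormal basis $(e_k)$ shows
\[
\tr(Q_D\varrho Q_D) = \int_0^1 \widetilde\varrho_D(\tau)\,d\tau,
\]
where $\widetilde\varrho_D(\tau)$ is the 2D average of $\widehat\varrho$ over $I(\tau)\times I(\tau)$. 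The 2D Lebesgue differentiation theorem combined with dominated convergence (bound $\|\widehat\varrho\|_\infty$) gives $\tr(Q_D\varrho Q_D)\to\int_0^1\widehat\varrho(\tau,\tau)\,d\tau$, while $Q_D\to I$ strongly by \eqref{eq:L2conv} together with $\varrho\in\mathcal{S}_1$ yields $\tr(Q_D\varrho Q_D)\to\tr\varrho$.

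\textbf{Item 3.} Substituting the finite spectral expansion into $\widehat\varrho_N(t,s)^k$ and applying Fubini gives
\[
\int_0^1\!\int_0^1 \widehat\varrho_N(t,s)^k\,dt\,ds = \sum_{j_1,\ldots,j_k\leq N}\lambda_{j_1}\cdots\lambda_{j_k}\,|\langle 1,\varphi_{j_1}\cdots\varphi_{j_k}\rangle|^2,
\]
where each factor $\varphi_{j_i}$ is essentially bounded thanks to Item 1. The non-negativity of every summand makes monotone convergence applicable as $N\to\infty$, producing the full multi-index sum. On the left side, Item 2 applied to $\varrho_N \geq 0$ gives the uniform estimate $|\widehat\varrho_N(s,t)|\leq\sqrt{\widehat\varrho_N(s,s)\widehat\varrho_N(t,t)}\leq\|\widehat\varrho\|_\infty$ a.e.; combined with the pointwise-a.e.\ convergence $\widehat\varrho_N\to\widehat\varrho$ along a subsequence (from $L^2$-convergence of the spectral expansion), dominated convergence passes the limit inside the integral, yielding the claimed identity.

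\textbf{Main obstacle.} The trace identity in Item 2 is the delicate point, since one must reconcile the operator-theoretic approximation $Q_D\varrho Q_D\to\varrho$ in trace norm with the measure-theoretic statement that the square averages $\widetilde\varrho_D(\tau)$ converge to $\widehat\varrho(\tau,\tau)$ at 1D-a.e.\ $\tau$ on the diagonal (a 2D null set), which is the core content of the Weidmann result cited in the text preceding the proposition. All other pieces are routine applications of non-negativity, Lebesgue differentiation, and monotone/dominated convergence.
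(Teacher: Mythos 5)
Your Item~3 follows essentially the paper's route: truncate the spectral expansion to $\varrho_N$, apply Fubini to get the finite sum~\eqref{eq:finiteNcomp}, use monotone convergence on the right, and pass to the limit on the left. Where the paper invokes the Lipschitz bound~\eqref{eq:Lcontzetak} together with the uniform kernel bound from Item~2 and the monotonicity from Proposition~\ref{pop:monotonicity}, you use dominated convergence along a subsequence on which $\widehat\varrho_N\to\widehat\varrho$ pointwise a.e.; both work, and the monotone increase of the right-hand side repairs the subsequential caveat. For Items~1 and~2, the paper's ``proof'' is simply a citation to Weidmann \cite{Weid66}, so your reconstruction goes beyond what the paper writes out; your $Q_D$-approximation route to the trace identity is a nice self-contained variant that dovetails with the dyadic machinery already in the paper.

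However, your opening claim that the 2D Lebesgue differentiation theorem ``guarantees existence of the limit at 1D-a.e.\ $\tau$'' in \eqref{eq:regkern} is a genuine gap. The 2D Lebesgue differentiation theorem gives convergence of centered square averages at Lebesgue-a.e.\ point of $(0,1)^2$; the diagonal $\{(\tau,\tau)\}$ has two-dimensional measure zero, so that theorem says nothing about convergence at 1D-a.e.\ $\tau$, and for a generic hermitian $L^\infty$ kernel (without the sign constraint) the diagonal limit in \eqref{eq:regkern} may fail to exist. The positivity $\varrho\geq0$ is essential here: one writes $\langle f_{\tau,r},\varrho f_{\tau,r}\rangle=\sum_j\lambda_j|\langle f_{\tau,r},\varphi_j\rangle|^2$, uses the 1D Lebesgue differentiation theorem for each $\varphi_j\in L^2\subset L^1$ to get term-by-term convergence, and then needs a uniformity argument to interchange the $r\downarrow 0$ limit with the infinite sum — that interchange is precisely the content of Weidmann's theorem. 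You flag this subtlety in your ``main obstacle'' paragraph but attach it only to the trace identity, whereas in fact it underlies the very existence of $\widehat\varrho(\tau,\tau)$ and hence all of Items~1 and~2.
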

\begin{proof} 
The proofs of 1.~and 2.~can be found in~\cite{Weid66}. Note that the bound in~1.~indeed guarantees
that $ \lambda_j |\varphi_j(t)|^2 \leq \widehat\varrho(t,t) $ for a.e.\ $ t \in (0,1) $ and all $ j \in \mathbb{N} $. Consequently, the scalar products
$$
\langle 1 ,  \varphi_{j_1}\cdots \varphi_{j_{k}=1} \rangle = \int_0^1  \varphi_{j_1}(t) \cdots \varphi_{j_{k}}(t) dt 
$$
are all finite. 
We now approximate $ \varrho $ by truncating its spectral representation at $ N \in \mathbb{N} $. Let $ \varrho_N $ be the finite-rank operator with integral kernel given by~\eqref{eq:spectral}.
By construction $0\leq  \varrho_N \leq  \varrho $ and $\lim_{N\to\infty}  \|  \varrho_N -  \varrho \|_2 = 0 $. Consequently, using~\eqref{eq:Lcontzetak} and the bound in 2.~on the kernels, 
$$
\lim_{N\to\infty} \zeta_k(\varrho_N) =  \zeta_k(\varrho) , \qquad 0 \leq  \zeta_k(\varrho_N) \leq \zeta_k(\varrho) ,
$$
where  the last inequality is by Proposition~\ref{pop:monotonicity}. 
By the finiteness of the scalar products, an elementary computation based on an application of Fubini's theorem yields for any $ N \in \mathbb{N} $
\begin{equation}\label{eq:finiteNcomp}
 \zeta_k(\varrho_N) =   \sum_{j_1,\dots,j_{k}=1}^N \lambda_{j_1}\cdots  \lambda_{j_{k}} \left|\langle 1 ,  \varphi_{j_1}\cdots \varphi_{j_{k}} \rangle\right|^2 .
\end{equation}
By monotone convergence, the right side converges to the right side in the assertion as $ N \to \infty $. This completes the proof. 
\end{proof}

Our final result concerns the extension of the functional to $ \mathcal{S}_2 $:
$$ \widetilde\zeta_{2p}: \mathcal{S}_2 \to [0,\infty] , \quad \widetilde\zeta_{2p}(\varrho) \coloneqq \int_0^1\int_0^1 \widehat \varrho(s,t)^{2p} ds dt  .
$$
Note that the integral is well-defined, potentially infinite, for all $ \widehat \varrho \in L^2\left( (0,1)^2\right) $. 
\begin{proposition}\label{prop:extensionzeta}
The functional $  \widetilde\zeta_{2p}: \mathcal{S}_2 \to [0,\infty] $ is weakly lower semicontinuous for any $ p \in \mathbb{N} $. \end{proposition}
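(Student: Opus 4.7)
The plan is to derive weak lower semicontinuity from two easier ingredients: convexity of $\widetilde\zeta_{2p}$ on the Hilbert space $\mathcal{S}_2$, together with \emph{strong} (i.e.\ Hilbert--Schmidt-norm) lower semicontinuity. Once both are in hand, the standard Banach--space result that a convex, proper, strongly lower semicontinuous functional on a Hilbert space is automatically weakly lower semicontinuous (a consequence of Mazur's theorem, see e.g.\ \cite[Thm.~9.1]{BauCom17}) concludes the argument. This reduction is the natural one here because the functional is manifestly convex but is not continuous in any obvious weak sense, so one really needs to route weak lsc through convexity.

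\textbf{Convexity.} For any $\varrho, \pi \in \mathcal{S}_2$, any $\lambda \in [0,1]$, and a.e.\ $(s,t)$, the elementary inequality $(\lambda \widehat\varrho(s,t) + (1-\lambda) \widehat\pi(s,t))^{2p} \leq \lambda \widehat\varrho(s,t)^{2p} + (1-\lambda) \widehat\pi(s,t)^{2p}$ holds by convexity of $x \mapsto x^{2p}$ on $\mathbb{R}$. Integrating in $(s,t)$ yields convexity of $\widetilde\zeta_{2p}$, allowing the value $+\infty$.

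\textbf{Strong lower semicontinuity.} Suppose $\varrho_n \to \varrho$ in $\mathcal{S}_2 $; equivalently $\widehat\varrho_n \to \widehat\varrho$ in $L^2((0,1)^2)$. Let $L := \liminf_n \widetilde\zeta_{2p}(\varrho_n)$. Extract a subsequence along which $\widetilde\zeta_{2p}(\varrho_{n_k}) \to L$, and then a further subsequence (still denoted $\varrho_{n_k}$) along which $\widehat\varrho_{n_k}(s,t) \to \widehat\varrho(s,t)$ for a.e.\ $(s,t)$. Since $(\widehat\varrho_{n_k})^{2p} \geq 0$, Fatou's lemma yields
\begin{equation*}
\widetilde\zeta_{2p}(\varrho) = \int_0^1\!\!\int_0^1 \widehat\varrho(s,t)^{2p}\, ds\, dt \leq \liminf_{k\to\infty} \int_0^1\!\!\int_0^1 \widehat\varrho_{n_k}(s,t)^{2p}\, ds\, dt = L,
\end{equation*}
which is the strong lsc inequality.

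\textbf{Passage to weak lsc.} Strong lsc is equivalent to the fact that the sublevel sets $\{\varrho \in \mathcal{S}_2 : \widetilde\zeta_{2p}(\varrho) \leq c\}$ are norm-closed for every $c \in \mathbb{R}$. These sublevel sets are convex by the first step, so by Mazur's theorem they are weakly closed, and hence $\widetilde\zeta_{2p}$ is weakly lower semicontinuous. The only point to be mindful of is handling $+\infty$-values, which is standard: if $\widetilde\zeta_{2p}(\varrho) = +\infty$, one applies the same sublevel-set argument to each finite $c$ and takes $c\to\infty$. No quantitative control of the kernel integrability is required beyond the above Fatou step, which is the mildest possible obstacle and is already covered by the $L^2$ convergence of the kernels.
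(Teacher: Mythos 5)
Your proof is correct, but it takes a genuinely different route from the paper's. The paper argues directly: given a weakly convergent sequence $\varrho_n \rightharpoonup \varrho$, it tests against a single dual element $x_R$ built from a truncation of $\widehat\varrho$ itself, so that $\langle x_R,\varrho\rangle$ recovers the truncated $L^{2p}$-mass of $\widehat\varrho$; then weak convergence, H\"older's inequality, Fatou's lemma and a final monotone-convergence passage $R\to\infty$ produce the $\liminf$ bound. You instead factor the argument through two off-the-shelf ingredients: (i) convexity of $\widetilde\zeta_{2p}$, from pointwise convexity of $x\mapsto x^{2p}$; and (ii) \emph{strong} lower semicontinuity, obtained from Fatou's lemma along an a.e.-convergent subsequence of kernels (using that Hilbert--Schmidt convergence is $L^2$-convergence of kernels); then Mazur's theorem applied to the convex, norm-closed sublevel sets upgrades strong lsc to weak lsc. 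Both arguments are sound. Yours is more modular and makes transparent that the result is an instance of the general principle that proper convex strongly-lsc functionals on a Banach space are weakly lsc, whereas the paper's is self-contained and does not formally require the convexity at this point (convexity is recorded separately in Lemma~\ref{lem:zetaprop}). One small remark: the general ``convex $+$ lsc $\Rightarrow$ weakly lsc'' fact gives weak closedness of sublevel sets, hence weak (topological and a fortiori sequential) lower semicontinuity, so your remark on extended values is handled automatically by the sublevel-set characterization; no separate argument is needed there.
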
 
\begin{proof}
We pick $ \varrho_n , \varrho \in \mathcal{S}_2 $ such that $ \lim_{n\to \infty} \langle x,  \varrho_n \rangle = \langle x, \varrho \rangle  $ for all $ x \in \mathcal{S}_2 $. 
Let $ R > 0 $ be arbitrary, and pick $ x_R \in \mathcal{S}_2 $ with integral kernel
$$
\widehat x_R(s,t) \coloneqq  e^{i \arg \widehat \varrho(s,t) } \left|  \widehat \varrho(s,t) \right|^{2(p-1)} \indfct[ \left|  \widehat \varrho(s,t) \right| \leq R ]  .
$$
From weak convergence we conclude
$$
I_R(2p)\coloneqq \int_0^1\int_0^1 \left|  \widehat \varrho(s,t) \right|^{2p} \indfct[ \left|  \widehat \varrho(s,t) \right| \leq R ]  ds dt = \lim_{n\to \infty} \langle x_R ,  \varrho_n \rangle \leq   \liminf_{n\to \infty} \| \widehat \varrho_n \|_{2p}  \ I_R(2p)^{1-\frac{1}{2p}} ,
$$ 
where the last step is H\"older's inequality. Hence, $  \liminf_{n\to \infty}   \| \widehat \varrho_n \|_{2p}^{2p}  \geq I_R(2p) $ for any $ p \geq 1 $ and $ R > 0 $. Combined with Fatou's lemma, this yields:
$$
\liminf_{n\to \infty}  \widetilde\zeta_{2p}(\varrho_n) =  \liminf_{n\to \infty}   \| \widehat \varrho_n \|_{2p}^{2p} \geq  I_R(2p) . 
$$
As $ R \to \infty $,  the right side converges to $ \zeta(\varrho) $  by monotone convergence. This finishes the proof of the weak lower semicontinuity.
\end{proof}
%
%
%
%
%

\section{Properties of the quantum Parisi functional}\label{app:PF}

In this appendix, we collect the proofs of properties of the quantum Parisi functional listed in Subsection~\ref{sec:QPF}. 
\subsection{Continuity properties }
\begin{proof}[Proof of Proposition~\ref{lem:LcontP}] 
The claim~\eqref{eq:LcontP} follows by establishing separately the Lipschitz continuity of the functional $  X_0(\pi,x) $ and $ \int_0^1 \theta(\pi(s)) ds $.  The latter immediately follows from~\eqref{eq:thetaLc}. It thus remains to establish the Lipschitz continuity of  $  X_0(\pi,x) $. For its proof we follow closely~\cite{AdBr20}, which in turn is based on the Gaussian interpolation argument in~\cite{PanPotts18}. \\

Consider two  $ \pi , \pi' \in \Pi $, and suppose without loss of generality that they correspond to a common (refined) partition $ 0 =: m_0 < m_1 < \dots < m_r < m_{r+1}:= 1$. Let $  (\mu^{\pi,\pi'}_\alpha)_{\alpha \in \mathbb{N}^{r} }$ stand for the corresponding weights and consider two independent families of $ L^2 $-valued, centred Gaussian processes $ (z_\alpha)_{\alpha \in \mathbb{N}^{r} }$ and $  (z'_\alpha)_{\alpha \in \mathbb{N}^{r} }$ with covariances given by \eqref{eq:defz} with values $ \pi(m_{j}) $ and $ \pi'(m_{j}) $ respectively.  The interpolated Gaussian process 
$$
z_\alpha(r) := \sqrt{r}z_\alpha + \sqrt{1-r} z'_\alpha  , \quad r \in [0,1], 
$$
then has the covariance 
$$
\mathbb{E}\left[ \langle f , z_\alpha(r) \rangle \langle g , z_{\alpha'}(r) \rangle \right]  =  r \langle f , d\zeta\left(\pi(m_{\alpha \wedge \alpha'} )  \right)g \rangle + (1-r) \langle f , d\zeta\left(\pi'(m_{\alpha \wedge \alpha'} )  \right)g \rangle , \quad f,g \in L^2(0,1) . 
$$ 
The main idea is to control the derivative of
\begin{align*}
& f(r) := \\
& \mathbb{E}\left[ \ln \sum_{\alpha \in \mathbb{N}^{r} } \mu^{\pi,\pi'}_\alpha\! \int \exp\left( \langle \xi ,  z_\alpha(r)  \rangle + r \langle \xi , \left( x- \tfrac{1}{2} d\zeta(\pi(1) \right)  \xi  \rangle + (1-r) \langle \xi , \left( x'- \tfrac{1}{2} d\zeta(\pi'(1) \right)  \xi  \rangle\right) \nu_1(d\xi) \right] ,
\end{align*}
which coincides with $ X_0(\pi,x) $ at $ r = 1 $ and with $ X_0(\pi',x') $ at $ r = 0 $. 
Let $ \langle (\cdot) \rangle_r $ denote the (random) Gibbs expectation value associated to partition function defined by the logarithm, and $ \langle (\cdot) \rangle_r^{\otimes 2} $ the Gibbs expectation value of the corresponding duplicated system. 
Then straightforward differentiation and Gaussian integration by parts, yields
\begin{equation}\label{eq:derPf}
f'(r) = \frac{1}{2}\  \mathbb{E}\left[ \left\langle  \langle \xi ,  d\zeta\left(\pi'(m_{\alpha\wedge\widetilde\alpha})\right) \widetilde \xi \rangle \right\rangle_r^{\otimes 2} -   \left\langle  \langle \xi ,  d\zeta\left(\pi(m_{\alpha\wedge\widetilde\alpha})\right) \widetilde \xi \rangle \right\rangle_r^{\otimes 2}\right]  +  \mathbb{E}\left[ \left\langle  \langle \xi , (x-x')  \xi \rangle \right\rangle_r\right] ,
\end{equation}
where the variables with and without the tilde refer to the first and second system in the duplication. 
Note that two terms $  \frac{1}{2}\  \mathbb{E}\left[ \left\langle  \langle \xi ,  d\zeta\left(\pi(1)\right) \xi \rangle \right\rangle_r \right] $ and $ \frac{1}{2}\  \mathbb{E}\left[ \left\langle  \langle \xi ,  d\zeta\left(\pi'(1)\right) \xi \rangle \right\rangle_r \right] $, which appear in the Gaussian integration by parts cancels due to the presence of this term in the exponent. This implies that
$$
\left| f'(r) \right| \leq \frac{1}{2}   \mathbb{E}\left[\left\langle \left\| d\zeta\left(\pi(m_{\alpha\wedge\widetilde\alpha})\right) - d\zeta\left(\pi'(m_{\alpha\wedge\widetilde\alpha})\right) \right\| \right\rangle_r \right] + \| x - x' \| .
$$

The weights of the Ruelle probability cascade have the property \cite[Thm.~4.4]{Pan13} that
\begin{align*}
& \mathbb{E}\left[\left\langle \left\| d\zeta\left(\pi(m_{\alpha\wedge\widetilde\alpha})\right) - d\zeta\left(\pi'(m_{\alpha\wedge\widetilde\alpha})\right) \right\| \right\rangle_r \right]  = \sum_{\alpha, \tilde\alpha}  \mathbb{E}\left[ \mu^{\pi,\pi'}_\alpha \mu^{\pi,\pi'}_{\tilde\alpha} \right] \left\| d\zeta\left(\pi(m_{\alpha\wedge\widetilde\alpha})\right) - d\zeta\left(\pi'(m_{\alpha\wedge\widetilde\alpha})\right) \right\| \\
 & = \sum_{j=0}^{r+1} \left\| d\zeta\left(\pi(m_j)\right) - d\zeta\left(\pi'(m_j)\right) \right\|  \sum_{\alpha\wedge\widetilde\alpha = j}  \mathbb{E}\left[ \mu^{\pi,\pi'}_\alpha \mu^{\pi,\pi'}_{\tilde\alpha} \right]  =  \int_0^1 \left\| d\zeta\left(\pi(m)\right) - d\zeta\left(\pi'(m)\right) \right\| dm \\
 & \leq L_\zeta  \int_0^1 \left\|\pi(m) - \pi'(m)\right\|_2 dm ,
 \end{align*}
where the last step is by~\eqref{eq:Lcdzeta}. 
This completes the proof of~\eqref{eq:LcontP}.\\

\end{proof}

\subsection{Convexity, bounds and differentiability}

\begin{proof}[Proof of Proposition~\ref{prop:elpropP}] 
1.~The convexity of $ x \mapsto \mathcal{P}(\pi , x ) $ is immediate from that of $ X_0(\pi,x)  $ and the representation~\eqref{eq:PRuelle}. So is the boundedness of $ \int_0^1\theta(\pi(m)) dm $ from Lemma~\ref{lem:thetaprop}. 

The lower bound on $ X_0(\pi,x) $ for any $ \pi \in \Pi $ follows from an application of Jensen's inequality to~\eqref{eq:PRuelle}, which yields
$$
 X_0(\pi,x)\geq \mathbb{E}\left[  \sum_{\alpha \in \mathbb{N}^{r} } \mu^\pi_\alpha \int  \langle \xi ,  z_\alpha \rangle + \langle \xi , \left( x- \tfrac{1}{2} d\zeta(\pi(1) \right)  \xi  \rangle  \nu_1(d\xi) \right] \geq -   \int  \langle \xi , \left( x_- + \tfrac{1}{2} d\zeta(\pi(1) \right)  \xi  \rangle \ \nu_1(d\xi) .
$$
The scalar product is estimated either by $  \langle \xi , x_- \xi \rangle \leq \| x_- \| $, or, using its integral representation and the boundedness of the paths, by $  \langle \xi , x_- \xi \rangle \leq \int_0^1 \int_0^1 |\widehat x_-(s,t)| ds dt $. 
The operator norm bound $ \| d\zeta(\pi(1)) \| \leq N_\zeta $ is from Lemma~\ref{lem:zetaprop}.  

The upper bound on $ X_0(\pi,x) $ is also by Jensen's inequality, which using the fact that the exponential contains a martingale, yields 
\begin{equation}\label{eq:annealedJensen}
 X_0(\pi,x)  \leq \ln  \int \exp\left( \langle \xi ,  x  \xi  \rangle \right) \nu_1(d\xi)  \leq  \ln  \int \exp\left( \langle \xi ,  x_+ \xi  \rangle \right) \nu_1(d\xi) .
\end{equation}
The above bounds on the scalar product yield the claim concerning $ X_0(\pi,x)  $. 

An explicit computation starting from~\eqref{eq:PRuelle} yields for any $ x , y \in \mathcal{S}_2 $
$$
 \frac{d}{d\lambda} \mathcal{P}(\pi, x+ \lambda y ) \Big|_{\lambda = 0 } = \mathbb{E}\left[ \left\langle \langle r_1(\xi, \xi) , y \rangle\right\rangle_{x} \right] ,
$$
where $ \langle (\cdot )\rangle_{x} $ abbreviates the average corresponding to $  \sum_{\alpha \in \mathbb{N}^{r} } \mu^\pi_\alpha \int \exp\left( \langle \xi ,  z_\alpha \rangle + \langle \xi , \left( x- \tfrac{1}{2} d\zeta(\pi(1) \right)  \xi  \rangle \right) \nu_1(d\xi)  $.  For an estimate, we use  $ \|  r_1(\xi, \xi)  \|_2 = 1  $. 
Differentiating again then yields 
\begin{equation}\label{eq:secondd}
 \frac{d^2}{d\lambda^2} \mathcal{P}(\pi, x+ \lambda y ) \Big|_{\lambda = 0 } =  \mathbb{E}\left[ \left\langle \langle r_1(\xi, \xi) , y \rangle^2\right\rangle_{x} - \left\langle  \langle \langle r_1(\xi, \xi) , y \rangle\right\rangle_{x}^2  \right] ,
\end{equation}
from which the bound immediately follows.\\

\noindent
2.~The first inequality is~\eqref{eq:annealedJensen} combined with the fact that $ \inf_{\pi \in \Pi} \int_0^1 \theta(\pi(m)) dm = 0 $.  
The next inequality results from 1.~and the trivial estimate of the supremum by $ \inf_{\pi \in \Pi}  \mathcal{P}(\pi,0)  - \frac{1}{2} \zeta^*(0) = \inf_{\pi \in \Pi}  \mathcal{P}(\pi,0) \geq - N_\zeta/2 $ by 1. The third inequality is a simple consequence of the first.  The last inequality used the fact that by the Fenchel-Young inequality for conjugate pairs \cite[Prop. 13.13]{BauCom17} for any  $ \xi \in L^2 $ with corresponding rank-one operator $ | \xi\rangle\langle \xi | \in \mathcal{D} $:
$$
\langle \xi ,  x \xi  \rangle \leq \frac{1}{2} \zeta^*(2x) + \frac{1}{2} \zeta(| \xi\rangle\langle \xi |  ) \leq  \frac{1}{2} \zeta^*(2x) + \frac{1}{2} \widehat\zeta(1),
$$
where the last step is by Lemma~\ref{lem:zetaprop}. 
\end{proof}

\minisec{Acknowledgments}
This work was supported by the DFG under EXC-2111--390814868, and to a minor extend also under DFG--TRR 352--Project-ID 470903074.
SW would like to thank Silvio Franz for helpful discussions. 

\bibliography{QParisi}{}

\begin{thebibliography}{10}

\bibitem{AdBr20}
A.~Adhikari and C.~Brennecke.
\newblock Free energy of the quantum {S}herrington--{K}irkpatrick spin-glass
  model with transverse field.
\newblock {\em Journal of Mathematical Physics}, 61:083302, 2020.

\bibitem{ALR87}
M.~Aizenman, J.~L. Lebowitz, and D.~Ruelle.
\newblock Some rigorous results on the {S}herrington-{K}irkpatrick spin glass
  model.
\newblock {\em Commun. Math. Phys.}, 112(1):3--20, 1987.

\bibitem{Baldwin:2020aa}
C.~L. Baldwin and B.~Swingle.
\newblock Quenched vs annealed: Glassiness from {SK} to {SYK}.
\newblock {\em Physical Review X}, 10(3):031026, 2020.

\bibitem{BaSo23}
E.~Bates and Y.~Sohn.
\newblock Parisi formula for balanced {P}otts glass.
\newblock arXive:2310.06745, 2023.

\bibitem{BauCom17}
H.~H. Bauschke and P.~L. Combettes.
\newblock {\em Convex analysis and monotone operator theory}.
\newblock Springer, 2017.

\bibitem{Bat97}
R.~Bhatia.
\newblock {\em Matrix analysis}.
\newblock Springer New York, 1997.

\bibitem{Bov06}
A.~Bovier.
\newblock {\em Statistical Mechanics of Disordered Systems: A Mathematical
  Perspective}.
\newblock Cambridge Series in Statistical and Probabilistic Mathematics.
  Cambridge University Press, Cambridge, 2006.

\bibitem{Bris88}
C.~Brislawn.
\newblock Kernels of trace class operators.
\newblock {\em Proc. Amer. Math. Soc.}, 104:1181--1190, 1988.

\bibitem{BSS96}
B.~K. Chakrabarti, A.~Dutta, and P.~Sen.
\newblock {\em Quantum {I}sing phases and transitions in transverse {I}sing
  models}.
\newblock Springer, 1996.

\bibitem{CLPTS23}
B.~K. Chakrabarti, H.~Leschke, P.~Ray, T.~Shirai, and S.~S. Tanaka.
\newblock Quantum annealing and computation: challenges and perspectives.
\newblock {\em Phil. Trans. R. Soc. A}, 381:20210419, 2023.

\bibitem{Chayes:2008aa}
L.~Chayes, N.~Crawford, D.~Ioffe, and A.~Levit.
\newblock The phase diagram of the quantum {C}urie-{W}eiss model.
\newblock {\em Journal of Statistical Physics}, 133(1):131--149, 2008.

\bibitem{Chen23b}
H.-B. Chen.
\newblock On the self-overlap in vector spin glasses.
\newblock arXiv:2311.08980, 2023.

\bibitem{Chen23}
H.-B. Chen.
\newblock Self-overlap correction simplifies the {P}arisi formula for vector
  spins.
\newblock arXiv:2306.1025, 2023.

\bibitem{ChXia22}
H.-B. Chen and J.~Xia.
\newblock {H}amilton--{J}acobi equations from mean-field spin glasses.
\newblock arXiv:2201.12732, 2022.

\bibitem{Cont12}
P.~Contucci.
\newblock {\em Perspectives on Spin Glasses}.
\newblock Cambridge UP, 2012.

\bibitem{Crawford:2007aa}
N.~Crawford.
\newblock Thermodynamics and universality for mean field quantum spin glasses.
\newblock {\em Communications in Mathematical Physics}, 274(3):821--839, 2007.

\bibitem{Elderfield:1983}
D.~Elderfield and D.~Sherrington.
\newblock The curious case of the {P}otts spin glass.
\newblock {\em Journal of Physics C: Solid State Physics}, 16(15):L497, 1983.

\bibitem{ES14}
L.~Erd{\H o}s and D.~Schr{\"o}der.
\newblock Phase transition in the density of states of quantum spin glasses.
\newblock {\em Mathematical Physics, Analysis and Geometry}, 17(3):441--464,
  2014.

\bibitem{FS86}
Ya.~V. Fedorov and E.~F. Shender.
\newblock Quantum spin glasses in the {I}sing model with a transverse field.
\newblock {\em JETP Lett.}, 43:681 --684, 1986.

\bibitem{BG09}
G.~Genovese and A.~Barra.
\newblock A mechanical approach to mean field spin models.
\newblock {\em J. Math. Phys.}, 50:053303, 2009.

\bibitem{Georges:2000aa}
A.~Georges, O.~Parcollet, and S.~Sachdev.
\newblock Mean field theory of a quantum {H}eisenberg spin glass.
\newblock {\em Physical Review Letters}, 85(4):840--843, 07 2000.

\bibitem{Goldschmidt:1990aa}
Y.~Y. Goldschmidt.
\newblock Solvable model of the quantum spin glass in a transverse field.
\newblock {\em Physical Review B}, 41(7):4858--4861, 1990.

\bibitem{Guerra:2003aa}
F.~Guerra.
\newblock Broken replica symmetry bounds in the mean field spin glass model.
\newblock {\em Communications in Mathematical Physics}, 233(1):1--12, 2003.

\bibitem{Hiai:2009aa}
F.~Hiai.
\newblock Monotonicity for entrywise functions of matrices.
\newblock {\em Linear Algebra and its Applications}, 431(8):1125--1146, 2009.

\bibitem{IbRo78}
I.~A. Ibraginov and Y.~A. Rozanov.
\newblock {\em Gaussian random processes}.
\newblock Springer New York, 1978.

\bibitem{Itoi:2023aa}
C.~Itoi, H.~Ishimori, K.~Sato, and Y.~Sakamoto.
\newblock Universality of replica-symmetry breaking in the transverse field
  {S}herrington--{K}irkpatrick model.
\newblock {\em Journal of Statistical Physics}, 190(3):65, 2023.

\bibitem{Leschke:2021ab}
H.~Leschke, C.~Manai, R.~Ruder, and S.~Warzel.
\newblock Existence of replica-symmetry breaking in quantum glasses.
\newblock {\em Physical Review Letters}, 127(20):207204, 2021.

\bibitem{Leschke:2021aa}
H.~Leschke, S.~Rothlauf, R.~Ruder, and W.~Spitzer.
\newblock The free energy of a quantum {S}herrington--{K}irkpatrick spin-glass
  model for weak disorder.
\newblock {\em Journal of Statistical Physics}, 182(3):55, 2021.

\bibitem{MW20}
C.~Manai and S.~Warzel.
\newblock Phase diagram of the quantum random energy model.
\newblock {\em Journal of Statistical Physics}, 180(1):654--664, 2020.

\bibitem{MW21}
C.~Manai and S.~Warzel.
\newblock The de {A}lmeida--{T}houless line in hierarchical quantum spin
  glasses.
\newblock {\em Journal of Statistical Physics}, 186(1):14, 2021.

\bibitem{MW22}
C.~Manai and S.~Warzel.
\newblock Generalized random energy models in a transversal magnetic field:
  Free energy and phase diagrams.
\newblock {\em Probability and Mathematical Physics}, 3:215---245, 2022.

\bibitem{Manai:2023ys}
C.~Manai and S.~Warzel.
\newblock Spectral analysis of the quantum random energy model.
\newblock {\em Communications in Mathematical Physics}, 402(2):1259--1306,
  2023.

\bibitem{MM09}
M.~Mezard and A.~Montanari.
\newblock {\em Information, physics, and computation}.
\newblock Oxford UP, 2009.

\bibitem{Mezard:1986aa}
M~Mezard, G~Parisi, and M~Virasoro.
\newblock {\em Spin Glass Theory and Beyond}.
\newblock World Scientific Lecture Notes in Physics. World Scientific, 1986.

\bibitem{Morita:2005ab}
S.~Morita, Y.~Ozeki, and H.H. Nishimori.
\newblock Gauge theory for quantum spin glasses.
\newblock {\em Journal of the Physical Society of Japan}, 75(1):014001, 2005.

\bibitem{Mou21}
J.-C. Mourrat.
\newblock {H}amilton--{J}acobi equations for mean-field disordered systems.
\newblock {\em Annales Henri Lebesgue}, 4:453--484, 2021.

\bibitem{Pan13}
D.~Panchenko.
\newblock {\em The {S}herrington-{K}irkpatrick model}.
\newblock Springer New York, NY, 2013.

\bibitem{Pan18}
D.~Panchenko.
\newblock Free energy in the mixed p-spin models with vector spins.
\newblock {\em Ann. Probab.}, 46:865--896, 2018.

\bibitem{PanPotts18}
D.~Panchenko.
\newblock Free energy in the {P}otts spin glass.
\newblock {\em Ann. Probab.}, 46:829--864, 2018.

\bibitem{Parisi:1979aa}
G.~Parisi.
\newblock Infinite number of order parameters for spin-glasses.
\newblock {\em Physical Review Letters}, 43(23):1754--1756, 1979.

\bibitem{Parisi:1980aa}
G.~Parisi.
\newblock A sequence of approximated solutions to the {S}-{K} model for spin
  glasses.
\newblock {\em Journal of Physics A: Mathematical and General}, 13(4):L115,
  1980.

\bibitem{Ruelle87}
D.~Ruelle.
\newblock A mathematical reformulation of {D}errida's {REM} and {GREM}.
\newblock {\em Commun. Math. Phys.}, 108:225--239, 1987.

\bibitem{SK75}
D.~Sherrington and S.~Kirkpatrick.
\newblock Solvable model of a spin-glass.
\newblock {\em Phys. Rev. Lett.}, 35:1792--1796, 1975.

\bibitem{Sommers:1981aa}
H.-J. Sommers.
\newblock Theory of a {H}eisenberg spin glass.
\newblock {\em Journal of Magnetism and Magnetic Materials}, 22(3):267--270,
  1981.

\bibitem{SIC13}
S.~Suzuki, Inoue J.-i., and B.~K. Chakrabarti.
\newblock {\em Quantum Ising Phases and Transitions in Transverse Ising
  Models}.
\newblock Lecture Notes in Physics 862. Springer, 2013.

\bibitem{Tal06}
M.~Talagrand.
\newblock The {P}arisi formula.
\newblock {\em Annals of Mathematics}, 163:221--263, 2006.

\bibitem{Tal11a}
M.~Talagrand.
\newblock {\em Mean Field Models for Spin Glasses I.}
\newblock Springer, 2011.

\bibitem{Tal11b}
M.~Talagrand.
\newblock {\em Mean Field Models for Spin Glasses II.}
\newblock Springer, 2011.

\bibitem{TTC17}
S.~Tanaka, R.~Tamura, and B.~K. Chakrabarti.
\newblock {\em Quantum spin glasses , annealing and computation}.
\newblock Cambridge UP, 2017.

\bibitem{Usadel:1987aa}
K.~D. Usadel and B.~Schmitz.
\newblock Quantum fluctuations in an {I}sing spin glass with transverse field.
\newblock {\em Solid State Communications}, 64(6):975--977, 1987.

\bibitem{Weid66}
J.~Weidmann.
\newblock Integraloperatoren der {S}purklasse.
\newblock {\em Math. Ann.}, 163:340--345, 1966.

\bibitem{Yamamoto:1987aa}
T.~Yamamoto and H.~Ishii.
\newblock A perturbation expansion for the {S}herrington-{K}irkpatrick model
  with a transverse field.
\newblock {\em Journal of Physics C: Solid State Physics}, 20(35):6053, 1987.

\bibitem{Young:2017aa}
A.~P. Young.
\newblock Stability of the quantum {S}herrington-{K}irkpatrick spin glass
  model.
\newblock {\em Physical Review E}, 96(3):032112--, 2017.

\end{thebibliography}
\bibliographystyle{plain}

	\bigskip
	\bigskip
	\begin{minipage}{0.5\linewidth}
		\noindent Chokri Manai and Simone Warzel\\
		Departments of Mathematics \& Physics, and MCQST \\
		Technische Universit\"{a}t M\"{u}nchen
	\end{minipage}

\end{document}